\newtheorem{theorem}{Theorem}
\newtheorem{lemma}[theorem]{Lemma}
\newtheorem{proposition}[theorem]{Proposition}
\newtheorem{corollary}[theorem]{Corollary}
\newtheorem{remark}[theorem]{Remark}
\newtheorem{problem}{Problem}
\newtheorem{fact}[theorem]{Fact}
\newcommand{\lD}{\ensuremath{\lg\prod_i\Delta_i}\xspace}
\newcommand{\wid}{\mathtt{wid}\xspace}
\def\wt#1{\widetilde #1}
\def\normi#1{\norm{#1}_{\infty}}
\def\normo#1{\norm{#1}_{1}}
\def\Normo#1{\Norm{#1}_{1}}
\def\diag{\mathsf{diag}}
\def\smallddots{\mathinner{\raise7pt\hbox{.}\raise4pt\hbox{.}\raise1pt\hbox{.}}}
\def\smallsdots{\mathinner{\raise1pt\hbox{.}\raise4pt\hbox{.}\raise7pt\hbox{.}}}
\begin{document}


\title{Nearly Optimal Computations with Structured Matrices}

\numberofauthors{2}
\author{
  \alignauthor Victor Y.~Pan \\
  \affaddr{Depts. of Mathematics and Computer Science \\
    Lehman College and Graduate Center \\
    of the City University of New York \\
    Bronx, NY 10468 USA} 
%
%
%
  \email{victor.pan@lehman.cuny.edu}
  \url{http://comet.lehman.cuny.edu/vpan/} 
  \alignauthor Elias P.~Tsigaridas \\
  \affaddr{ 
    PolSys~Project\\
    INRIA, Paris-Rocquencourt~Center\\
    UPMC, Univ Paris 06, LIP6 \\
    CNRS, UMR 7606, LIP6 \\
    Paris, France
    }
  \email{elias.tsigaridas@inria.fr}
}

\maketitle

\begin{abstract}
  We estimate the Boolean complexity of multiplication of structured
  matrices by a vector and the solution of nonsingular linear systems
  of equations with these matrices. We study four basic most popular
  classes, that is, Toeplitz, Hankel, Cauchy and Van\-der\-monde
  matrices, for which the cited computational problems
are equivalent to the task of polynomial multiplication and division
and polynomial and rational multipoint evaluation and
interpolation. The Boolean cost estimates for the latter problems have
been obtained by Kirrinnis in \cite{kirrinnis-joc-1998}, except for
rational interpolation, which we supply now. All known Boolean cost
estimates for these problems rely on using Kronecker product. This
implies the $d$-fold precision increase for the $d$-th degree output,
but we avoid such an increase by relying on distinct techniques based
on employing FFT. Furthermore we simplify the analysis and make it
more transparent by combining the representation of our tasks and
algorithms in terms of both structured matrices and polynomials and
rational functions.  This also enables further extensions of our
estimates to cover Trummer's important problem and computations with
the popular classes of structured matrices that generalize the four
cited basic matrix classes.
  \end{abstract} 





\section{Introduction}
{
\scriptsize
\begin{table*}[ht]
\caption{Four classes of structured matrices}
\label{t1}.
\begin{center}
\renewcommand{\arraystretch}{1.2}
\begin{tabular}{c|c}
Toeplitz matrices $T=\left(t_{i-j}\right)_{i,j=0}^{n-1}$&Hankel matrices $H=\left(h_{i+j}\right)_{i,j=0}^{n-1}$ \\
$\begin{pmatrix}t_0&t_{-1}&\cdots&t_{1-n}\\ t_1&t_0&\smallddots&\vdots\\ \vdots&\smallddots&\smallddots&t_{-1}\\ t_{n-1}&\cdots&t_1&t_0\end{pmatrix}$&$\begin{pmatrix}h_0&h_1&\cdots&h_{n-1}\\ h_1&h_2&\smallsdots&h_n\\ \vdots&\smallsdots&\smallsdots&\vdots\\ h_{n-1}&h_n&\cdots&h_{2n-2}\end{pmatrix}$ 
\\ \hline 


Van\-der\-monde matrices 
$V=V_{\bf s}=\left(s_i^{j}\right)_{i,j=0}^{n-1}$&Cauchy matrices $C=C_{\bf s,t}=\left(\frac{1}{s_i-t_j}\right)_{i,j=0}^{n-1}$ \\
$\begin{pmatrix}1&s_1&\cdots&s_1^{n-1}\\ 1&s_2&\cdots&s_2^{n-1}\\ \vdots&\vdots&&\vdots\\ 1&s_{n}&\cdots&s_{n}^{n-1}\end{pmatrix}$&$\begin{pmatrix}\frac{1}{s_1-t_1}&\cdots&\frac{1}{s_1-t_{n}}\\ \frac{1}{s_2-t_1}&\cdots&\frac{1}{s_2-t_{n}}\\ \vdots&&\vdots\\ \frac{1}{s_{n}-t_1}&\cdots&\frac{1}{s_{n}-t_{n}}\end{pmatrix}$
\end{tabular}
\end{center}
\end{table*}
}

\vspace{10pt}

Table \ref{t1} displays four classes of most popular structured matrices,
which
are omnipresent in modern computations for Sciences, Engineering, and Signal
and Image Processing.  
These basic classes have been naturally extended to the four larger classes of matrices, 
$\mathcal T$, $\mathcal H$, $\mathcal V$, and $\mathcal C$,
that have structures of Toep\-litz, Hankel,
Van\-der\-monde and Cauchy types,
 respectively. They include many other important classes of structured matrices such as the products and inverses of the matrices of these four basic classes, as well as the companion, Sylvester, subresultant, Loewner, and Pick matrices. All these 
matrices 
can be readily expressed via
their displacements of small ranks \cite[Chapter 4]{Pan01}, which implies
their further attractive properties:

\begin{itemize}[leftmargin=5pt,rightmargin=3pt]
\item
Compressed representation of matrices as well as their products and inverses through 
a small number of parameters.
\item
Multiplication by a vector in nearly linear arithmetic time.
\item
Solution of nonsingular
linear systems of equations with these matrices
in quadratic or nearly linear arithmetic time.
\vspace{-2mm}
\end{itemize}
These properties enable efficient
computations,   
closely linked and frequently equivalent to fundamental computations with polynomials and rational  polynomial functions, in particular to the multiplication, division,
multipoint evaluation and interpolation \cite{pt-arxiv-refine-j-14}.
Low arithmetic cost is surely attractive,
 but substantial growth of the computational precision quite  frequently affects the known algorithms
having low arithmetic cost 
(see, e.g., \cite{B85}).
So the estimation of the complexity under the Boolean model  is  more  informative, although technically  more demanding.

To the best of our knowledge,
the first Boolean complexity bounds for multipoint evaluation
are due to Ritzmann~\cite{Ritmann-tcs-1986}.
We also wish to cite the papers \cite{vdH:fastcomp}
and \cite{ks-arxiv-mpeval-2013}, 
although their results have been superceded 
in the advanced work of 1998 by Kirrinnis, 
 \cite{kirrinnis-joc-1998},
apparently still not sufficiently well known.
Namely in
the process of studying 
approximate partial fraction decomposition
he
has
estimated the Boolean complexity of
the multipoint evaluation, interpolation, and the summation of
rational 
functions.  He required the input polynomials to
be normalized, but actually this was
not restrictive at all.  
We 
generalize his estimates.
For simplicity we assume the evaluation at the 
 points of small magnitude, but our 
estimates
can be rather easily 
extended to the case of general input. Kirrinnis' study as well as all previous estimates of 
the Boolean complexity of these  computational problems
rely on multiplying polynomials as integers, by using Kronecker's product, 
aka binary segmentation, as proposed in \cite{FP74}.
This implies the
$d$-fold increase of the computational precision for the $d$-th degree output. 
The results that we present rely on FFT algorithms for multiplying 
univariate polynomials and avoid this precision growth.

We represent our FFT-based estimates and algorithms in terms of
operations with both structured matrices and polynomial and rational
functions. In both representations the computational tasks and the
solution algorithms are equivalent, and so the results of
\cite{kirrinnis-joc-1998} for partial fraction decomposition can be
extended to most although not all of these tasks.  By using both
representations, however, we make our analysis more
transparent. Furthermore in Section~\ref{sec:Cauchy-solve} we extend
Kirrinnis' results to the solution of a Cauchy linear system of
equations (which unlike \cite{kirrinnis-joc-1998} covers rational
interpolation) and in Section~\ref{sec:Trummer} to the solution of
Trummer's celebrated problem \cite{GGS87}, \cite{GR87}, \cite{CGR88},
having important applications to mechanics (e.g., to particle
simulation) and representing the secular equation, which is the basis
for the MPSolve, the most efficient package of subroutines for
polynomial root-finding \cite{BRa}.

Our estimates cover multiplication of the matrices of the four basic
classes of Table \ref{t1} by a vector and solving Van\-der\-monde and
Cauchy linear systems of equations.  (These tasks are equivalent to
the listed tasks of the multiplication, division, multipoint
evaluation and interpolation of polynomials and rational functions.)
Expressing the solution of these problems in terms of matrices has a
major advantage: it can be extended to matrices from the four larger
matrix classes $\mathcal T$, $\mathcal H$, $\mathcal V$, and $\mathcal
C$. Actually the algorithms for multiplication by vector can be
extended quite readily, as we explain in
Section~\ref{sec:Cauchy-solve}. There we also briefly discuss the
solution of linear systems of equations with the matrices of the cited
classes, which can be a natural subject of our further study.

\paragraph*{Notation}
In what follows \OB, resp. \OO, means bit, resp. arithmetic,
complexity and  \sOB, resp. \sO,  means that we are
ignoring logarithmic factors. ``Ops" stands for ``arithmetic operations".
For a polynomial $A = \sum_{i=0}^{d}{a_i \, x^i} \in$ $\ZZ[x]$,
$\dg{A} = d$ denotes its degree and $\bitsize{A} =\tau$ the maximum bitsize of
its coefficients, including a bit for the sign. 
For $a \in \QQ$, $\bitsize{ a} \ge 1$
is the maximum bitsize of the numerator and the denominator.
$\mu(\lambda)$ denotes the bit complexity of multiplying two integers of size
$\lambda$; 
we have  $\mu(\lambda) = \sOB( \lambda )$.
$2^{\Gamma}$ is an upper bound on the magnitude of the roots of
$A$. We write $\Delta_{\alpha}(A)$ or just  $\Delta_{\alpha}$ to denote 
the minimum distance between a root $\alpha$ of a polynomial $A$ and 
any other root. We call this quantity {\em local separation bound}.
We also write $\Delta_i$ instead of $\Delta_{\alpha_i}$.
$\Delta(A)=\min_{\alpha}{\Delta_{\alpha}(A)}$ or just $\Delta$ denotes
the {\em separation bound}, that is the minimum distance between all
the roots of $A$.
The Mahler bound (or measure) of $A$ is
$\Mahler{A} = a_d \prod_{\abs{\alpha} \geq 1}{\abs{\alpha}}$,
where $\alpha$ runs through the complex roots of $A$, 
e.g. \cite{Mign91,Yap2000}.
If 
$A\in \ZZ[x]$ and $\bitsize{A}=\tau$,
then 
$\Mahler{A} \leq \norm{A}_2 \leq \sqrt{d+1} \norm{A}_{\infty} 
= 2^{\tau} \sqrt{d+1} $.
If we evaluate a function $F$ (e.g. $F=A$) at a number $c$ using
interval arithmetic, then we denote the resulting interval by
$[F(c)]$, provided that we fix the evaluation algorithm and the
precision of computing.
We write $D(c, r) = \{ x \,:\, \abs{x-c} \leq r \}$.
$\wt{f} \in \CC[x]$ denotes a $\lambda$-approximation
to a polynomial $f \in \CC[x]$,
such that  $\normi{f - \wt{f}} \leq 2^{-\lambda}$.
In particular $\wt{a} \in \CC$ denotes a $\lambda$-approximation
to a constant $a \in \CC$ such that
$\abs{a - \wt{a}} \leq 2^{-\lambda}$.
$\lg$ stands for $\log$.

\section{Preliminaries}

\subsection{Univariate Separation Bounds}

The following proposition provides upper and aggregate bounds for the
roots of a univariate polynomial.  
There are various version of these bounds. We use the one presented 
in \cite{st-issac-2011},
to which we also refer the reader for further details and a discussion 
of the literature. For
multivariate separation bounds we refer the reader
to \cite{emt-issac-2010}.

\begin{proposition}
  \label{prop:sep-bounds}
  Let $f = \sum_{i=0}^{d}{a_i x^{i}} \in \CC[x]$ be a square-free univariate polynomial of 
  a degree $d$
  such that $a_d a_0 \not= 0$.
  Let $\Omega$ be any set of $k$ pairs of indices $(i, j)$ such that
  $1 \leq i < j \leq d$,
  let the complex roots of $A$ be 
  $0 < | \gamma_1 | \leq |\gamma_2| \leq \dots \leq |\gamma_d|$,
  and let $\disc(f)$ be the discriminant of $f$.
  Then
  {\footnotesize
  \begin{equation}
    \frac{\abs{a_0}}{\norm{f}_2} \leq |\gamma_i| \leq  \frac{\norm{f}_{2}}{\abs{a_d}}
    \label{eq:u-upper} \enspace, 
  \end{equation}
  \begin{equation}
    \begin{aligned}
      \prod_{(i, j) \in \Omega}{|\gamma_i - \gamma_j|} & \geq 
      2^{k - d -\frac{d(d-1)}{2}} \,\abs{a_0}^{k} \, \mathcal{M}(f)^{1-d-k} \,\sqrt{| \disc(f) |} \\
      & \geq 2^{k - d -\frac{d(d-1)}{2}} \,\abs{a_0}^{k} \, \norm{f}_2^{1-d-k} \,\sqrt{| \disc(f) |} .
      \\
    \end{aligned}
    \label{eq:u-dmm}  \enspace                                                                                                                  
  \end{equation}
  }
  If $f \in \ZZ[x]$ and the maximum coefficient bitsize is $\tau$ then
  \begin{equation}
    2^{-\tau - 1} \leq \abs{\gamma} \leq 2^{\tau + 1} \enspace,
      \label{eq:upper-z}
  \end{equation}
  \begin{equation}
    - \lg \prod_{(i, j) \in \Omega}{|\gamma_i - \gamma_j|}  \leq  
      3d^2 + 3d \tau + 4d\lg{d}.
      \label{eq:dmm-z}
      \enspace
  \end{equation}

\end{proposition}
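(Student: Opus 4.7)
The plan is to handle the two general-coefficient bounds by a Cauchy-style modulus argument and the discriminant factorisation, respectively, and then specialise both to the integer case by substituting the lattice lower bounds $|a_0|,|a_d|,|\disc(f)|\ge 1$. Throughout I would work with the Mahler identity $\mathcal M(f)=|a_d|\prod_i\max(1,|\gamma_i|)$ and Vieta's relation $|a_0|/|a_d|=\prod_i|\gamma_i|$, whose quotient gives the useful formula $|a_0|/\mathcal M(f)=\prod_i\min(1,|\gamma_i|)\le 1$.

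For \eqref{eq:u-upper} I would start from $a_d\gamma_i^d=-\sum_{j<d}a_j\gamma_i^j$, apply Cauchy--Schwarz on the right, and divide by $|\gamma_i|^d$ (the case $|\gamma_i|\le 1$ is automatic since $\|f\|_2\ge|a_d|$); this yields $|\gamma_i|\le\|f\|_2/|a_d|$. The companion lower bound follows by the same argument applied to the reciprocal polynomial $x^d f(1/x)$, whose roots are the $1/\gamma_i$ and whose coefficient vector has the same $\ell^2$-norm as that of $f$.

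For \eqref{eq:u-dmm} the starting point is the discriminant identity $\sqrt{|\disc(f)|}=|a_d|^{d-1}\prod_{1\le i<j\le d}|\gamma_i-\gamma_j|$. Splitting the full product into $P_\Omega\cdot P_{\Omega^c}$, the target lower bound on $P_\Omega$ is equivalent to an upper bound on $P_{\Omega^c}$. For each of the $\binom{d}{2}-k$ pairs in the complement I would apply the elementary estimate $|\gamma_i-\gamma_j|\le 2\max(1,|\gamma_i|)\max(1,|\gamma_j|)$; collecting the resulting $\max(1,|\gamma_i|)$ factors (each root index appears in exactly $d-1$ pairs overall) and invoking the Mahler identity gives $P_{\Omega^c}\le 2^{\binom{d}{2}-k}(\mathcal M(f)/|a_d|)^{d-1}$. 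This already produces a lower bound on $P_\Omega$ of the stated shape up to the factor $2^{-d}|a_0|^k\mathcal M(f)^{-k}$, which is harmless (it is at most $1$) and is absorbed into the inequality via the formula recorded in the opening paragraph.

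The integer specialisations are then just substitutions. For \eqref{eq:upper-z} use $|a_d|\ge 1$ combined with the sharper Cauchy form $|\gamma|\le 1+\max_{i<d}|a_i/a_d|\le 2^{\tau+1}$ and its reciprocal analogue. For \eqref{eq:dmm-z} plug $|a_0|\ge 1$, $|\disc(f)|\ge 1$ (the latter because $\disc(f)\in\ZZ\setminus\{0\}$ for square-free $f$), and $\mathcal M(f)\le\sqrt{d+1}\cdot 2^\tau$ into \eqref{eq:u-dmm}, take $-\lg$, and bound $k\le\binom{d}{2}$; routine collection of terms shows that $3d^2+3d\tau+4d\lg d$ dominates. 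The main technical obstacle is the bookkeeping in \eqref{eq:u-dmm}: because the exponent of $\max(1,|\gamma_i|)$ appearing in the upper bound on $P_{\Omega^c}$ depends on the number $n_i$ of $\Omega$-pairs touching index $i$, one has to track $\sum_i n_i=2k$ carefully before folding the resulting product into a closed form in $\mathcal M(f)$ and $|a_0|$.
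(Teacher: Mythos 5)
You should note first that the paper contains no proof of this proposition: it is imported verbatim from \cite{st-issac-2011} (see also \cite{emt-issac-2010}), so your argument stands or falls on its own. Most of it stands. For \eqref{eq:u-upper} your Cauchy--Schwarz plan works, but only in its sharper execution: you must use $\sum_{j<d}|a_j|^2=\|f\|_2^2-|a_d|^2$ together with $\sum_{j<d}|\gamma_i|^{2(j-d)}\le(|\gamma_i|^2-1)^{-1}$ for $|\gamma_i|>1$; estimating the coefficient block crudely by $\|f\|_2$ only gives $|\gamma_i|\le\sqrt{1+\|f\|_2^2/|a_d|^2}$ (alternatively, Landau's inequality $|a_d|\max(1,|\gamma_i|)\le\mathcal M(f)\le\|f\|_2$ yields the bound in one line). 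Your discriminant-splitting proof of \eqref{eq:u-dmm} is correct and is a genuinely more elementary route than the Hadamard-type arguments used in the cited literature; writing $P_\Omega=\prod_{(i,j)\in\Omega}|\gamma_i-\gamma_j|$, it in fact gives the stronger intermediate inequality $P_\Omega\ge 2^{k-d(d-1)/2}\,\mathcal M(f)^{1-d}\sqrt{|\disc(f)|}$, which dominates the stated bound because $2^{-d}\bigl(|a_0|/\mathcal M(f)\bigr)^k\le 1$, and \eqref{eq:upper-z} via the Cauchy bound and the reciprocal polynomial is fine.

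The genuine gap is in your derivation of \eqref{eq:dmm-z}. Substituting $|a_0|\ge1$, $|\disc(f)|\ge1$ and $\mathcal M(f)\le\sqrt{d+1}\,2^\tau$ into \eqref{eq:u-dmm} and taking $-\lg$ leaves the term $(d+k-1)\lg\mathcal M(f)$, and since the proposition allows any $\Omega$ with $k$ as large as $d(d-1)/2$, this term is of order $\tfrac{d^2}{2}\tau$, which is not dominated by $3d^2+3d\tau+4d\lg d$ (take, say, $d=10$, $\tau=100$, $k=45$). The loss comes precisely from the factor $\bigl(|a_0|/\mathcal M(f)\bigr)^k$ that you declared harmless when weakening to the stated form of \eqref{eq:u-dmm}: it is harmless for proving \eqref{eq:u-dmm}, but for an integer polynomial with $|a_0|=1$ and $\mathcal M(f)$ of order $2^\tau$ it costs about $k\tau$ bits, so "routine collection of terms" after the substitution does not establish \eqref{eq:dmm-z} for large $k$. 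The repair lies inside your own argument: derive \eqref{eq:dmm-z} from the intermediate inequality above, which gives $-\lg P_\Omega\le \tfrac{d(d-1)}{2}-k+(d-1)\lg\mathcal M(f)-\tfrac12\lg|\disc(f)|\le\tfrac{d(d-1)}{2}+(d-1)\tau+\tfrac{d-1}{2}\lg(d+1)$, comfortably below $3d^2+3d\tau+4d\lg d$ for every admissible $\Omega$ (and, incidentally, the final bookkeeping of the multiplicities $n_i$ with $\sum_i n_i=2k$ that you worry about is unnecessary, since dropping the factors $\max(1,|\gamma_i|)^{-n_i}\le1$ already suffices).
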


The following lemma from \cite{st-arxiv-rslog-14} provides a lower
bound on the evaluation of a polynomial that depends on the closest
root and on aggregate separation bounds.
\begin{lemma}
  \label{lem:poly-eval-lower-bound} 
  Suppose  $L \in \CC$,   
  $f$ is a square-free polynomial, and 
  its root $\gamma_1$ is closest to $L$. 
  Then
  \begin{displaymath}
    \abs{f(L)} \geq \abs{a_d}^7 \,\abs{L - \gamma_1}^{6} \, \Mahler{f}^{-6} 2^{\lD-6}
    \enspace.
  \end{displaymath}
\end{lemma}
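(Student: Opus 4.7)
My starting point is the factorization $f(L)=a_d\prod_{i=1}^{d}(L-\gamma_i)$, which yields
$|f(L)|=|a_d|\,|L-\gamma_1|\,\prod_{i=2}^{d}|L-\gamma_i|$.
Because $\gamma_1$ is the closest root to $L$, two pointwise lower bounds apply to every factor with $i\ne 1$: the direct inequality $|L-\gamma_i|\ge |L-\gamma_1|$, and, via the triangle inequality $|\gamma_1-\gamma_i|\le |L-\gamma_1|+|L-\gamma_i|\le 2|L-\gamma_i|$, the inequality $|L-\gamma_i|\ge |\gamma_1-\gamma_i|/2$. These two geometric facts are the only input; the rest is algebraic bookkeeping.

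Next I would introduce the local separations: by definition $\Delta_i$ is the minimum of $|\gamma_i-\gamma_j|$ over $j\ne i$, so $|\gamma_1-\gamma_i|\ge \Delta_i$ for every $i\ne 1$, and hence $\prod_{i\ne 1}|\gamma_1-\gamma_i|\ge \prod_{i\ne 1}\Delta_i=\prod_i\Delta_i/\Delta_1$. The missing factor $\Delta_1$ is controlled by the Mahler-measure upper bound $|\gamma|\le \mathcal M(f)/|a_d|$ valid for every root, which gives $\Delta_1\le 2\mathcal M(f)/|a_d|$. Combined with the first paragraph, this already yields a first-order bound of the shape $|f(L)|\gtrsim 2^{-d}|a_d|^{2}\,|L-\gamma_1|\,\mathcal M(f)^{-1}\prod_i\Delta_i$; the statement strengthens the $d$-dependent factor $2^{-d}$ to the absolute constant $2^{-6}$ at the price of raising the exponents on $|L-\gamma_1|$, $|a_d|$, and $\mathcal M(f)^{-1}$ to $6$, $7$, and $6$ respectively.

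To handle this tradeoff I would apply the triangle inequality on only a bounded, six-element subset of indices, and use the cheaper bound $|L-\gamma_i|\ge |L-\gamma_1|$ on all remaining factors. This localizes the $2^{-1}$ losses to a constant number of positions and, since each of the six selected factors carries a $|L-\gamma_1|$ via the closest-root inequality, directly produces the factor $|L-\gamma_1|^{6}$. To compensate for the fact that this restriction could in principle drop $\prod_i\Delta_i$ witnesses, I would invoke the Davenport-Mahler aggregate bound of Proposition~\ref{prop:sep-bounds}, inequality~\eqref{eq:u-dmm}, applied to a matching six-element pair set $\Omega$: the choice $k=6$ there will cap the Mahler-measure exponent $1-d-k$ (once absorbed together with the root-magnitude bound $|\gamma|\le \mathcal M(f)/|a_d|$) into the desired $\mathcal M(f)^{-6}$, and together with the extra $a_d$ from the leading coefficient of the factorization will account for the $|a_d|^{7}$ normalization.

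I expect the main obstacle to be precisely this six-fold matching: one must choose $\Omega$ so that its interaction with the six distinguished factors $|L-\gamma_i|$ still produces a lower bound by $\prod_i\Delta_i$ (i.e.\ so that every $\Delta_i$ is witnessed either by a pair in $\Omega$ or by the pair $(1,i)$ already accounted for), and simultaneously so that the $1-d-k$ exponent in~\eqref{eq:u-dmm} collapses cleanly to $-6$ after using root-magnitude bounds to eliminate all remaining $d$-dependent powers of $\mathcal M(f)$ and $|a_d|$. Once that combinatorial/normalization step is carried out, the asserted inequality follows by multiplying the pieces.
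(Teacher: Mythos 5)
Your opening reductions are fine: the factorization, the two pointwise bounds $|L-\gamma_i|\ge|L-\gamma_1|$ and $|L-\gamma_i|\ge|\gamma_1-\gamma_i|/2\ge\Delta_i/2$, and $\Delta_1\le 2\mathcal M(f)/|a_d|$ do give the weaker estimate $|f(L)|\ge 2^{-d}|a_d|^{2}|L-\gamma_1|\,\mathcal M(f)^{-1}\prod_i\Delta_i$. But the step you propose for converting $2^{-d}$ into $2^{-6}$ is exactly where the argument breaks, and you leave it unresolved. If you halve on a six-element set $S$ and use $|L-\gamma_i|\ge|L-\gamma_1|$ on the remaining $d-7$ indices, you obtain $|a_d|\,2^{-6}\,|L-\gamma_1|^{d-6}\prod_{i\in S}|\gamma_1-\gamma_i|$: the exponent on $|L-\gamma_1|$ is $d-6$, not $6$, and the separations $\Delta_i$ for $i\notin S$ have disappeared with no way to recover them, since there is no lower bound on $|L-\gamma_1|$ in terms of the $\Delta_i$; for small $|L-\gamma_1|$ and $d>12$ that quantity is strictly smaller than the claimed right-hand side, so it cannot imply it. Nor can inequality (\ref{eq:u-dmm}) with $k=6$ repair this: it brings in $2^{6-d-d(d-1)/2}$, $\mathcal M(f)^{-5-d}$ and $\sqrt{|\disc(f)|}$ -- $d$-dependent exponents and a discriminant factor absent from the target, which root-magnitude bounds do not cancel -- and in any case six pairs cannot witness all $d$ quantities $\Delta_i$. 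Note also that the paper itself gives no proof to compare with: the lemma is imported from \cite{st-arxiv-rslog-14}.

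The missing ingredient is geometric, not a choice of $\Omega$. Set $S=\{i:\ |L-\gamma_i|<\Delta_i\}$. Any two roots indexed by $S$ subtend an angle greater than $60^{\circ}$ at $L$ (otherwise $|\gamma_i-\gamma_j|\le\max(|L-\gamma_i|,|L-\gamma_j|)<\max(\Delta_i,\Delta_j)\le|\gamma_i-\gamma_j|$, a contradiction), so $|S|\le 6$: only a bounded number of roots can lie closer to $L$ than their own local separation. For every $i\notin S$ one has $|L-\gamma_i|\ge\Delta_i$ with no constant loss at all, while for the at most six indices in $S$ one uses $|L-\gamma_i|\ge|L-\gamma_1|$ and pays for the missing $\Delta_i$ via $\Delta_i\le 2\mathcal M(f)/|a_d|$; multiplying out gives $|a_d|\cdot|L-\gamma_1|^{|S|}\bigl(|a_d|/(2\mathcal M(f))\bigr)^{|S|}\prod_i\Delta_i$, which is precisely where $|a_d|^{7}$, $|L-\gamma_1|^{6}$, $\mathcal M(f)^{-6}$ and $2^{-6}$ come from (the comparison when $|S|<6$ uses $|L-\gamma_1|\le 2\mathcal M(f)/|a_d|$, the regime in which the lemma is applied; as literally stated with no restriction on $L$ the inequality can even fail for small $d$ and $L$ far from all roots, e.g.\ $f=x^{2}-1$, $L=10$). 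Without this packing argument -- or some substitute yielding $|L-\gamma_i|\ge\Delta_i$ for all but $O(1)$ indices -- your sketch does not reach the stated bound.
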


\subsection{Complex Interval arithmetic}

We also need the following bounds for the width of complex intervals when we perform
computations with interval arithmetic.
We will use them to bound the error
when we perform basic computation with complex (floating point) numbers.
We refer the reader to \cite{RoLa-cacm-71} for further details.
\begin{proposition}[Complex intervals]
  \label{prop:complex-intvl}
  Given complex intervals $I$ and $J$,
  where $\abs{I}$, resp. $\abs{J}$, denotes
  the modulus of any complex number in the complex interval $I$, resp. $J$.
  If $2^{-\nu} \leq \abs{I} \leq 2^{\tau}$ and $\abs{J} \leq 2^{\sigma}$,
  then 
  $\wid(I + J) \leq 2\,\wid(I) + 2\,\wid(J)$,
  $\wid(I \, J) \leq 2^{\tau+1} \, \wid(J) + 2^{\sigma+1}\, \wid(I)$,
  and 
  $\wid(1/I) \leq 2^{4\nu +2\tau + 3} \,\wid(I)$.
\end{proposition}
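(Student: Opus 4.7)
The plan is to treat a complex interval as a pair consisting of its real and imaginary parts (equivalently, as the smallest enclosing rectangle in $\CC$), and to derive the three claimed widening bounds directly from standard interval arithmetic on the real line as laid out in \cite{RoLa-cacm-71}. Throughout I write $z = x + i y \in I$, $w = u + i v \in J$, so that $|x|,|y| \leq |I|$ and $|u|,|v| \leq |J|$, and I use the hypotheses $|I| \leq 2^{\tau}$, $|J| \leq 2^{\sigma}$, $|I| \geq 2^{-\nu}$.

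First I would handle addition. Since $z + w = (x+u) + i(y+v)$, real and imaginary parts add coordinate-wise, so each projection of $I+J$ has width at most $\wid(I) + \wid(J)$. The factor of two in the claimed bound absorbs the aggregation of the two coordinates into a single $\wid$, and accounts for outward rounding when returning the enclosing rectangle.

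Next I would handle the product. Expanding $z w = (xu - yv) + i(xv + yu)$, I apply the elementary identity $x_1 u_1 - x_2 u_2 = x_1(u_1-u_2) + u_2(x_1-x_2)$ and its analogues to each of the four bilinear terms. Each term is bounded in absolute value by $|x|\cdot\wid(J) + |u|\cdot\wid(I) \leq 2^{\tau}\wid(J) + 2^{\sigma}\wid(I)$, and combining the two contributions to $\Re(zw)$ (and separately to $\Im(zw)$) gives the stated $2^{\tau+1}\wid(J) + 2^{\sigma+1}\wid(I)$.

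Finally, for inversion I would write $1/z = \bar z/(x^2+y^2)$ and chain the previous bounds. Squaring uses $|x_1^2 - x_2^2| \leq 2\cdot 2^{\tau}\wid(I)$ and likewise for $y^2$, so the real positive quantity $x^2+y^2$ has width at most $2^{\tau+2}\wid(I)$ and lies in $[2^{-2\nu}, 2^{2\tau+1}]$. The elementary identity $|1/a - 1/b| = |a-b|/|ab|$ combined with the lower bound $x^2+y^2 \geq 2^{-2\nu}$ gives width at most $2^{4\nu+\tau+2}\wid(I)$ for $1/(x^2+y^2)$. A last multiplication by $\bar z$ (of modulus $\leq 2^{\tau}$ and width $\wid(I)$) via the product bound just proved yields the exponent $4\nu + 2\tau + 3$ after a short computation in which the cross-term dominates.

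The main obstacle is bookkeeping of the additive constants: each of the four sub-operations (squaring, summing the squares, reciprocating, multiplying by $\bar z$) contributes a factor of two that must be tracked carefully, and the precise value depends on whether $\wid$ is taken to be the diameter of the enclosing disk, the maximum of the two side-lengths of the enclosing rectangle, or their sum. I would commit to the rectangular convention used in \cite{RoLa-cacm-71} and verify that the constants telescope to exactly $4\nu + 2\tau + 3$ rather than something slightly larger.
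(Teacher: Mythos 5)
The paper does not actually prove this proposition: it is imported verbatim from complex (rectangular) interval arithmetic, with a pointer to \cite{RoLa-cacm-71}, so there is no in-paper argument to compare against. Your elementary derivation by coordinates is the right way to verify it, and your treatment of addition and of the product $zw=(xu-yv)+i(xv+yu)$ via $x_1u_1-x_2u_2=x_1(u_1-u_2)+u_2(x_1-x_2)$ is exactly what gives $2^{\tau+1}\wid(J)+2^{\sigma+1}\wid(I)$.

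The one place where your sketch, as written, does not yet land on the stated constant is the inversion. Chaining your own bounds gives
$\wid(1/I)\leq 2^{\tau+1}\cdot 2^{4\nu+\tau+2}\,\wid(I)+2^{2\nu+1}\,\wid(I)=2^{4\nu+2\tau+3}\,\wid(I)+2^{2\nu+1}\,\wid(I)$,
which strictly exceeds $2^{4\nu+2\tau+3}\,\wid(I)$; "the cross-term dominates" is not by itself a proof. The slack you need is available, though: in the last step the factor $1/(x^2+y^2)$ is a \emph{real} interval, so each coordinate of $\bar z/(x^2+y^2)$ is a single real product and the doubling in your general product bound is not incurred; this yields $2^{4\nu+2\tau+2}\,\wid(I)+2^{2\nu}\,\wid(I)$, and since $2^{-\nu}\leq\abs{I}\leq 2^{\tau}$ forces $\nu+\tau\geq 0$, the second term is at most $2^{4\nu+2\tau+2}\,\wid(I)$, giving exactly $2^{4\nu+2\tau+3}\,\wid(I)$. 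Two smaller points to make explicit: (i) the lower bound $x^2+y^2\geq 2^{-2\nu}$ holds pointwise because \emph{every} point of $I$ has modulus at least $2^{-\nu}$, so you should phrase the bounds as bounds on the variation of the exact expressions over the boxes (which is how the proposition is used in the paper) rather than on endpoints of computed intervals, where an interval square straddling zero could spoil the lower bound; and (ii) fix once and for all the convention for $\wid$ of a complex interval (e.g.\ the maximum coordinate width), since your factor-of-two bookkeeping in the addition and product steps silently depends on it.
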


\subsection{Approximate multiplication of two  polynomials}
\label{sec:a-poly-mul}

We need the following two 
lemmas from \cite{pt-arxiv-refine-j-14} on the evaluation of a polynomial at the
powers of a root of unity and on polynomial multiplication. 
A result  similar to the first lemma appeared
in \cite[Section~3]{Schon-fast-mult:82}
where Bluestein's technique from
\cite{b-ieee-970} is applied
(see also \cite[Chapter~4.3.3, Exercise~16]{Knuth-art-2-2nd}).
We use that lemma to provide a bound on the Boolean complexity
of multiplying two univariate polynomials
when their coefficients are known up to a
fixed precision.
An algorithm for this problem appeared in \cite[Theorem~2.2]{Schon-fast-mult:82}
based on employing Kronecker's product, 
but instead we rely on FFT and the estimates of Corollary 4.1 from \cite[Chapter 3]{BiPa}.

\begin{lemma}
  \label{lem:a-fft}
  Suppose $A \in \CC[x]$ of a degree at most $d$
  such that   $\norm{A}_{\infty} \leq 2^{\tau}$. Let $K=2^k\ge d$
for a positive integer $k$.
  Assume that we know the coefficients of $A$ up to the precision
  $-\ell - \tau - \lg{K} - 3$;
  that is the input is assumed to be a polynomial $\wt{A}$ such that 
  $\norm{A -\wt{A}}_{\infty} \leq 2^{-\ell - \tau - \lg{K} - 3}\ge 10$.
  Let $\omega=\exp(\frac{2\pi}{K}\sqrt{-1})$ denote a $K$-th root of unity. 
  Then we can evaluate the polynomial $A$ at $1, \omega, \dots, \omega^{K-1}$
  in $\sOB( K \lg{K} \, \mu( \ell + \tau + \lg{K}))$
  such that 
  $ \max_{0\leq i \leq K-1} \abs{A(\omega^{i}) - \widetilde{A(\omega^{i})}} \leq 2^{-\ell} $.
  Moreover, $\abs{A(\omega^{i})} \leq  K \, \norm{A}_{\infty} \leq 2^{\tau + \lg{K}}$,
  for all $0 \leq i \leq K-1$.
\end{lemma}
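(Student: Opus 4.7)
The plan is to run the standard radix-$2$ FFT on the approximate coefficient vector $\widetilde A$, storing all intermediate complex numbers and the precomputed twiddle factors $\omega^{0},\dots,\omega^{K-1}$ at a working precision $p$ that I will tune to $p = O(\ell+\tau+\lg K)$. The magnitude bound is immediate from $|\omega^{ij}|=1$ and the triangle inequality, since $|A(\omega^{i})| \le (d+1)\,\normi{A}\le K\cdot 2^{\tau} = 2^{\tau+\lg K}$. Hence the only real work is to pick $p$ so that the output error is at most $2^{-\ell}$ and then to count bit operations.

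For the accuracy I would split the output deviation $A(\omega^{i})-\widetilde{A(\omega^{i})}$ into an \emph{input-perturbation} part, coming from applying the idealised DFT to $\widetilde A$ instead of $A$, and a \emph{roundoff} part accumulated across the $\lg K$ butterfly stages of the actual finite-precision FFT. Because every entry of the $K\times K$ DFT matrix has modulus $1$, the first part is controlled by
\[
\max_{i}\,\bigl|F_{K}(\widetilde A - A)_{i}\bigr| \;\le\; K\,\normi{A-\widetilde A} \;\le\; K\cdot 2^{-\ell-\tau-\lg K -3} \;=\; 2^{-\ell-\tau-3},
\]
which is comfortably inside the target $2^{-\ell}$. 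For the second part, I would propagate complex-interval widths through a single butterfly $(a,b)\mapsto(a+\omega^{j}b,\,a-\omega^{j}b)$ using Proposition~\ref{prop:complex-intvl}: at stage $s$ the intermediate values have modulus at most $2^{s+\tau}$, while every twiddle factor has modulus exactly~$1$ and is stored to precision $2^{-p}$. The result is a stage recurrence $e_{s+1}\le c_{1}\,e_{s}+c_{2}\,2^{s+\tau-p}$ with absolute constants $c_{1},c_{2}$; unwinding it over $s=0,\dots,\lg K-1$ gives a cumulative roundoff of order $\lg K \cdot 2^{\tau+\lg K -p}$, which drops below $2^{-\ell-1}$ as soon as $p \ge \ell+\tau+\lg K+\lg\lg K+O(1)$.

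The complexity count is then routine: the radix-$2$ FFT performs $O(K\lg K)$ complex additions and multiplications at precision $p=O(\ell+\tau+\lg K)$. Each complex multiplication costs $\sOB(\mu(p))$ and each addition costs $\sOB(p)$ bit operations, which over all butterflies sums to $\sOB(K\lg K\cdot \mu(\ell+\tau+\lg K))$. The one-shot precomputation of $\omega^{0},\dots,\omega^{K-1}$ at precision $p$ contributes only $\sOB(K\,\mu(p))$ and is absorbed inside the same budget, matching the estimates of Corollary~4.1 of Chapter~3 of Bini--Pan cited in the excerpt.

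The main obstacle is the stage-by-stage error analysis underpinning the recurrence above. A literal application of Proposition~\ref{prop:complex-intvl}, which carries a factor of $2$ on each summand, could let $c_{1}$ climb high enough to produce a super-linear $c_{1}^{\lg K}$ blow-up that would force the working precision to grow faster than linearly in $\lg K$. The saving observation is that every twiddle factor has modulus \emph{exactly}~$1$, so the multiplication step in each butterfly contributes only the additive term $O(2^{s+\tau-p})$ and does not multiplicatively amplify the already-present error $e_{s}$ beyond the unavoidable factor coming from the addition. Getting this tight constant out of Proposition~\ref{prop:complex-intvl} is the one genuinely delicate point of the proof; once it is in hand, the choice $p = O(\ell+\tau+\lg K)$ and the complexity bound follow immediately.
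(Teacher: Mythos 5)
Your proposal is correct and follows essentially the same route as the paper, which does not reprove this lemma but imports it from \cite{pt-arxiv-refine-j-14}, resting on the finite-precision FFT error estimates of Corollary~4.1 in Chapter~3 of \cite{BiPa}: your split into an input-perturbation term ($K\cdot 2^{-\ell-\tau-\lg K-3}=2^{-\ell-\tau-3}$) plus stage-by-stage butterfly roundoff at working precision $O(\ell+\tau+\lg K)$ is exactly the analysis behind those estimates. One small remark: the feared $c_1^{\lg K}$ blow-up from the factors of $2$ in Proposition~\ref{prop:complex-intvl} would only multiply the required precision by a constant, since $\lg\bigl(c_1^{\lg K}\bigr)=\lg c_1\cdot\lg K$ remains linear in $\lg K$, so the bound $\sOB(K\lg K\,\mu(\ell+\tau+\lg K))$ survives even without the sharper modulus-one observation you highlight as the delicate point.
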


\begin{lemma}
  \label{lem:a-poly-mult}
  Let $A, B \in \CC[x]$ of degree at most $d$,
  such that   $\norm{A}_{\infty} \leq 2^{\tau_1}$
 and 
  $\norm{B}_{\infty} \leq 2^{\tau_2}$.
  Let $C$ denote the product $A B$ and let
  $K =2^k\ge 2 d + 1$ for a positive integer $k$.
 Write $\lambda= \ell +  2\tau_1 + 2\tau_2 + 5.1\lg{K} + 4$.
  Assume that we know the coefficients of $A$ and $B$ up to 
the precision $\lambda$,
  that is that the input includes two polynomials $\wt{A}$ and $\wt{B}$ such that 
  $\norm{A -\widetilde A}_{\infty} \leq 2^{-\lambda}$
  and $\norm{B -\widetilde B}_{\infty} \leq 2^{-\lambda}$.
  Then we can compute in 
  $\OB( d \lg d \,  \mu( \ell + \tau_1 + \tau_2 + \lg d))$
  a polynomial  $\widetilde C$ such that
  $ \norm{C - \wt{C}}_{\infty}\leq 2^{-\ell}$. 
  Moreover, $ \norm{C}_{\infty} \leq 2^{\tau_1 + \tau_2 + 2\lg{K}}$ for all $i$. 
\end{lemma}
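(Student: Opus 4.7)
The plan is to realize the multiplication $C = AB$ by the standard evaluation--pointwise product--interpolation scheme, using Lemma~\ref{lem:a-fft} for the two DFTs and Proposition~\ref{prop:complex-intvl} to propagate errors through the pointwise step, then back--solve for the working precision $\lambda$.

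First, I fix a working precision $\ell'$ (to be chosen) and apply Lemma~\ref{lem:a-fft} to $\widetilde A$ and $\widetilde B$ at the $K$-th roots of unity $1,\omega,\dots,\omega^{K-1}$. This requires that the input coefficients are known to precision at least $\ell'+\tau_j+\lg K + 3$ ($j=1,2$), and produces approximations $\widetilde{A(\omega^i)}$, $\widetilde{B(\omega^i)}$ with error at most $2^{-\ell'}$; it also gives the bounds $|A(\omega^i)|\le 2^{\tau_1+\lg K}$ and $|B(\omega^i)|\le 2^{\tau_2+\lg K}$. The cost of the two transforms is $\widetilde{\mathcal{O}}_B(K\lg K\,\mu(\ell'+\tau_1+\tau_2+\lg K))$, which fits the claimed bound once $K = \Theta(d)$ and $\ell' = \Theta(\ell+\tau_1+\tau_2+\lg K)$.

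Next, I form the pointwise products $\widetilde{A(\omega^i)}\cdot\widetilde{B(\omega^i)}$ at the working precision. By the multiplication rule in Proposition~\ref{prop:complex-intvl}, each such product has error at most $2^{\tau_1+\lg K+1}\cdot 2^{-\ell'} + 2^{\tau_2+\lg K+1}\cdot 2^{-\ell'} \le 2^{-\ell'+\tau_1+\tau_2+\lg K+2}$, and its modulus is bounded by $2^{\tau_1+\tau_2+2\lg K}$; I also round the products to some fixed precision without changing the order of these estimates. Then I apply an inverse DFT (again a single FFT of length $K$, followed by division by $K$) to these perturbed values to obtain $\widetilde C$. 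The inverse DFT is essentially a second application of Lemma~\ref{lem:a-fft} to a polynomial of $\infty$-norm $\le 2^{\tau_1+\tau_2+2\lg K}$, so its execution fits the same asymptotic cost bound. Because the inverse DFT on $K$ points is a linear map whose entries have modulus $1/K$, it expands errors by at most a factor of $K$ in the $\infty$-norm, giving a coefficient-wise error in $\widetilde C$ bounded by $2^{-\ell'+\tau_1+\tau_2+2\lg K+2}$ (together with the transform's own truncation error of the same order).

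Finally, requiring $-\ell' + 2\tau_1 + 2\tau_2 + c\lg K + O(1) \le -\ell$ for an absolute constant $c$ (the constant $5.1$ absorbs the $\lg K$ contributions coming from the three stages plus the bound $\lg K$ arising in the FFT precision of Lemma~\ref{lem:a-fft}) yields $\ell' = \ell + 2\tau_1 + 2\tau_2 + 5.1\lg K + 4$, matching the stated $\lambda$. Substituting this $\ell'$ into the complexity estimate of Lemma~\ref{lem:a-fft} and using $K = \Theta(d)$ gives $\mathcal{O}_B(d\lg d\,\mu(\ell + \tau_1 + \tau_2 + \lg d))$, and the coefficient bound $\|C\|_\infty \le 2^{\tau_1+\tau_2+2\lg K}$ follows from $\|C\|_\infty \le (d+1)\|A\|_\infty\|B\|_\infty$. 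The main obstacle, and the only thing worth a careful count, is the bookkeeping in the previous paragraph: one must be sure that each of the three error contributions (input error amplified through the first FFT, the pointwise multiplication, and the inverse FFT) stays under the $2^{-\ell}$ target, and that the extra logarithmic slack is absorbed by the $5.1\lg K$ term rather than by a larger constant in front of $\lg K$.
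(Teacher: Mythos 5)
Your proposal is correct and follows exactly the intended route: the paper does not prove this lemma itself but imports it from \cite{pt-arxiv-refine-j-14}, whose announced method is precisely the FFT-based evaluation--pointwise product--interpolation scheme built on Lemma~\ref{lem:a-fft} with error propagation as in Proposition~\ref{prop:complex-intvl}. Your bookkeeping is slightly conservative (the inverse DFT with its $1/K$ normalization does not actually amplify errors by a factor of $K$, and your final inequality folds the input-to-FFT precision loss and the target error into one step), but the slack in the stated $\lambda=\ell+2\tau_1+2\tau_2+5.1\lg K+4$ absorbs these overestimates, so the argument closes.
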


\begin{remark}
  \label{rem:a-poly-mult}
  In the sequel, for simplicity 
we occasionally replace the value
  $\lambda = \ell + 2\tau_1 + 2\tau_2 + 5.1\lg(2d + 1) + 4$
by its simple upper bound
  $\ell + 2\tau_1 + 2\tau_2 + 6\lg{d} + 15$ .
\end{remark}

\section{Approximate FFT-based polynomial  division}
\label{sec:a-poly-div}

In this section we present an efficient algorithm and its complexity analysis 
for dividing univariate polynomials
approximately.  
This result is the main ingredient
of the fast algorithms for multipoint evaluation and interpolation.
The evaluation is involved into our record fast real root-refinement,
 but all these results  are also
interesting on 
their own right because, unlike the previous
papers such as \cite{Schon-fast-mult:82}, \cite{Sch82} and \cite{kirrinnis-joc-1998}, we keep 
the Boolean cost bounds of these computations 
at the record level
by employing FFT rather than the Kronecker product
and thus decreasing the precision of computing dramatically.



Assume two polynomials $s(x) = \sum_{i=0}^{m}s_i x^{i}$ and
$t(x)=\sum_{i=0}^{n}t_i x^i$ such that $s_mt_n \not= 0$, $m\ge n$, and
seek the quotient $q(x) = \sum_{i=0}^{m-n}{q_i x^{i}}$ and the
remainder $r(x) = \sum_{i=0}^{n-1}{r_i x^{i}}$ of their division such
that $s(x) = t(x) \, q(x) + r(x)$ and $\deg (r)< \deg (t)$.  Further
assume that $t_{n}=1$. This is no loss of generality because we can
divide the polynomial $t$ by its nonzero leading coefficient.  We
narrow our task to computing the quotient $q(x)$ because as soon as
the quotient is available, we can compute the remainder $r(x) =
s(x)-t(x) \, q(x)$ at the dominated cost by multiplying $t(x)$ by
$q(x)$ and subtracting the result from $s(x)$.

The complexity analysis that we present relies on root bounds of
$t(x)$, contrary to \cite{pt-arxiv-refine-j-14} where it relies on
bounds on the infinity norm of $t(x)$.  To keep the presentation
self-contained we copy from \cite{pt-arxiv-refine-j-14} 
the matrix representation of the algorithm, which occupies 
the next two pages, up to to Lemma~\ref{lem:quo-rem-bd}.

We begin with an algorithm for the exact evaluation of the quotient. 
Represent division with a remainder by the
vector equation
{\scriptsize
\begin{displaymath}
  \left[
    \begin{array}{llllllllllll}
      1      &   \\
      t_{n-1}   & 1 \\
      \vdots   &  \vdots \\
      t_1 & \\
      t_0 & t_1 &  \cdots  &1\\  
      & t_0 & t_1 &  \vdots \\
      & & & t_1\\
      & & & t_0
    \end{array}
  \right ]
  \left [
    \begin{array}{l}
      q_{m-n} \\
      q_{m-n-1} \\
      \vdots  \\
      q_1 \\
      q_0
    \end{array}
  \right]
  +
  \left [
    \begin{array}{l}
      \\
      \\
      \\
      \\ 
      r_{n-1}\\
      r_{n-2}\\
      \vdots  \\
      r_0\\
    \end{array}
  \right]
  =
  \left [
    \begin{array}{l}
      s_{m} \\
      s_{m-1} \\
        \vdots  \\
      s_n \\ 
      s_{n-1} \\
     \vdots  \\
      \\
      s_0\\
    \end{array}
  \right]
  \enspace .
\end{displaymath}
}
The first $m-n+1$ equations form 
 the following vector equation,
{\scriptsize 
\begin{equation}
  \label{eq:T-lin-eq}
  \left[
    \begin{array}{llllllllllll}
      1      &   \\
      t_{n-1}   & 1 \\
      \vdots   &  \vdots \\
      t_1 & \\
      t_0 & t_1 &  \cdots & &1\\  
    \end{array}
  \right ]
  \left [
    \begin{array}{l}
      q_{m-n} \\
      q_{m-n-1} \\
      \vdots \\
      q_1 \\
      q_0\\ 
    \end{array}
  \right]
  =
  \left [
    \begin{array}{l}
      s_{m} \\
      s_{m-1} \\
      \vdots\\
      s_{n+1}\\
      s_n \\ 
      \vdots \\
      s_{m-n-1}
    \end{array}
  \right]
  \Leftrightarrow 
  T \, {\bf q} = {\bf s} ,
  \enspace 
\end{equation}
} %
where
 ${\bf q}=(q_i)_{i=0}^{m-n}$,
 ${\bf s}=(s_i)_{i=m-n+1}^m$,
and $T$ is the nonsingular lower triangular Toeplitz matrix,
defined by its first column vector ${\bf t}=(t_i)_{i=0}^{n})$, $t_n=1$.
Write $T=Z({\bf t})$ and $Z=Z({\bf e}_2)$ where ${\bf e}_2=(0,1,0\dots,0)^T$
is the second coordinate vector, and express the matrix $T$ as a polynomial in 
a generator matrix $Z=Z_{n+1}$ of size $(n+1)\times (n+1)$ as follows,
{\scriptsize
$$Z=\begin{pmatrix}
        0   &       &   \dots    &   & 0\\
        1   & \ddots    &       &   & \\
        \vdots     & \ddots    & \ddots    &   & \vdots    \\
            &       & \ddots    & 0 &  \\
        0    &       &  \dots      & 1 & 0 
    \end{pmatrix},~~ T=Z({\bf t})=t(Z)=\sum_{i=0}^nt_iZ^i,~~Z^{n+1}=O.
$$
}
The matrix $T$ is nonsingular because $t_n\neq 0$, and the latter 
equations imply that the inverse matrix $T^{-1}=t(Z)^{-1}\mod Z^{n+1}$ is again a polynomial in 
 $Z$, that is again a lower triangular Toeplitz matrix defined by its first column.
We compute this column by applying a divide and conquer algorithm. 
Assume that 
$n+1=\gamma=2^k$ is a power of two, for a positive integer $k$.
If this is not the case, embed the matrix $T$ into  a lower triangular Toeplitz
$\gamma\times \gamma$ matrix $
\bar t(Z_{\gamma})$ for
$\gamma=2^k$ and $k=\lceil \lg (n+1)\rceil$ 
 with the leading (that is northwestern) block 
$T=t(Z_{\gamma})$,
such that 
$t(Z_{\gamma})=\bar t(Z_{\gamma}) \mod Z_{\gamma}^{n+1}$,
compute the inverse matrix 
and output its leading  $(n+1)\times (n+1)$
block $T^{-1}$.

Now represent $T$ as the $2\times 2$
block matrix,  
 $ T = 
  \left[
    \begin{array}[c]{ccc}
      T_0 & 
      O\\ 
\noalign{\vspace{10pt}}
      T_1 & T_0
    \end{array}
  \right ]$ 
where $T_0$ and $T_1$ are
 $\frac{\gamma}{2} \times \frac{\gamma}{2}$ Toeplitz submatrices of 
the Toeplitz matrix $T$,
$T_0$ is invertible,
and observe that
\begin{equation}
  \label{eq:T-inv}
  T^{-1} = 
  \left[
    \begin{array}[c]{ccc}
      T_0 & 
      O \\ 
\noalign{\vspace{10pt}}
      T_1 & T_0
    \end{array}
  \right ] ^{-1} =
  \left[
    \begin{array}[c]{ccc}
      T_0^{-1} & 
      O\\  \noalign{\vspace{10pt}}
      -T_0^{-1}\, T_1\, T_0^{-1} & T_0^{-1}
    \end{array}
  \right ]
  \enspace .
\end{equation}
We only seek the first column of the matrix $T^{-1}$.
Its computation amounts to solving the same problem for the half-size
triangular Toeplitz matrix
$T_0$ and to multiplication of 
each of the 
$\frac{\gamma}{2} \times \frac{\gamma}{2}$ Toeplitz matrices $T_1$
and $T_0^{-1}$ by a vector.
Let $TTI(s)$ and $TM(s)$ denote the arithmetic cost of
$s\times s$ triangular Toeplitz matrix inversion
 and multiplying an $s\times s$ Toeplitz matrix by a vector,
 respectively. Then the above analysis implies that
$TTI(\gamma)\le TTI(\gamma/2)+2TM(\gamma/2)$. 
 Recursively apply
this bound to 
$TTI(\gamma/2^g)$ for $g=1,2,\dots$,
 and deduce that 
$TTI(\gamma)\le \sum_{g=1}^hTM(\gamma/2^g)$.
The following simple lemma (cf. \cite[equations (2.4.3) and (2.4.4)]{Pan01})
reduce Toep\-litz-by-vec\-tor multiplication to polynomial multiplication and the extraction
of a subvector of the coefficient vector of the product,
thus implying that $TM(s)\le cs\lg s$
for a constant $c$ and consequently 
$TTI(\gamma)< 2c\gamma\lg \gamma$. 

\begin{lemma} \label{lem:veceq}
  The vector equation
  {\scriptsize
  \begin{equation}\label{eq7}
    \begin{pmatrix}u_0&&O\\ \vdots&\ddots&\\ \vdots&\ddots&u_0\\ u_m&\ddots&\vdots\\ &\ddots&\vdots\\ O&&u_m\end{pmatrix}\begin{pmatrix}v_0\\ \vdots\\ v_n\end{pmatrix}=\begin{pmatrix}p_0\\ \vdots\\ \vdots\\ p_m\\ \vdots\\ p_{m+n}\end{pmatrix}
  \end{equation}
  is equivalent to the polynomial equation
  \begin{equation}\label{eq8}
    \left(\sum_{i=0}^mu_ix^i\right)\left(\sum_{i=0}^nv_ix^i\right)=\sum_{i=0}^{m+n}p_ix^i.
  \end{equation}
  }
\end{lemma}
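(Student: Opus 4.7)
}
The plan is to verify the equivalence simply by reading off the $k$-th coordinate of the matrix--vector product on the left-hand side of~(\ref{eq7}) and comparing it with the $k$-th coefficient of the polynomial product on the left-hand side of~(\ref{eq8}). Both reduce to the discrete convolution $\sum_{i+j=k} u_i v_j$, so the two systems are identical row-by-row / coefficient-by-coefficient.

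More concretely, I would first fix the indexing convention: the displayed $(m+n+1)\times(n+1)$ matrix $U$ in~(\ref{eq7}) has entries $U_{k,j}=u_{k-j}$ for $0\le k\le m+n$ and $0\le j\le n$, adopting the convention $u_i=0$ whenever $i<0$ or $i>m$ (this matches the zero blocks displayed in the upper-right and lower-left corners of $U$). Then the $k$-th component of $U\,\mathbf v$ is
\[
 (U\mathbf v)_k \;=\; \sum_{j=0}^{n} u_{k-j}\, v_j \;=\; \sum_{\substack{i+j=k\\ 0\le i\le m,\; 0\le j\le n}} u_i\, v_j,
\]
after the change of summation variable $i=k-j$ and discarding the vanishing terms.

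Next I would compare with the polynomial side. Expanding $\bigl(\sum_{i=0}^{m}u_i x^i\bigr)\bigl(\sum_{j=0}^{n}v_j x^j\bigr)$ and collecting the coefficient of $x^k$ for $0\le k\le m+n$ yields exactly $\sum_{i+j=k} u_i v_j$, which equals $(U\mathbf v)_k$ above. Hence, (\ref{eq7}) asserts $(U\mathbf v)_k=p_k$ for every $k$ iff the coefficient of $x^k$ in the polynomial product equals $p_k$ for every $k$, i.e.\ iff~(\ref{eq8}) holds. This establishes the equivalence.

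There is essentially no obstacle: the only point requiring any care is keeping the index ranges straight and confirming that the zero pattern of $U$ (strictly lower-triangular band of width $m{+}1$) matches the natural support $\{0,\dots,m\}$ of the coefficients $u_i$, so that padding by zeros outside this range is consistent with the displayed matrix. Once the indexing convention is fixed the two sides are literally the same sum.
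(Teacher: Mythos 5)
Your verification is correct: reading off the $k$-th entry of the banded (triangular Toeplitz-type) matrix--vector product as the convolution $\sum_{i+j=k}u_iv_j$ and matching it with the coefficient of $x^k$ in the product polynomial is exactly the argument the paper relies on, which it leaves implicit by citing \cite[equations (2.4.3) and (2.4.4)]{Pan01} and calling the lemma ``simple.'' Your care with the zero-padding convention $u_i=0$ for $i<0$ or $i>m$ is precisely the only detail that needs checking, so nothing is missing.
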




We wish to estimate the Boolean (rather than arithmetic) cost of inverting 
a triangular Toeplitz matrix $T$ and then extend this to 
the Boolean cost bound of
computing the vector $T^{-1}{\bf s}$ 
and of polynomial division.
So next we assume that
 the input polynomials are known up to some precision
$2^{-\lambda}$ and employ the above 
reduction of the problem to 
recursive (approximate) polynomial multiplications.

To study the Boolean complexity of this procedure, we need
the following corollary, which is a direct consequence of Lemma~\ref{lem:a-poly-mult}
and the inequality $\lg(2d+1) \leq 2 + \lg{d}$.
\begin{corollary}[Bounds for the  product $P_0^2 \, P_1$]
  \label{cor:p02p1-mul}
  Let a polynomial $P_0\in \CC[x]$ 
  have a degree $d$, let its coefficients 
  be known up to a 
  precision
  $2^{-\nu}$, and 
  let $\norm{P_0}_{\infty} \leq
  2^{\tau_0}$.  Similarly, let $P_1\in \CC[x]$ 
  have the degree $2d$, let its coefficients 
  be known up to a 
  precision
  $2^{-\nu}$, and
  let
  $\norm{P_1}_{\infty} \leq 2^{\tau_1}$.  
  Then the polynomial $P = P_0^2 P_1$ has degree $4d$,
  its coefficients are known up to the precision 
  %
  %
  $2^{-\nu +  8\tau_0 + 2\tau_1 + 15\lg{d} + 40}$,
  and $\norm{P_0^2 P_1}_{\infty} \leq 2^{2\tau_0 + \tau_1 + 6\lg{d} + 8}$.
\end{corollary}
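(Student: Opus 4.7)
The plan is to apply Lemma~\ref{lem:a-poly-mult} twice in succession: first to compute the square $P_0^2$ from $P_0$, and then to multiply the resulting polynomial by $P_1$. Throughout I would track three quantities carrying through the two stages: the degree, the coefficient $\infty$-norm, and the precision to which the coefficients are known.

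For the first step, choose a power of two $K_1 \geq 2d + 1$, so that $\lg K_1 \leq \lg d + 2$, and apply Lemma~\ref{lem:a-poly-mult} with $A = B = P_0$ and $\tau_1 = \tau_2 = \tau_0$. Since the input $P_0$ is known to precision $2^{-\nu}$, the lemma yields an approximation $\wt{P_0^2}$ to $P_0^2$ with error at most $2^{-\ell_1}$, where $\nu = \ell_1 + 4\tau_0 + 5.1\lg K_1 + 4$, together with the norm bound $\|P_0^2\|_\infty \leq 2^{\tau_0'}$ where $\tau_0' := 2\tau_0 + 2\lg K_1$.

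For the second step, both $P_0^2$ and $P_1$ have degree at most $2d$; choose a power of two $K_2 \geq 4d + 1$, so $\lg K_2 \leq \lg d + 3$, and apply Lemma~\ref{lem:a-poly-mult} again, now with $A = \wt{P_0^2}$ (norm $2^{\tau_0'}$, known to precision $2^{-\ell_1}$) and $B = \wt{P_1}$ (norm $2^{\tau_1}$, known to precision $2^{-\nu}$). To reach an output error $2^{-\ell_2}$, the lemma demands input precision $\lambda_2 = \ell_2 + 2\tau_0' + 2\tau_1 + 5.1\lg K_2 + 4$. Since $\ell_1 < \nu$, the binding constraint is $\ell_1 \geq \lambda_2$, and substituting the expressions for $\ell_1$ and $\tau_0'$ gives
\[
\ell_2 \;=\; \nu \;-\; 8\tau_0 \;-\; 2\tau_1 \;-\; 9.1\lg K_1 \;-\; 5.1\lg K_2 \;-\; 8.
\]
The crude bounds $\lg K_1, \lg K_2 \leq \lg d + O(1)$ then reduce this to the claimed precision loss $8\tau_0 + 2\tau_1 + 15\lg d + 40$. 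For the norm, the lemma delivers $\|P_0^2 P_1\|_\infty \leq 2^{\tau_0' + \tau_1 + 2\lg K_2} = 2^{2\tau_0 + \tau_1 + 2\lg K_1 + 2\lg K_2}$, which is dominated by the stated $2^{2\tau_0 + \tau_1 + 6\lg d + 8}$.

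The step I expect to require the most care is the second application: after the squaring we are working with the degraded precision $2^{-\ell_1}$ of $\wt{P_0^2}$ rather than the original $2^{-\nu}$, and one must verify that this smaller precision still meets the input hypothesis of Lemma~\ref{lem:a-poly-mult} for the chosen target $\ell_2$. Beyond that, everything is routine substitution of constants, most of it already packaged in Remark~\ref{rem:a-poly-mult}.
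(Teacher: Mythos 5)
Your two-stage application of Lemma~\ref{lem:a-poly-mult} (square $P_0$, then multiply the degraded approximation $\wt{P_0^2}$ by $P_1$, with the precision of $\wt{P_0^2}$ as the binding constraint) is exactly how the paper obtains this corollary, which it states as a direct consequence of that lemma together with $\lg(2d+1)\le \lg d + 2$, and your bookkeeping of degrees, norms and precisions is correct. The only deviation is in the final constants (e.g.\ your loss $8\tau_0+2\tau_1+14.2\lg d+41.5$ versus the stated $8\tau_0+2\tau_1+15\lg d+40$, and norm exponent $2\tau_0+\tau_1+4\lg d+10$ versus $2\tau_0+\tau_1+6\lg d+8$), which for very small $d$ exceed the stated values by a bit or two --- a discrepancy entirely within the paper's own acknowledged rounding of such constants.
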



The following lemma is a normalized version of Lemma~4.4 in \cite{kirrinnis-joc-1998}.
\begin{lemma}
  \label{lem:quo-rem-bd}
  Let $F, G \in \CC[x]$ such that $\deg(F) = m \geq n = \deg(G) \geq 1$,
  let
  $2^{\rho}$ be an upper bound on the magnitude of roots of $G$,
  and let
  $F = G Q + R$ with $\deg(Q) = m - n$ and $\deg(R) = n - 1$.
Then 
  \[ \norm{Q}_{\infty} \leq 2^{m + \lg{m} + m \rho} \norm{F}_\infty 
  \ \text{ and } \ 
  \norm{R}_{\infty} \leq 2^{m + n + \lg{m} + m \rho} \norm{F}_\infty  \enspace .\]
\end{lemma}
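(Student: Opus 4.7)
The plan is to pass to reverse polynomials and reduce the estimate for $Q$ to bounding the power-series inverse of $G^*(x)=x^nG(1/x)$. I assume $G$ is monic, which is the setting in which the lemma is applied above (otherwise divide through by the leading coefficient of $G$). Factor $G(x)=\prod_{i=1}^n(x-\gamma_i)$ with $|\gamma_i|\le 2^\rho$, and introduce $F^*(x)=x^mF(1/x)$, $G^*(x)=\prod_{i=1}^n(1-\gamma_ix)$, $Q^*(x)=x^{m-n}Q(1/x)$, $R^*(x)=x^{n-1}R(1/x)$. The identity $F=GQ+R$ reverses to $F^*=G^*Q^*+x^{m-n+1}R^*$. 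Since $G^*(0)=1$, the formal inverse $1/G^*\in\CC[[x]]$ exists, hence $Q^*\equiv F^*\cdot(1/G^*)\pmod{x^{m-n+1}}$.

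To bound $\norm{Q}_\infty=\norm{Q^*}_\infty$, I would expand
$1/G^*(x)=\prod_{i=1}^n(1-\gamma_ix)^{-1}=\sum_{k\ge0}h_k(\gamma_1,\ldots,\gamma_n)\,x^k$, where $h_k$ is the complete homogeneous symmetric polynomial in the roots. Since $h_k$ is a sum of $\binom{n+k-1}{k}$ monomials of modulus at most $2^{k\rho}$, one has $|h_k|\le\binom{n+k-1}{k}2^{k\rho}$. Reading off the convolution $q^*_j=\sum_{k=0}^{j}f^*_{j-k}h_k$ for $j\le m-n$ gives
\[
  \norm{Q}_\infty\le \norm{F}_\infty\sum_{k=0}^{m-n}\binom{n+k-1}{k}2^{k\rho}\le \norm{F}_\infty\,(m-n+1)\binom{m-1}{n-1}2^{(m-n)\rho},
\]
using the monotonicity of $\binom{n+k-1}{k}$ in $k$. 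Bounding $\binom{m-1}{n-1}\le 2^{m-1}$ and loosening $(m-n)\rho\le m\rho$ yields the stated $\norm{Q}_\infty\le 2^{m+\lg m+m\rho}\norm{F}_\infty$.

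For $R=F-GQ$ I would combine the above with a Vieta estimate for $G$: since $|g_k|\le\binom{n}{k}2^{(n-k)\rho}$, the binomial theorem gives $\norm{G}_\infty\le(1+2^\rho)^n\le 2^{n(\rho+1)}$. Each coefficient of $GQ$ is a sum of at most $n+1$ products, so $\norm{GQ}_\infty\le(n+1)\norm{G}_\infty\norm{Q}_\infty$. Substituting the sharper intermediate $Q$-estimate $2^{m+\lg m+(m-n)\rho}\norm{F}_\infty$ produced in the previous step, the $n\rho$ coming from $\norm{G}_\infty$ combines with $(m-n)\rho$ to give exactly $m\rho$; after absorbing $\lg(n+1)$, $n$ and constants into the exponent, $\norm{R}_\infty\le\norm{F}_\infty+\norm{GQ}_\infty\le 2^{m+n+\lg m+m\rho}\norm{F}_\infty$, as required.

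The main obstacle is bookkeeping rather than conceptual: the advertised $Q$-bound has $m\rho$ in the exponent, but the sharper $(m-n)\rho$ version must be carried into the estimate for $R$, because otherwise combining naively with $\norm{G}_\infty\le 2^{n(\rho+1)}$ would introduce an unwanted $n\rho$ factor in $\norm{R}_\infty$. The several logarithmic contributions coming from $m-n+1$, $n+1$ and $\binom{m-1}{n-1}\le 2^{m-1}$ all need to be absorbed into the single $\lg m$ in the final exponent, which is tight but works.
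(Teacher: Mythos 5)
Your route is genuinely different from the paper's. The paper proves this lemma by rescaling the variable, $f(x)=F(2^{\rho}x)$, $g(x)=G(2^{\rho}x)$, $q(x)=Q(2^{\rho}x)$, $r(x)=R(2^{\rho}x)$, so that all roots of $g$ lie in the unit disc, then invoking Kirrinnis' Lemma~4.4 for $f=gq+r$ as a black box (this is what ``normalized version'' refers to), and finally undoing the scaling via $\norm{f}_{\infty}\leq 2^{m\rho}\norm{F}_{\infty}$, $\norm{Q}_{\infty}\leq\norm{q}_{\infty}$, $\norm{R}_{\infty}\leq\norm{r}_{\infty}$. You instead give a self-contained argument by reversing the polynomials and bounding the coefficients of $1/G^{*}$ by complete homogeneous symmetric functions of the roots. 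For the quotient this is correct: the chain $\norm{Q}_{\infty}\leq\norm{F}_{\infty}\sum_{k=0}^{m-n}\binom{n+k-1}{k}2^{k\rho}\leq m\,2^{m-1+(m-n)\rho}\norm{F}_{\infty}\leq 2^{m+\lg m+m\rho}\norm{F}_{\infty}$ holds, and your explicit monic assumption on $G$ and implicit $\rho\geq 0$ match the setting in which the lemma is used (the paper's proof needs both as well). The benefit of your approach is that it removes the external dependence on \cite{kirrinnis-joc-1998}; the cost is that you must do the constant bookkeeping yourself, and that is where a problem arises.

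The remainder estimate does not close as written. From $\norm{GQ}_{\infty}\leq(n+1)\norm{G}_{\infty}\norm{Q}_{\infty}$, $\norm{G}_{\infty}\leq 2^{n(\rho+1)}$ and your intermediate bound $\norm{Q}_{\infty}\leq 2^{m+\lg m+(m-n)\rho}\norm{F}_{\infty}$, you get $\norm{R}_{\infty}\leq\norm{F}_{\infty}+(n+1)2^{m+n+\lg m+m\rho}\norm{F}_{\infty}$, which exceeds the target $2^{m+n+\lg m+m\rho}\norm{F}_{\infty}$ by roughly a factor $n+1$. There is no room to ``absorb $\lg(n+1)$, $n$ and constants into the exponent'': the exponent $m+n+\lg m+m\rho$ is exactly the one you must not exceed, and the $n$ and $\lg m$ it contains are already spent on $\norm{G}_1$ and on the factor $m$ in the quotient bound. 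The fix is to be less lossy one step earlier: bound the convolution by the $1$-norm, $\norm{GQ}_{\infty}\leq\norm{G}_{1}\norm{Q}_{\infty}$ with $\norm{G}_{1}\leq(1+2^{\rho})^{n}\leq 2^{n(\rho+1)}$ (your factor $n+1$ double-counts the number of terms, which is already paid for inside the Vieta bound for $\norm{G}_{1}$), and carry the sharper intermediate $\norm{Q}_{\infty}\leq m\,2^{m-1+(m-n)\rho}\norm{F}_{\infty}$ rather than its rounded form. Then $\norm{GQ}_{\infty}\leq m\,2^{m+n-1+m\rho}\norm{F}_{\infty}$ and $\norm{R}_{\infty}\leq\norm{F}_{\infty}+m\,2^{m+n-1+m\rho}\norm{F}_{\infty}\leq m\,2^{m+n+m\rho}\norm{F}_{\infty}=2^{m+n+\lg m+m\rho}\norm{F}_{\infty}$, as required. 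With this repair your argument is a valid alternative proof of the lemma.
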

\begin{proof}
  To bring the roots inside the unit circle,
  transform the polynomials by scaling the variable $x$ as follows, 
  $f(x) = F(x \,2^{\rho})$,
  $g(x) = G(x \,2^{\rho})$,
  $q(x) = Q(x \,2^{\rho})$, and
  $r(x) = R(x \,2^{\rho})$.
  Now apply 
  \cite[Lemma~4.4]{kirrinnis-joc-1998} to the equation $f = g q + r$ 
  to obtain
  $\norm{q}_{\infty} \leq \norm{q}_1 \leq 2^{m-1} \norm{f}_1 \leq 2^{m + \lg{m}} \norm{f}_\infty$
  and 
  $\norm{r}_{\infty} \leq \norm{r}_1 \leq \frac{3}{4} 2^{m+n} \norm{f}_1 \leq 2^{m+n + \lg{m}} \norm{f}_\infty$.
    
  Combine these inequalities 
  with the equation   $\norm{f}_{\infty} = 2^{m \rho} \norm{F}_{\infty}$
and the inequalities $ \norm{Q}_{\infty} \leq  \norm{q}_{\infty}$ and
$ \norm{R}_{\infty} \leq \norm{r}_{\infty}$ 
  to deduce the claimed bounds.
\end{proof}

We will 
estimate by induction the cost 
of inverting the matrix $T$, by using Eq.~(\ref{eq:T-inv}) recursively.
The proof of the following lemma could be found in the Appendix.
\begin{lemma}
  \label{lem:T-inv}
  Let $n+1=2^k$
  for a positive integer $k$
  and let $T$ be a lower 
  triangular Toeplitz $(n+1)\times (n+1)$ matrix of. Eq.~(\ref{eq:T-lin-eq}),
  having ones on the diagonal. Let its subdiagonal entries be
  complex numbers
  of magnitude at most $2^{\tau}$ known up to a precision
  $2^{-\lambda}$. 
  Let $2^{\rho}$ be an upper bounds on the magnitude of the roots of the 
  univariate polynomial 
  $t(x)$ associated with $T$.
  Write $T^{-1}=(T^{-1}_{i,j})_{i,j=0}^{n}$.
  Then 
  \[
  \max_{i,j} \abs{T^{-1}_{i,j}} \leq   2^{(\rho+1)n + \lg(n)+1} 
  \enspace .
  \]

  Furthermore, to compute the
  entries of $T^{-1}$ up to the
  precision of $\ell$ bits, 
  that is to compute a matrix $\wt{T}^{-1}=(\wt{T}^{-1}_{i,j})_{i,j=0}^{n}$ 
  such that 
  $  \max_{i,j} \abs{T^{-1}_{i,j} - \wt{T}^{-1}_{i,j}} \leq 2^{-\ell} $,
  it is sufficient to know the entries of $T$ up to
  the precision of

  $\ell + 10\tau \lg{n} + 70\lg^2{n} + 8(\rho+1)n\lg{n}$

  or $\OO(\ell + (\tau + \lg{n} +n \rho)\lg{n}) = \sO(\ell + \tau + n\rho)$ bits.

  The computation of $\wt{T}^{-1}$ costs
  $\OB( n \, \lg^{2}(n) \, \mu(\ell + (\tau + \lg{n} +n \rho)\lg{n}))$ 
  or $\sOB(n \ell + n\tau + n^2 \rho)$.
\end{lemma}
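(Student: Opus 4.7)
The plan is to establish the statement in three coupled pieces, all driven by the divide-and-conquer recursion of Eq.~(\ref{eq:T-inv}). To bound $\max_{i,j} |T^{-1}_{i,j}|$, I would first observe that $T^{-1}$ is again lower triangular Toeplitz with unit diagonal, so all its entries are determined by its first column. Via Lemma~\ref{lem:veceq}, that column consists of the coefficients of the polynomial $u(x)$ satisfying $t(x) u(x) \equiv 1 \pmod{x^{n+1}}$. I would identify $u(x)$ with the quotient of an appropriate polynomial division by $t(x)$ (e.g. dividing $x^{2n}$ by $t(x)$ and reversing coefficients, possibly with a preliminary scaling), whose divisor has root magnitudes bounded by $2^\rho$, so that Lemma~\ref{lem:quo-rem-bd} applies and delivers $\|u\|_\infty \leq 2^{(\rho+1)n + \lg n + 1}$.

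Next, I would induct on $k = \lg(n+1)$ to obtain both the precision and cost claims simultaneously. Each level of the recursion reduces inverting $T$ to inverting the half-size block $T_0$ plus two Toeplitz-by-vector products, which by Lemma~\ref{lem:veceq} are polynomial multiplications of degree at most $2^{k-1}$. Lemma~\ref{lem:a-poly-mult}, together with Corollary~\ref{cor:p02p1-mul} for the three-way product $T_0^{-1} T_1 T_0^{-1}$, governs the per-level precision loss in terms of the current log-norms; applying the first-paragraph bound at every level shows those log-norms stay bounded by $(\rho+1) n + \OO(\lg n)$. Summing the losses across the $k = \lg(n+1)$ levels produces a total required input precision of $\ell + \OO(\tau \lg n) + \OO(\lg^2 n) + \OO((\rho+1) n \lg n)$, which is absorbed by the stated $\ell + 10\tau \lg n + 70 \lg^2 n + 8(\rho+1)n\lg n$. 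For the Boolean cost, each level-$d$ polynomial multiplication at degree $2^{k-d}$ and working precision $\ell' = \ell + \OO((\tau + \lg n + n\rho)\lg n)$ costs $\sOB(2^{k-d} \mu(\ell'))$ by Lemma~\ref{lem:a-poly-mult}; summing the geometric series gives $\sOB(n \lg^2 n \cdot \mu(\ell')) = \sOB(n\ell + n\tau + n^2 \rho)$, matching the claim.

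The main obstacle I anticipate is the precision analysis in the induction: the entries of the midlevel block $T_0^{-1}$ already have magnitude $\sim 2^{(\rho+1)n/2}$, and a careless treatment of the three-way product $T_0^{-1} T_1 T_0^{-1}$ can appear to double the working precision at every recursive step, producing an exponential blowup over the $\lg n$ levels. The crucial thing to verify is that the losses are \emph{additive} in the log-norms rather than multiplicative, so that they telescope into the claimed $\OO((\rho+1) n \lg n)$ slack; and that the associated polynomial of each Toeplitz subblock admits a root bound governed by the global $2^\rho$, so that Lemma~\ref{lem:quo-rem-bd} remains applicable at every recursion level.
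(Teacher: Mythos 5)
Your overall architecture is the paper's own: the recursion of Eq.~(\ref{eq:T-inv}), the reduction of each level to polynomial multiplications via Lemma~\ref{lem:veceq} and Corollary~\ref{cor:p02p1-mul}, an induction over the $\lg(n+1)$ levels with the entry bound on the inverse controlling the intermediate norms, and the same cost summation. But your derivation of the entry bound does not deliver what you claim. Identifying the first column of $T^{-1}$ with the (reversed) quotient of dividing $x^{2n}$ by $t$ is fine, but in Lemma~\ref{lem:quo-rem-bd} the parameter $m$ is the degree of the dividend, here $2n$; the lemma then gives $\norm{Q}_{\infty}\leq 2^{2n+\lg(2n)+2n\rho}=2^{2(\rho+1)n+\lg{n}+1}$, not $2^{(\rho+1)n+\lg{n}+1}$. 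The doubled exponent is harmless for the $\OO(\cdot)$ and $\sOB(\cdot)$ forms of the precision and cost claims, but it does not reproduce the explicit constants in the statement (the displayed entry bound and the term $8(\rho+1)n\lg{n}$). The paper reaches the smaller exponent by a different device: it feeds right-hand sides of the form $\pm x^{n}\pm x^{k}$ (signs matched to those of $T^{-1}_{i,1}$ and $T^{-1}_{i,k}$), of degree $n$ and sup-norm $1$, into Eq.~(\ref{eq:T-lin-eq}), reads $\abs{T^{-1}_{i,1}}+\abs{T^{-1}_{i,k}}$ off a quotient coefficient, and invokes Lemma~\ref{lem:quo-rem-bd} with dividend degree $n$.

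The second gap is the verification you yourself single out as crucial, namely that ``the associated polynomial of each Toeplitz subblock admits a root bound governed by the global $2^{\rho}$.'' That is neither true in general nor needed: $T_0$ is associated with a truncation of (the reverse of) $t$, and truncation does not preserve root bounds, so Lemma~\ref{lem:quo-rem-bd} cannot simply be re-applied at the inner recursion levels. What actually makes the induction work, and what the paper uses, is that $T^{-1}$ is again unit lower triangular Toeplitz, so every inverse arising in the recursion ($T_0^{-1}$ and its successors) is a leading principal submatrix of $T^{-1}$; consequently the single top-level entry bound $2^{N}$ bounds $\norm{P_0}_{\infty}$ at every level, and the root bound on $t$ enters the argument exactly once, at the top. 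With these two repairs (the submatrix observation in place of per-level root bounds, and either the paper's test-polynomial trick or an acceptance of the weaker constant from your division by $x^{2n}$), your plan coincides with the paper's proof.
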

As usual in 
estimating the complexity of approximate division
we assume that $m = 2n$ to simplify 
our presentation.
Recall that $s(x) = t(x)\, q(x) + r(x)$.

\begin{theorem}
  \label{thm:approx-poly-div}
  Assume $s, t \in \CC[x]$ of degree at most $2n$ and $n$,
  such that   $\norm{s}_{\infty} \leq 2^{\tau_1}$, 
  $\norm{t}_{\infty} \leq 2^{\tau_2}$,
  and $2^{\rho}$ is an upper bound on the magnitude of the coefficients of $t(x)$.
  Assume that we know the coefficients of $s$ and $t$ up to 
  a precision $\lambda$,
  that is that the input includes two polynomials $\wt{s}$ and $\wt{t}$ such that 
  $\norm{s -\widetilde s}_{\infty} \leq 2^{-\lambda}$
  and $\norm{t -\widetilde t}_{\infty} \leq 2^{-\lambda}$,
  where  
  $\lambda = \ell + \tau_1 + 12\tau_2 \lg{n} + 80\lg^2{n} + 10(\rho+1)n\lg{n} + 30$
  or $\lambda = \OO(\ell + \tau_1 + \tau_2 \lg{n} +  n\, \rho \,\lg{n})$.
  Let $q$ denote the quotient and let $r$ denote  remainder
  of the division
  of the polynomials $s$ by $t$, that is $s = t \cdot q + r$ where $\deg r<\deg t$.
  
  Then we can compute in 
  $ \OB( n \, \lg^{2}(n) \, \mu(\ell + \tau_1 + (\tau_2 +n \rho)\lg{n})) $
  or $\sOB(n \ell + n\tau_1 + n\tau_2 + n^2 \rho)$
  two polynomials  $\wt{q}$ and $\wt{r}$ such that
  $ \norm{q - \wt{q}}_{\infty}\leq 2^{-\ell}$ and 
  $ \norm{r - \wt{r}}_{\infty}\leq 2^{-\ell}$,
  $\norm{q}_{\infty} \leq 2^{n + \lg{n} + 1 + n \rho + \tau_1}$  
  and $\norm{r}_{\infty} \leq 2^{ 3n + \lg{n} + 1 + n \rho + \tau_1}$.
\end{theorem}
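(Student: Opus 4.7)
The plan is to reuse the matrix reduction recalled in the pages preceding the theorem. After dividing $t$ by its nonzero leading coefficient (whose impact is easily absorbed in the precision budget), the coefficient vector $\mathbf{q}$ of the quotient satisfies the nonsingular lower triangular Toeplitz system $T\mathbf{q}=\mathbf{s}$ of Eq.~(\ref{eq:T-lin-eq}), where $T$ is of size $(n+1)\times(n+1)$ with ones on the diagonal and with its remaining entries of magnitude at most $2^{\tau_2}$. The algorithm therefore proceeds in three approximate FFT-based stages: (a)~compute $\wt{T}^{-1}$ by Lemma~\ref{lem:T-inv}; (b)~form $\wt{\mathbf{q}}=\wt{T}^{-1}\wt{\mathbf{s}}$, a Toeplitz-by-vector product that by Lemma~\ref{lem:veceq} is equivalent to an approximate polynomial multiplication and is handled by Lemma~\ref{lem:a-poly-mult}; (c)~recover $\wt{r}=\wt{s}-\wt{t}\cdot\wt{q}$ by one further application of Lemma~\ref{lem:a-poly-mult} and a coefficient-wise subtraction.

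The output norm bounds follow directly from Lemma~\ref{lem:quo-rem-bd}. Applied with $m=2n$ and $\deg G=n$, together with $\norm{s}_{\infty}\leq 2^{\tau_1}$, that lemma yields the announced estimates $\norm{q}_{\infty}\leq 2^{n+\lg n+1+n\rho+\tau_1}$ and $\norm{r}_{\infty}\leq 2^{3n+\lg n+1+n\rho+\tau_1}$ after expanding $\lg(2n)=1+\lg n$ and collecting terms.

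The technical content lies in the precision bookkeeping, which I would carry out backwards from the target error $2^{-\ell}$. For step~(c), Lemma~\ref{lem:a-poly-mult} applied to $\wt{t}$ and $\wt{q}$ asks that both inputs be known to roughly $\ell+2\tau_2+2(n+\lg n+1+n\rho+\tau_1)+O(\lg n)$ bits, which dictates the precision I must attain in step~(b). For step~(b), since $\norm{T^{-1}}_{\infty}\leq 2^{(\rho+1)n+\lg n+1}$ by Lemma~\ref{lem:T-inv}, a second application of Lemma~\ref{lem:a-poly-mult} translates the required precision on $\wt{q}$ into a required precision $\ell'$ on the entries of $\wt{T}^{-1}$ that is still of the order $\ell+\tau_1+\tau_2+n\rho$ up to an $O(\lg n)$ overhead. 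Feeding this $\ell'$ into the precision clause of Lemma~\ref{lem:T-inv} finally forces the input $\wt{t}$ to be known to $\ell'+10\tau_2\lg n+70\lg^{2}n+8(\rho+1)n\lg n$ bits, and collecting everything yields the announced value $\lambda=\ell+\tau_1+12\tau_2\lg n+80\lg^{2}n+10(\rho+1)n\lg n+30$.

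For the complexity, each of the three stages costs $\OB(n\lg^{k}n\,\mu(\ell+\tau_1+(\tau_2+n\rho)\lg n))$ with $k\in\{1,2\}$, by Lemmas~\ref{lem:T-inv} and \ref{lem:a-poly-mult}; the triangular Toeplitz inversion dominates and produces the claimed $\OB(n\lg^{2}n\,\mu(\ell+\tau_1+(\tau_2+n\rho)\lg n))$, equivalently $\sOB(n\ell+n\tau_1+n\tau_2+n^{2}\rho)$. The main obstacle is the precision bookkeeping itself: I must verify that the constants $12,80,10,30$ in the expression for $\lambda$ are large enough to simultaneously dominate each of the three cascaded precision losses, since every stage amplifies the working precision both additively in the norm bounds of its operands and multiplicatively in $\lg n$. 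The algorithmic structure, by contrast, is a direct transcription of the matrix reduction already on display in the excerpt.
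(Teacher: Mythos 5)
Your proposal follows essentially the same route as the paper's proof: reduce to the triangular Toeplitz system $T\mathbf{q}=\mathbf{s}$ of Eq.~(\ref{eq:T-lin-eq}), invert $T$ approximately via Lemma~\ref{lem:T-inv}, recover $\wt{r}=\wt{s}-\wt{t}\,\wt{q}$ via Lemma~\ref{lem:a-poly-mult}, take the output norm bounds from Lemma~\ref{lem:quo-rem-bd}, and let the Toeplitz inversion dominate the $\OB(n\lg^2 n\,\mu(\cdot))$ cost, with the same backward precision bookkeeping. The only cosmetic difference is that you route the product $T^{-1}\mathbf{s}$ through Lemmas~\ref{lem:veceq} and~\ref{lem:a-poly-mult}, whereas the paper bounds the entrywise inner products $q_i=\sum_j T^{-1}_{i,j}s_j$ directly; the resulting estimates are equivalent.
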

\begin{proof}
  We compute the coefficients of $q(x)$ using Eq.~(\ref{eq:T-lin-eq}),
  ie $q = T^{-1} \, s$.  Each coefficient of the polynomial 
  $q$ comes as  the inner
  product of two vectors, ie $q_i = \sum_{j=0}^{n} T_{i,j}^{-1} s_j$.
  
  From Lemma~\ref{lem:T-inv} we know that
  ${\lg\abs{T_{i,j}^{-1}}} \leq n(\rho + 1) + \lg{n} + 1 = N$
  and $\lg{\abs{T_{i,j}^{-1} - \wt{T_{i,j}^{-1}}}} \leq 
  -\lambda + l_2$ for $l_2=10\tau_2 \lg{n} + 70\lg^2{n} + 8(\rho+1)n\lg{n}$. 
  
  For the coefficients of the polynomials $s=\sum_{j=0}^{2n}s_jx^j$
  and  $t=\sum_{j=0}^{n}t_jx^j$, we have assumed the following bounds, 
  $\lg\abs{s_j} \leq \tau_1$, $\lg{\abs{s_j - \wt{s}_j}} \leq -\lambda$,
  $\lg\abs{t_j} \leq \tau_2$, $\lg{\abs{t_j - \wt{t}_j}} \leq -\lambda$,
  $\lg \abs{T_{i,j}^{-1} s_j} \leq \tau_1 + N $,
  and
  $\lg \abs{T_{i,j}^{-1} s_j - \wt{T}_{i,j}^{-1} \wt{s}_j} \leq -\lambda + \ell_2 + \tau_1$
  for all $i$ and $j$. Therefore
  \[
  \begin{aligned}
    \lg \norm{q - \wt{q}}_{\infty} 
    & \leq \lg \abs{\sum_{j} T_{i,j}^{-1} s_j - \sum_{j} \wt{T}_{i,j}^{-1} \wt{s}_j} 
    \leq -\lambda + \ell_2 + \tau_1  + \lg{n}  \\
    & \leq -\lambda + 10\tau_2 \lg{n} + 70\lg^2{n} + 8(\rho+1)n\lg{n} + \tau_1 + \lg{n}
    \enspace .
  \end{aligned}
  \]

To compute 
the remainder we
apply the formula $r(x) = s(x) - t(x) q(x)$.
It involves an  approximate polynomial multiplication and a subtraction. 
For the former we use Lemma~\ref{lem:a-poly-mult}
and obtain the inequality
$\lg\norm{t \, q - \wt{t} \wt{q}}_{\infty} \
\leq -\lambda + 2\tau_2 + 6\lg{n} + 26 + \ell_2 + 2 N$.

Let us also cover the impact of the subtraction. 
After some calculations and
simplifications that make 
the bounds less scary (albeit less accurate wrt the 
constant involved), 
we obtain
{\scriptsize
\[
\begin{aligned}
  \lg \norm{r - \wt{r}}_{\infty}  
  & \leq  -\lambda + \tau_1 + 2\tau_2 + 6\lg{n} + 26 + \ell_2 + 2 N \\
  & \leq  -\lambda + \tau_1 + 2\tau_2 + 6\lg{n} + 26 + 10\tau_2 \lg{n} \\
  & \ \quad+ 70\lg^2{n} + 8(\rho+1)n\lg{n}
  + 2( n(\rho + 1) + \lg{n} + 1) \\
  & \leq -\lambda + \tau_1 + 12\tau_2 \lg{n} + 80\lg^2{n} + 10(\rho+1)n\lg{n} + 30
  \enspace .
\end{aligned}
\]
}

By using Lemma \ref{lem:quo-rem-bd} we bound the
norms of the
quotient and the remainder as follows:
$
\lg \norm{r}_{\infty} \leq 3n + \lg{n} + 1 + n \rho + \tau_1
\quad \text{ and } \quad
\lg \norm{q}_{\infty} \leq n + \lg{n} + 1 + n \rho + \tau_1
\enspace .
$

The maximum number of bits that we need to compute with is 
$\ell + \tau_1 + 12\tau_2 \lg{n} + 80\lg^2{n} + 10(\rho+1)n\lg{n} + 30$
or $\OO(\ell + \tau_1 + \tau_2 \lg{n} + \lg^2{n} +  n\, \rho \,\lg{n})$.

The complexity of computing $\wt{T}^{-1}_{i,j}$ is 
$\OB( n \, \lg^{2}(n) \,  \mu(\ell + \tau_1 + \tau_2 \lg{n} + \lg^2{n} +  n\, \rho \,\lg{n}))$
or $\sOB(n \ell + n\tau_1 + n\tau_2 + n^2 \rho)$.

According to Lemma~\ref{lem:a-poly-mult} the complexity of 
computing the product $\wt{t} \, \wt{q}$
is
$\OB( n \lg(n) \, \mu(\ell + \tau_1 + \tau_2 \lg{n} + \lg^2{n} +  n\, \rho \,\lg{n}))$
or $\sOB(n \ell + n\tau_1 + n\tau_2 + n^2 \rho)$.
\end{proof}

\begin{remark}
  \label{rem:on-rho}
  We can eliminate the dependence 
 of the bounds of Theorem \ref{thm:approx-poly-div} on $\tau_2$
  by applying Vieta's formulae and the
  following inequality,
  $\abs{t_k} \leq {n \choose k} (2^{\rho})^{k} \leq 2^{2n + n\rho}$,
  where $t_k$ is the $k$-th coefficient of $t(x)$.
  In this way, after some further simplifications, 
  the required precision is
  $\ell + \tau_1 + 150(\rho+1)n\lg{n}$
  and the complexity bound becomes
  $ \OB( n \, \lg^{2}(n) \, \mu(\ell + \tau_1 + \rho \lg{n})) $
  or $\sOB(n \ell + n\tau_1 + n^2 \rho)$.
\end{remark}

\section{Multipoint polynomial evaluation}\label{smpe}

\begin{problem}\label{prob3.1.1}
{\bf Multipoint polynomial evaluation.} Given the coefficients of a polynomial 
$p(x)=\sum_{i=0}^{n-1}p_ix^i$ and a set of knots  $t_0,\ldots,t_{n-1}$, compute the values 
$r_0=
p(t_0),\ldots,r_{n-1}=p(t_{n-1})$ or equivalently
compute the vector ${\bf r}=V{\bf p}$ where ${\bf r}=(r_i)_{i=0}^{n-1}$,
${\bf p}=(p_i)_{i=0}^{n-1}$, and $V=(x_i^j)_{i,j=0}^{n-1}$. 
\end{problem}



In the case where the knots $t_i=\omega^{i}$ are the $n$-th roots of 1 for all $i$,
 $\omega=\exp(2\pi \sqrt {-1})$/n, and 
$V=\Omega=(\omega^{ij})_{i,j=0}^{n-1}$, 
Problem \ref{prob3.1.1} turns into the problem of the DFT$(\bf v)$ computation. 

\paragraph*{Solution:} The Moenck--Borodin algorithm of \cite{MB72}
solves Problem \ref{prob3.1.1} in $O(M(n)\log n)$ ops for $M(n)$ in (2.4.1), (2.4.2) based on the 
two following simple observations.

\begin{fact}\label{fact3.1.3}
  $p(a)=p(x)\bmod(x-a)$ for any polynomial 
  $p(x)$ and any scalar $a$.
\end{fact}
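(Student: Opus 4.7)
The plan is to invoke Euclidean division of $p(x)$ by the monic linear polynomial $x-a$, which makes sense over $\CC$ (and indeed over any commutative ring, since the divisor is monic). First I would write
\[
p(x) = (x-a)\,q(x) + r(x), \qquad \deg r < \deg(x-a) = 1,
\]
so that $r(x)$ is a constant, which I will simply call $r \in \CC$. Existence and uniqueness of this decomposition are standard consequences of the polynomial division algorithm applied to a monic divisor.

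Then I would evaluate both sides at $x = a$. The first term on the right-hand side vanishes since $(a-a)\,q(a)=0$, leaving $p(a) = r$. But by the definition of the mod operation for polynomials, $p(x)\bmod(x-a)$ is precisely the unique remainder $r$ of degree less than $1$ produced above. Chaining the two equalities yields $p(a) = r = p(x)\bmod(x-a)$, which is the claim.

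There is essentially no technical obstacle: the only thing to check is that the Euclidean division lemma applies, and it does because $x-a$ is monic of degree one. I would close by noting that this one-line identity is exactly the hook the Moenck--Borodin scheme needs, since it allows one to replace the evaluation $p(t_i)$ by a reduction modulo $x-t_i$, and such reductions can be batched by first reducing $p(x)$ modulo the product $\prod_{i\in S}(x-t_i)$ for each subset $S$ of knots traversed in a subproduct-tree fashion.
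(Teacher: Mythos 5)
Your argument is correct: it is the standard remainder-theorem proof, dividing by the monic linear polynomial $x-a$ and evaluating at $x=a$, which is exactly the justification intended for this fact (the paper states it without proof, as a classical observation underlying the Moenck--Borodin scheme). Nothing is missing, and your closing remark about batching reductions via the subproduct tree matches how the fact is used in the algorithm's correctness argument.
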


\begin{fact}\label{fact3.1.4}
  $w(x)\bmod p(x)= (w(x)\bmod(u(x)p(x)))\bmod p(x)$ 
  for any triple of polynomials $u(x)$, 
  $p(x)$, and $w(x)$.
\end{fact}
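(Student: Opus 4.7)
The plan is to unpack the definition of polynomial remainder and invoke the elementary observation that every multiple of $u(x)p(x)$ is also a multiple of $p(x)$. Both sides of the claimed identity have degree strictly less than $\deg p$, so it suffices to show that $w(x)$ and $w(x)\bmod(u(x)p(x))$ are congruent modulo $p(x)$; then their common reduction mod $p(x)$ must coincide, yielding the stated equality.

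First, I would apply the Euclidean division algorithm to $w(x)$ and $M(x) := u(x)p(x)$ to obtain $w(x) = q(x)\,M(x) + s(x)$ with $\deg s < \deg M$ and $s(x) = w(x)\bmod M(x)$ by definition. Second, I would reduce this equation modulo $p(x)$: since $p(x)\mid M(x)$, the term $q(x)M(x)$ vanishes modulo $p(x)$, so $w(x)\equiv s(x)\pmod{p(x)}$. Taking the remainder of both sides modulo $p(x)$ gives
\[
 w(x)\bmod p(x)\;=\;s(x)\bmod p(x)\;=\;\bigl(w(x)\bmod(u(x)p(x))\bigr)\bmod p(x),
\]
which is the claimed identity.

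There is essentially no technical obstacle — the fact is precisely the statement that reduction modulo a divisor factors through reduction modulo any of its multiples, a one-line consequence of Euclidean division. Its role is algorithmic rather than conceptual: together with Fact~\ref{fact3.1.3}, it is the identity that drives the Moenck--Borodin subproduct-tree method, allowing one to first reduce $w(x)$ modulo a large subtree modulus $u(x)p(x)$ and then continue down the tree with cheaper reductions against the two half-size factors, which is what yields the $O(M(n)\log n)$ arithmetic cost bound advertised in the preceding paragraph.
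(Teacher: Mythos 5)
Your argument is correct: the Euclidean-division step $w = q\,(up) + s$ followed by reduction modulo $p$ (using $p \mid up$) is exactly the standard justification, and the paper itself states this fact without proof as an elementary observation underlying the Moenck--Borodin recursion. Nothing is missing, aside from the implicit (and harmless) assumption that $u(x)p(x)$ is nonzero so that the outer remainder is defined.
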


{\bf Algorithm 4: the Moenck--Borodin algorithm for multipoint polynomial evaluation}.

\paragraph*{\sc initialization:} Write $k=\lceil\log_2n\rceil$, $m_j^{(0)}=x-x_j$, $j=0,1,\ldots,n-1$; $m_j^{(0)}=1$ for $j=n,\ldots,2^k-1$ (that is, pad the set of the moduli $m_j^{(0)}=x-x_j$ with ones, to make up a total of $2^k$ moduli). Write $r_0^{(k)}=p(x)$.

\paragraph*{\sc Computation:} 
\begin{enumerate}
\item {\em Fan-in process} (see Figure~\ref{fig:fan-in}). 
 Compute recursively the ``supermoduli" $m_j^{(h+1)}=m_{2j}^{(h)}m_{2j+1}^{(h)}$, $j=0,1,\ldots,2^{k-h}-1$; $h=0,1,\ldots,k-2$.
\item {\em Fan-out process} (see Figure~\ref{fig:fan-out}).
  Compute recursively the remainders $r_j^{(h)}=r_{\lfloor j/2\rfloor}^{(h+1)}\bmod m_j^{(h)}$, \\
  $j=0,1,\ldots,\min\{n,\lceil n/2^h\rceil-1\}$; $h=k-1,k-2,\ldots,0$.
\end{enumerate}

\paragraph*{\sc Output:} $p(x_i)=r_i^{(0)}$, $i=0,1,\ldots,n-1$.

Let us include a brief outline of the analysis of the algorithm
(cf.~\cite{MB72}).  To prove its {\em correctness}, first apply
Fact~\ref{fact3.1.4} recursively to obtain that $r_j^{(h)}=v(x)\bmod
m_j^{(h)}$ for all $j$ and $h$. Now, correctness of the output
$p(x_i)=r_i^{(0)}$ follows from Fact~\ref{fact3.1.3}.

To estimate the {\em computational cost} of the algorithm, represent
its two stages by the same binary tree (see Figures 3.1 and 3.2),
whose nodes are the ``supermoduli" $m_j^{(h)}$ at the {\em fan-in}
stage 1, but turn into the remainders $r_j^{(h)}$ at the {\em fan-out}
stage 2.

At each level $h$ of the tree, the algorithm computes $2^{k-h}$ products of pairs of polynomials of degree $2^h$ at stage 1 and $2^{k-h}$ remainders of the division of polynomials of degree of at most $2^{h+1}$ by ``supermoduli" of degree $2^h$. Each time multiplication/division uses $O(M(2^h))$ ops for $M(n)$ in (2.4.1), (2.4.2). 
So we use 
$O(2^{k-h}M(2^h))$ ops at the $h$-th level 
 and $O(\sum_{h=0}^{k-1}2^{k-h}M(2^h))=O(M(2^k)k)$ ops at all levels. Recall that $n\le 2^k<2n$ and obtain the claimed bound of $O(M(n)\log n)$ ops. \qed

\begin{remark}\label{rem3.1.6}
The fan-in computation at stage 1 depends only on the set
$\{t_0,\ldots,t_{n-1}\}$ and can be viewed as (cost-free) {\em preprocessing} if the 
knot set is fixed and only the polynomial 
$p(x)$ varies. Similar observations hold for the solution of many other problems in this chapter.
\end{remark}

\begin{remark}\label{rem3.1.11}
  Problem \ref{prob3.1.1} and its solution algorithms are immediately
  extended to the case where we have $m$ points $t_0,\ldots,t_{m-1}$
  for $m>n$ or $m<n$. The solution requires $O(E(l)r/l)$ ops provided
  $l=\min\{m,n\}$, $r=\max\{m,n\}$, and $E(l)$ ops are sufficient for
  the solution where $n=l$. $E(l)=O(M(l)\log l)$ for a general set
  $\{t_i \}$ but decreases to $O(M(l))$,
  where $t_i=at^{2i}+bt^i+c$ for
  fixed scalars $a,b,c$, and $t$ and for all $i$.
  This also leads to a similar improvement of 
  the estimates for 
  the Boolean complexity 
  \cite{ASU75}.
\end{remark}


\subsection{Boolean complexity estimates}

In the following two lemmata we present the bit complexity of the
fan-in and the fan-out process.  These results are of independent
interest.  We do not estimate the accuracy needed and the bit
complexity bound of the algorithm for multipoint evaluation 
because  in
Lemma~\ref{lem:rem-m-polys} we cover a more general  algorithm.
Multipoint evaluation is its special case.

\begin{lemma}[Complexity of Fan-in process]
  \label{lem:fan-in}
  Assume that we are given $n$ complex numbers $x_i$
  known up to a precision $\lambda = \ell + (4n-4)\tau + 32n - (\lg{n} +5)^2 - 7$, that is 
  $\abs{x_i - \wt{x}_i} \leq 2^{-\lambda}$,
  and  that 
  $\abs{x_i} \leq 2^{\tau}$ for a positive integer $\tau$.
  At the cost
  $\sOB( n \lg^{2}{n} \, \mu(\ell +n\tau + \lg{n}) ) $
  the Fan-in process of the Moenck--Borodin algorithm
  approximates the ``supermoduli" $\wt{m}_j^{(i)}$ within the bounds
  $\norm{m_j^{(i)} - \wt{m}_j^{(i)}}_{\infty} \leq 2^{-\ell}$ for all $i$ and $j$.
  Moreover,  $\lg \norm{m_{j}^{(i)}}_{\infty} \leq n\tau + 8n - 2\lg{n} - 8$
  for all $i$ and $j$.
\end{lemma}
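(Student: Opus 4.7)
The plan is to establish the lemma in three phases: first a size bound on the exact supermoduli, then a recursive error analysis up the tree, and finally a cost summation.

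For the norm bound, I would observe that each supermodulus $m_j^{(i)}$ is (by Fact~\ref{fact3.1.3} and the definition of the fan-in) a monic product of at most $d_i := \min(2^i, n) \le n$ linear factors $x - x_l$ with $|x_l| \le 2^\tau$. Expanding via Vieta, the $k$-th coefficient equals $\pm e_k(x_{l_1},\dots,x_{l_{d_i}})$, whose magnitude is at most $\binom{d_i}{k}2^{k\tau} \le 2^{d_i(\tau+1)} \le 2^{n(\tau+1)}$. A little slack then yields the claimed $\lg\|m_j^{(i)}\|_\infty \le n\tau + 8n - 2\lg n - 8$.

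For the error analysis, I would set $\epsilon_i := \max_j \|m_j^{(i)} - \wt m_j^{(i)}\|_\infty$ and $M_i := \max_j \lg\|m_j^{(i)}\|_\infty$, with $\epsilon_0 \le 2^{-\lambda}$ since $\wt m_j^{(0)} = x-\wt x_j$. At each level I would invoke Lemma~\ref{lem:a-poly-mult} (in the Remark~\ref{rem:a-poly-mult} form) on the pair $\wt m_{2j}^{(i)}, \wt m_{2j+1}^{(i)}$ of degree $2^i$ and norms $\le 2^{M_i}$: to obtain output error $2^{-p_{i+1}}$ it suffices that the input precision be $p_{i+1} + 4M_i + 6i + 15$. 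Unrolling this recurrence from $p_k = \ell$ down to $p_0 = \lambda$ yields
\[
\lambda \;=\; \ell + \sum_{i=0}^{k-1}\bigl(4M_i + 6i + 15\bigr).
\]
Using $M_i \le 2^i(\tau+1)$ from Step~1, the first sum is bounded by $4(\tau+1)\sum_{i<k}2^i \le (4n-4)(\tau+1)$, accounting for the $(4n-4)\tau$ term, while the remaining $6i+15$ telescopes to a quadratic-in-$\lg n$ correction; completing the square on $\sum 6i$ produces the $-(\lg n+5)^2$ piece of the claimed $\lambda$, and the residual additive constants are absorbed into the $32n - 7$.

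For the complexity, at level $i$ there are $2^{k-i}$ multiplications of polynomials of degree $2^i$, and by Lemma~\ref{lem:a-poly-mult} each costs $\OB(2^i \lg(2^i)\,\mu(\ell + n\tau + \lg n))$, since the precision used at any level is dominated (up to polylogarithmic factors in $n$) by $\ell + n\tau + \lg n$. Thus level $i$ costs $\OB(n\cdot i \cdot \mu(\ell + n\tau + \lg n))$, and summing $\sum_{i=0}^{k-1} i \le \tfrac12\lg^2 n$ gives the total $\sOB(n\lg^2 n \cdot \mu(\ell + n\tau + \lg n))$.

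The main obstacle is the second step: tracking constants carefully enough so that the unrolled recurrence matches the exact expression for $\lambda$ stated in the lemma. The routine multiplicative/additive error analysis is straightforward, but lining up the geometric sum $\sum 2^i(\tau+1)$ with the coefficient $4n-4$ in front of $\tau$, and extracting the precise quadratic correction $-(\lg n+5)^2$ from $\sum(6i+15)$, requires careful bookkeeping rather than ingenuity. Everything else (Vieta, the cost tally, and invocation of Lemma~\ref{lem:a-poly-mult}) is standard.
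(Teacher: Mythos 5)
Your plan is correct and is essentially the paper's own argument: the same subproduct-tree recursion, the same per-level invocation of Lemma~\ref{lem:a-poly-mult} (in the Remark~\ref{rem:a-poly-mult} form with $\tau_1=\tau_2=M_i$ and $\lg d=i$), and the same cost summation $\sum_h \frac{n}{2^h}\,\OB\bigl(2^h\,h\,\mu(\ell+n\tau+\lg n)\bigr)=\sOB(n\lg^2 n\,\mu(\ell+n\tau+\lg n))$. The differences are in bookkeeping. You obtain the norm bound directly from Vieta, $\lg\|m_j^{(i)}\|_\infty\le 2^i(\tau+1)\le n\tau+n$, which is tighter than the paper's inductively carried invariant $2^k\tau+2^{k+3}-2k-8$ (equal to the stated $n\tau+8n-2\lg n-8$ at the top level) and implies it for $n\ge 2$; this is a small simplification. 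One caveat, though: your unrolled recurrence gives $\lambda'=\ell+\sum_{i<k}(4M_i+6i+15)\le\ell+(4n-4)(\tau+1)+3\lg^2 n+12\lg n$, and this will \emph{not} literally reproduce the stated $\lambda=\ell+(4n-4)\tau+32n-(\lg n+5)^2-7$. The $32n$ and $-(\lg n+5)^2$ terms are artifacts of the paper's induction invariant, generated by pushing its looser norm bound (the $8\cdot 2^k-2k-8$ part) through the multiplication lemma; they do not arise from completing the square on $\sum 6i$, which is a positive contribution $3\lg^2 n+12\lg n$. The correct last step is simply to check domination: $\lambda-\lambda'=28n-4\lg^2 n-22\lg n-28\ge 0$ for all $n\ge 2$, so the stated precision suffices a fortiori. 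If instead you insist on making the constants line up exactly, you would be chasing an identity that does not hold.
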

\begin{proof}
  Assume that $n=2^{k}$. The proof is by induction on $k$.
  Write $m_i^{(0)} = x - x_i$ and $\wt{m}_i^{(0)} = x - \wt{x}_i.$ 
  Wlog we provide the estimates just in the case where $j=0$.  
  
  Consider the case where $k=1$.
  Apply Lemma~\ref{lem:a-poly-mult} for $A=m_0^{(0)}$ and $B=m_1^{(0)}$.
  Verify that $\lg \norm{m_0^{(1)}}_{\infty} \leq 2\tau \leq 2\tau + 6$ 
  and
  $\lg \norm{m_0^{(1)} - \wt{m}_0^{(1)}}_{\infty} \leq -\lambda + 4\tau + 14.2$.
  This proves the induction basis.
  
  Now assume that the claimed 
  bounds hold for $k-1$, that is 
  $\deg(m_i^{(k-1)}) = 2^{k-1}$,
  $\lg \norm{ m_i^{(k-1)} }_{\infty} \leq 2^{k-1}\tau + 2^{k+2} - 2k - 6$,
  and 
  $\lg \norm{m_i^{(k-1)} - \wt{m}_i^{(k-1)}}_{\infty} 
  \leq -\lambda + (2^{k+1}-4)\tau + 2^{k+4} -(k+4)^2 -7$
  for $i \in \{0,1\}$.

  Since $m_0^{(k)} = m_0^{(k-1)}\, m_1^{(k-1)}$,
  it follows that 
  $\deg(m_0^{(k)}) = 2^{k}$.
 By applying Lemma~\ref{lem:a-poly-mult} we deduce that
  \[ \lg \norm{ m_i^{(k)} }_{\infty} = 
  \lg \norm{ m_0^{(k-1)} \, m_1^{(k-1)} }_{\infty}
  \leq 2^{k}\tau + 2^{k+3} - 2k - 8
  \]
  and 
%
%
%
  \[
  \begin{array}{ll}
  \lg \norm{m_0^{(k)} - \wt{m}_0^{(k)}}_{\infty} 
  & =  
  \lg \norm{m_0^{(k-1)} \, m_1^{(k-1)} - \wt{m}_0^{(k-1)} \, \wt{m}_1^{(k-1)} }_{\infty}  \\
  & \leq -\lambda + (2^{k+2}-4)\tau + 2^{k+5} -(k+5)^2 - 7
  \end{array}
  \]
  as claimed.
 
  To estimate the overall complexity, 
  note that at the $h$th level of the tree 
  we  perform $n/2^{h}$ multiplications of polynomials of degrees at most  $2^{h-1}$
  for $h=2,\dots,k-1$.
  We can assume that we 
  perform all the computation with precision $\OO(\ell + n\tau + \lg{n})$,
  and so the overall cost of the algorithm is
  $\sum_k \frac{n}{2^k} \sOB(2^{k-1} \lg{2^{k-1}} \mu(\ell +n\tau + \lg{n}) ) =
  \sOB( n \lg^{2}{n} \, \mu(\ell +n\tau + \lg{n}) ) $.  
\end{proof}

\begin{lemma}[Complexity of Fan-out process]
  \label{lem:fan-out} 
  Let $v(x) \in \CC[x]$ of degree $n-1$ and $\normi{v} \leq 2^{\tau_1}$,
  and let $\wt{v}$ be a $\lambda$-approximation.
  Let $m_j^{(k)}$ be the supermoduli of the {\em fan-in} process and 
  $\wt{m}_{j}^{(k)}$ their $\lambda$-approximations.
  
We can compute an $\ell$-approximation of the
  {\em fan-out} process in 
  $\OB( n \, \lg^{2}{n} \, \mu( \ell + \tau_1 \lg{n} + \rho \, {n} \lg{n}))$
provided that $\lambda = \ell + 2\tau_1 \lg{n} + 300(\rho+1)n\lg{n}$.
\end{lemma}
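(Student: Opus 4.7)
The plan is to analyze the fan-out stage one level at a time, applying Theorem~\ref{thm:approx-poly-div} to each of the $\sum_h n/2^h = O(n)$ divisions. Three quantities need to be controlled as we descend the binary tree of supermoduli: (a)~the infinity norm of each intermediate remainder $r_j^{(h)}$; (b)~the absolute error $\|r_j^{(h)} - \wt{r}_j^{(h)}\|_\infty$ introduced by finite-precision computation; and (c)~the cumulative Boolean cost summed over all $k = \lceil \lg n\rceil$ levels. The required input precision $\lambda$ will then emerge by charging, at each level, the precision loss produced by Theorem~\ref{thm:approx-poly-div}.

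For the norm bound I would proceed by induction from the root $h=k$ down to the leaves $h=0$ and show that $\lg \|r_j^{(h)}\|_\infty \le \tau_1 + C n(\rho+1)$ holds uniformly in $j$ for an absolute constant $C$. The base case is $r_0^{(k)} = v$. At level $h$, the dividend $r_{\lfloor j/2\rfloor}^{(h+1)}$ has degree $<2^{h+1}$ and the divisor $m_j^{(h)}$ is a supermodulus of degree $2^h$ whose roots are a subset of the knots $\{x_i\}$, hence of modulus at most $2^\rho$. Lemma~\ref{lem:quo-rem-bd} then yields $\lg\|r_j^{(h)}\|_\infty \le \lg\|r_{\lfloor j/2\rfloor}^{(h+1)}\|_\infty + O(2^{h+1}(1+\rho))$, and telescoping this inequality over $h$ leaves a geometric sum $\sum_h 2^{h+1}$ which is $O(n)$, giving the claimed uniform bound.

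For the error I would let $\epsilon_h$ denote the worst-case precision of the $\wt{r}_j^{(h)}$ across level~$h$, with $\epsilon_k = \lambda$. Applying Theorem~\ref{thm:approx-poly-div} (in the simplified form of Remark~\ref{rem:on-rho}, which removes the dependence on the supermodulus coefficient sizes produced by Lemma~\ref{lem:fan-in}), together with the input-precision bound on the $\wt{m}_j^{(h)}$, one gets
\[
\epsilon_h \ge \epsilon_{h+1} - \lg\|r_{\lfloor j/2\rfloor}^{(h+1)}\|_\infty - 150(\rho+1)\, 2^h\, \lg 2^h - O(1).
\]
Chaining this from $h=k-1$ down to $h=0$ and substituting the norm bound from the previous paragraph, the two contributions sum to $O(\tau_1 \lg n) + 150(\rho+1) \sum_{h=0}^{k-1} 2^h h$, and the latter sum is $O((\rho+1) n \lg n)$. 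Choosing constants liberally gives $\lambda - \epsilon_0 \le 2\tau_1 \lg n + 300(\rho+1)n\lg n$, so the stated $\lambda$ forces $\epsilon_0 \ge \ell$ as required.

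Finally, for the cost, Theorem~\ref{thm:approx-poly-div} executes each level-$h$ division in $\OB(2^h \lg^2(2^h)\, \mu(\ell + \tau_1 \lg n + \rho n\lg n))$; there are $n/2^h$ such nodes, giving per-level cost $\tilde \OB(n\, \mu(\cdot))$, and summing the $k$ levels yields the announced $\OB(n\lg^2 n\, \mu(\ell + \tau_1\lg n + \rho n\lg n))$ after the $\tilde O$ absorbs the extra logarithmic factor. The main obstacle I expect is the bookkeeping in the error step: the precision loss at each level involves the norm of the parent remainder rather than of the input $v$, and one must be careful to charge the dominant $(\rho+1)2^h\lg 2^h$ contribution at level~$h$ so that the geometric sum collapses to $n\lg n$; a careless charging would produce an unwanted additional $\lg n$ factor and break the claimed value of $\lambda$.
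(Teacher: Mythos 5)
Your proposal is correct and follows essentially the same route as the paper's own (much terser) argument: induct level by level down the fan-out tree, use the fact that the knots bound the supermoduli roots by $2^{\rho}$ so that Theorem~\ref{thm:approx-poly-div} with the simplification of Remark~\ref{rem:on-rho} controls both the remainder norms (via Lemma~\ref{lem:quo-rem-bd}) and the per-level precision loss, and then sum the $n/2^h$ divisions over the $\lg n$ levels to get the stated cost. Your explicit error recursion and the warning about charging the $(\rho+1)2^h\lg 2^h$ loss per level so that the sum collapses to $O((\rho+1)n\lg n)$ simply make explicit the induction the paper asserts without detail, so there is nothing to correct.
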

\begin{proof}
  We keep assuming  for simplicity that $n = 2^k$
and proceed as in the proof of
  Lemma~\ref{lem:fan-in}.

  Recall that  $\abs{x_i} \leq 2^{\rho}$ for all the 
subscripts $i$,
  and so $2^{\rho}$ bounds the roots of all polynomials $m_j^{(k)}$.
  We can prove by induction, by
  using the bounds of Theorem~\ref{thm:approx-poly-div} 
  and the simplifications of Remark~\ref{rem:on-rho},
  that the 
precision of
  $\lambda = \ell + 2\tau_1 \lg{n} + 300(\rho+1)n\lg{n}$ bits 
is sufficient.

  At the $h$th step of the algorithm, for each $h$, 
 we perform $2^{h}$
 approximate polynomial divisions 
  of polynomials of degree $\frac{n}{2^h}$
 using Theorem~\ref{thm:approx-poly-div}.
  We 
 assume 
 performing all the operations 
with the maximum 
precision, 
  and bound the overall complexity by 
  \[
  \sum_{h=0}^{\lg{n}} 2^{h}\, \OB( \frac{n}{2^{k}} \, (\lg{\frac{n}{2^h}})^2 \,
                   \mu( \ell + \tau_1 \lg{n} + \rho \, {n} \lg{n}))
                   \]
                   \vspace{-10pt}
                   \[
  =
  \OB( n \, \lg^{2}{n} \, \mu( \ell + \tau_1 \lg{n} + \rho \, {n} \lg{n}))
  \]
\end{proof}





\section{Bounds on 
the complexity of basic algorithms}

\begin{lemma}[Multiplication of $m$ polynomials]
  \label{lem:a-mul-m-polys}
  Let $P_j \in \CC[x]$ of degree $n$ and $\normi{P_j} \leq 2^{\tau}$.
  Let $\wt{P}_j$ be $\lambda$-approximation of $P_j$ 
  with $\lambda = \ell + (4m-4)\tau + (4m+2\lg{m}-4)\lg{n} + 32m$,
  where $1 \leq j \leq m$.
  We can compute $\prod_j\wt{P}_j$ such that 
  $\normi{ \prod_j P_j - \prod_j \wt{P}_j } \leq 2^{-\ell}$
  in $\OB( m \, n \, \lg{m} \lg(m \, n) \, \mu(\lambda))$
  or
  $\sOB(m \, n \,(\ell + m \tau))$.
  Moreover, 
  $\lg\normi{\prod_j P_j} \leq m\tau + (m-1)\lg{n} + 4m - \lg{m} - 4$.
\end{lemma}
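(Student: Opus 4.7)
\medskip
\noindent\textbf{Proof plan.} The strategy is to compute $\prod_j \wt P_j$ via a balanced product tree, exactly analogous to the Fan--in stage of the Moenck--Borodin algorithm analyzed in Lemma~\ref{lem:fan-in}, but with leaves of degree $n$ (rather than $1$). Without loss of generality assume $m=2^k$; otherwise pad the list with copies of the constant polynomial $1$, which affects none of the bounds. At level $h=0,1,\dots,k$ of the tree we store $m/2^h$ polynomials, each the product of $2^h$ consecutive $P_j$'s and hence of degree $n\cdot 2^h$. Every internal node is realized by one invocation of the approximate FFT multiplication of Lemma~\ref{lem:a-poly-mult}, applied to the two children computed at the previous level.

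\medskip
\noindent\textbf{Norm bound.} Let $T_h$ be the bound on $\lg \normi{\cdot}$ for the polynomials at level $h$. The base case is $T_0=\tau$. For the inductive step, Lemma~\ref{lem:a-poly-mult} applied to two polynomials of degree $n\cdot 2^h$ gives $T_{h+1}\le 2T_h+2\lg K_h$ with $K_h=2^{k_h}\ge 2n\cdot 2^h+1$, so $\lg K_h\le \lg n+h+2$. Solving the recurrence $T_{h+1}\le 2T_h+2\lg n+2h+4$ by the standard telescoping $T_k\le 2^k T_0+\sum_{i=0}^{k-1}2^{k-1-i}(2\lg n+2i+4)$ and using $\sum_{i=0}^{k-1} i\,2^{k-1-i}=2^k-k-1$ yields a bound of the claimed form $m\tau+(m-1)\lg n+4m-\lg m-4$ (after the minor numerical tightening implicit in $\lg(2d+1)\le 1+\lg d$ when $d\ge 1$). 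I would simply carry out this telescoping and check that the constants match.

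\medskip
\noindent\textbf{Error propagation.} This is the only step that needs care. Write $Q_v$ for the exact polynomial at node $v$ and $\wt Q_v$ for the computed one, and let $\varepsilon_h:=\max_v \lg\normi{Q_v-\wt Q_v}$ at level $h$. At an internal node with children $L,R$, the standard split
\[
Q_L Q_R-\wt Q_L\wt Q_R \;=\; (Q_L-\wt Q_L)Q_R+\wt Q_L(Q_R-\wt Q_R)
\]
combined with Lemma~\ref{lem:a-poly-mult} (which bounds both the new multiplication error and the propagated error by a factor controlled by $T_h$ and $\lg K_h$) gives a recurrence of the form $\varepsilon_{h+1}\le \varepsilon_h+T_h+O(\lg n+h)$, provided the working precision at every internal multiplication is at least $\lambda$ as in Lemma~\ref{lem:a-poly-mult}. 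Summing this recurrence from $h=0$ to $h=k-1$ and absorbing the $T_h$ contribution (which is geometric, dominated by its last term $T_{k-1}$) yields an overall error bound $\varepsilon_k\le -\lambda+(4m-4)\tau+(4m+2\lg m-4)\lg n+32m$. Choosing the stated $\lambda=\ell+(4m-4)\tau+(4m+2\lg m-4)\lg n+32m$ therefore guarantees $\varepsilon_k\le -\ell$, which is the claimed output accuracy. The main obstacle here is just bookkeeping of the constants; the structural bound is immediate from the two-term triangle inequality and Lemma~\ref{lem:a-poly-mult}.

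\medskip
\noindent\textbf{Complexity.} At level $h$ we perform $m/2^h$ multiplications of polynomials of degree $n\cdot 2^h$. By Lemma~\ref{lem:a-poly-mult}, since we work uniformly at the worst-case precision $\lambda$, each such multiplication costs $\OB(n\cdot 2^h\,\lg(n\cdot 2^h)\,\mu(\lambda))$. Summing,
\[
\sum_{h=0}^{k-1}\frac{m}{2^h}\cdot \OB\!\bigl(n\cdot 2^h\,\lg(n\cdot 2^h)\,\mu(\lambda)\bigr)
\;=\;\OB\!\bigl(mn\,\mu(\lambda)\sum_{h=0}^{k-1}(\lg n+h)\bigr)
\;=\;\OB\!\bigl(mn\,\lg m\,\lg(mn)\,\mu(\lambda)\bigr),
\]
which, after substituting the value of $\lambda$ and hiding logarithmic factors, collapses to $\sOB\!\bigl(mn(\ell+m\tau)\bigr)$, as claimed.
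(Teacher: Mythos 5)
Your proposal is correct and follows essentially the same route as the paper's own proof: a balanced (Fan-in style) product tree whose internal nodes are approximate multiplications via Lemma~\ref{lem:a-poly-mult}, with the norm and error bounds propagated level by level by induction/telescoping, uniform worst-case working precision, and the complexity obtained by summing $\frac{m}{2^h}\cdot\OB\bigl(n2^h\lg(n2^h)\,\mu(\lambda)\bigr)$ over the levels. The only difference is bookkeeping: to recover the stated explicit constants (e.g.\ $(m-1)\lg n$ rather than the $2(m-1)\lg n$ your $2\lg K_h$ recurrence yields) you need the sharper per-product growth $\lg\normi{AB}\leq\lg\normi{A}+\lg\normi{B}+\lg(d+1)$, exactly as the paper's induction implicitly does, but this does not affect the structure or the asymptotic claims.
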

\begin{proof}
  The algorithm is similar to the Fan-in process of Moenck-Borodin algorithm.
  Let 
  $p_j^{(0)} = P_j$ and compute recursively the polynomials
  $p_j^{(h+1)} = p_{2j}^{(h)} p_{2j+1}^{(h)}$, 
  for $0 \leq j \leq 2^{k-h}$, $h=0, \dots, k-2$.
  Let $m = 2^{h}$.
  We prove the bounds on the infinite norm and the approximation using
  induction on $h$.
  
  For $h=1$, we compute the polynomials $p_j^{(1)} = p_{2j}^{(0)} p_{2j+1}^{(0)}$.
  Wlog assume that $j=0$. 
  Then $\lg \normi{p_{j}^{(0)}} \leq {\tau}$
  and $\lg\normi{p_j^{(0)} - \wt{p}_j^{(0)}} \leq -\lambda$.
  Apply Lemma~\ref{lem:a-poly-mult} (and Remark~\ref{rem:a-poly-mult})
  for $K=n$ and $\tau_1=\tau_2=\tau$
  to deduce that  
  \[
  \lg\normi{p_0^{(1)}} \leq 2\tau + \lg{n} + 3
  \enspace ,
  \]
  which agrees with our formula,
  and
  \[
  \lg\normi{p_j^{(1)} - \wt{p}_j^{(1)}} \leq -\lambda + 4\tau + 
  5.1\lg{n} + 4 \leq  -\lambda + 4\tau + 6\lg{n} + 64 
  \enspace ,
  \]
  where the right hand-side represents the claimed bound for $k=1$.

  Assume 
the claimed bounds
for $h-1$, that is
  \[
  \lg\normi{p_j^{(h-1)}} \leq 2^{h-1} \tau + (2^{h-1}-1)\lg{n} + 4\cdot2^{h-1} - \lg{2^{h-1}} - 4
  \]
  and
  $
  \lg\normi{p_j^{(h-1)} - \wt{p}_j^{(h-1)}} 
  \leq -\lambda + (4 \cdot 2^{h-1}-4)\tau + (4 \cdot2^{h-1}+2\lg{2^{h-1}}-4)\lg{n} + 32\cdot2^{h-1}
  $
  for $j=0,1$.  By applying Lemma~\ref{lem:a-poly-mult}  
  for $2lg(K)\le \lg (n)$,
  deduce
  the following bounds,
  \[
  \lg\normi{p_j^{(h)}} \leq 2^{h} \tau + (2^{h}-1)\lg{n} + 4\cdot2^{h} - \lg{2^{h}} - 4,
  \]
  which agrees with 
  the claimed norm bound, and 
  $
  \lg\normi{p_j^{(h)} - \wt{p}_j^{(h)}} 
  \leq -\lambda + (4 \cdot 2^{h}-4)\tau + (4 \cdot2^{h}+2\lg{2^{h}}-4)\lg{n} + 24\cdot2^{h}+2h - 12
  $
  which is smaller than the claimed bound on the precision.

To estimate the overall complexity 
note that at each level, $h$, of the tree 
we have to perform $m/2^{h}$ multiplications of polynomials of degrees
at most 
$2^{h-1} \,n $.
We can assume that we 
perform all the computations with 
the precision 
$\lambda + (4m-4)\tau + (4m+2\lg{m}-4)\lg{n} + 32m$,
or 
$\OO(\ell + m\tau + m\lg{n})$,
and so the overall Boolean 
cost of 
performing the algorithm is
$
\sum_h \frac{m}{2^h} \OB( 2^{h-1} \, n  \lg{(2^{h-1} \, n)} \mu(\ell +n\tau + m\lg{n}) ) =
\OB( m\,n\, \lg{m} \lg(m\,n) \, \mu(\ell +n\tau + m\lg{n}) ) ~,
$
which concludes the proof.
\end{proof}

If the degrees of $P_j$ 
vary as $j$ varies,
then we can apply 
a more pedantic analysis based on 
Huffman trees, 
see
\cite{kirrinnis-joc-1998}.

The problem of computing (approximately) the sum of rational 
functions
reduces to the problem on multiplying polynomials,
which admits the same  asymptotic 
complexity bounds. 
To estimate the overhead 
constants,
we should also take into account 
the polynomial additions
involved.

We have the following lemma.
\begin{lemma}[Sum of rational functions]
  \label{lem:a-sum-rat-func}
  Suppose $P_j \in \CC[x]$ has degree $n$, 
  $Q_j \in \CC[x]$ has a smaller degree, 
  $\normi{P_j} \leq 2^{\tau_2}$, and $\normi{Q_j} \leq 2^{\tau_1}$.
  
  Assume $\lambda$-approximations of $P_j$ by  $\wt{P}_j$ and   
  of $Q_j$ by $\wt{Q}_j$
  where $\lambda = \ell + \tau_1 + (4m-4)\tau_2 + (5m+2\lg{m}-4)\lg{n} + 32m$
  and $1 \leq j \leq m$.

  Let $\frac{Q}{P} = \sum_j\frac{Q_j}{P_j}$. 
  We can compute 
  an $\ell$-approximation of $Q/P$,
  in $\OB( m \, n \, \lg{m} \lg(m \, n) \, \mu(\lambda))$
  or
  $\sOB(m \, n \,(\ell + +\tau_1 + m \tau_2))$.
  Moreover, 
  $\lg\normi{Q} \leq \tau_2 + (m-1)(\tau_2 + \lg{n}) + 5m - \lg{m} - 4$ and 
  $\lg\normi{P} \leq m\,\tau_2 + (m-1)\lg{n} + 4 m - \lg{m} - 4$.
\end{lemma}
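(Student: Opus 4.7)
The plan is to carry out the same binary-tree merging scheme that was used for Lemma~\ref{lem:a-mul-m-polys}, except that the data flowing along the tree is now a pair $(Q^{(h)}_j,P^{(h)}_j)$ representing a rational function. Assume for simplicity that $m=2^k$. At level $0$ we put $(Q^{(0)}_j,P^{(0)}_j)=(Q_j,P_j)$. At level $h+1$ we define
\[
P^{(h+1)}_j=P^{(h)}_{2j}P^{(h)}_{2j+1},\qquad
Q^{(h+1)}_j=Q^{(h)}_{2j}P^{(h)}_{2j+1}+Q^{(h)}_{2j+1}P^{(h)}_{2j},
\]
so that $Q^{(h+1)}_j/P^{(h+1)}_j=Q^{(h)}_{2j}/P^{(h)}_{2j}+Q^{(h)}_{2j+1}/P^{(h)}_{2j+1}$. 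After $k=\lg m$ levels, $(Q,P)=(Q^{(k)}_0,P^{(k)}_0)$ is the desired output.

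The denominators $P^{(h)}_j$ are handled by a direct appeal to Lemma~\ref{lem:a-mul-m-polys}: the bound $\lg\normi{P}\le m\tau_2+(m-1)\lg n+4m-\lg m-4$ is inherited verbatim, and the contribution of the denominator subtree to the overall computation matches the complexity and precision required there. For the numerators I would induct on $h$ in parallel, applying Lemma~\ref{lem:a-poly-mult} to each of the two products $Q^{(h)}_{2j}P^{(h)}_{2j+1}$ and $Q^{(h)}_{2j+1}P^{(h)}_{2j}$ and then using $\normi{A+B}\le 2\max(\normi{A},\normi{B})$ to account for the sum. Using the inductive norm bound on $P^{(h)}$ and the fact that $Q^{(h)}$ has a single $\tau_1$ factor (the original numerator) plus $(2^h-1)$ factors of $\tau_2$, the accumulated exponent after $k=\lg m$ merges collapses to
\[
\lg\normi{Q}\le \tau_1+(m-1)(\tau_2+\lg n)+5m-\lg m-4,
\]
which matches the statement; the single added unit at each level because of the sum produces an extra $\lg m$, absorbed in the $O(m)$ slack.

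The precision analysis is the crux, and the main obstacle will be keeping the coefficients straight so the final bound comes out as claimed. I would carry an inductive hypothesis of the form $\lg\normi{Q^{(h)}_j-\wt Q^{(h)}_j}\le -\lambda+f(h)$ and $\lg\normi{P^{(h)}_j-\wt P^{(h)}_j}\le -\lambda+g(h)$, where $g$ is exactly the function established in Lemma~\ref{lem:a-mul-m-polys}. When Lemma~\ref{lem:a-poly-mult} is applied to compute $\wt Q^{(h)}_{2j}\wt P^{(h)}_{2j+1}$, the induced error scales roughly as $\normi{P^{(h)}_{2j+1}}$ times the error in $Q^{(h)}_{2j}$ and vice versa; these two product errors are then summed, picking up a factor of $2$, and the recursion for $f$ becomes $f(h+1)=f(h)+O(2^h\tau_2+2^h\lg n)$. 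Summing the geometric series over $h=0,\ldots,k-1$ and folding in the extra $\lg n$ per level that comes from the FFT evaluation gives exactly $\lambda=\ell+\tau_1+(4m-4)\tau_2+(5m+2\lg m-4)\lg n+32m$.

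Finally, the cost. At level $h$ we perform $m/2^h$ multiplications and additions of polynomials of degree at most $2^{h-1}n$, all carried out at the maximum working precision $\lambda$. Lemma~\ref{lem:a-poly-mult} gives $\OB(2^{h-1}n\lg(2^{h-1}n)\mu(\lambda))$ per product, so the total bit cost is
\[
\sum_{h=1}^{k}\frac{m}{2^h}\,\OB\!\left(2^{h-1}n\lg(2^{h-1}n)\,\mu(\lambda)\right)
=\OB(m\,n\,\lg m\,\lg(mn)\,\mu(\lambda)),
\]
which in soft-Oh notation is $\sOB(mn(\ell+\tau_1+m\tau_2))$. The additions contribute only $O(mn)$ word operations at precision $\lambda$ per level and are dominated by the multiplications, which completes the estimate.
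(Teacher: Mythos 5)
Your proposal follows essentially the same route the paper intends: the paper gives no separate proof of this lemma, only the remark that summing rational functions reduces to the binary-tree ``fan-in'' multiplication analyzed in Lemma~\ref{lem:a-mul-m-polys}, with the numerator updates handled by Lemma~\ref{lem:a-poly-mult}, which is precisely your merging scheme, and your norm, precision, and cost accounting mirrors that analysis. One minor note: your numerator bound carries $\tau_1$ where the stated lemma writes $\tau_2$; your version is the one consistent with the hypotheses (one factor $\normi{Q_j}\le 2^{\tau_1}$ and $m-1$ factors bounded by $2^{\tau_2}$), so the discrepancy lies in the paper's statement rather than in your argument.
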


\begin{lemma}[Modular representation]
  \label{lem:rem-m-polys}
  Let $F \in \CC[x]$ of degree $2 m n$ and $\normi{F} \leq 2^{\tau_1}$.
  Let $P_j \in \CC[x]$ of degree $n$, for $ 1 \leq j \leq m$.
  Moreover, $2^{\rho}$ be an upper bound on the magnitude of the roots of all $P_j$,
  for all $j$.
  Assume $\lambda$-approximations of $F$ by  $\wt{F}$ and   
  of $P_j$ by $\wt{P}_j$
  such that 
  $\normi{F - \wt{F}} \leq 2^{-\lambda}$
  and 
  $\normi{P_j - \wt{P}_j} \leq 2^{-\lambda}$.

  Furthermore assume that
  $\lambda= \ell+ \tau_1 \lg{m} + 60 \, n \, m (\rho + 3) \,lg(m\,n) + 60 \lg{m} \lg^2(m+n)$
  or $\lambda = \ell + \OO(\tau_1\lg{m} + m \, n \, \rho)$.
  Then we can compute 
  an $\ell$-approximation $\wt{F}_j$ of $F_j = F \mod P_j$
  such that $\normi{F_j - \wt{F}_j} \leq 2^{-\ell}$
  in 
  $$ \OB( m \, n \, \lg{n} \, \lg^{2}(m+n) \, \mu( \ell + \tau_1\lg{m} + m\,n\,\rho))$$
  or $\sOB(m \, n \,(\ell + \tau_1 + m\,n\,\rho))$.

  Moreover, 
  $\lg\normi{F_j} \leq \tau_1 + (\rho+1)\,m\,n + n +\lg(m\,n)$.
\end{lemma}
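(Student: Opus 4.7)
The plan is to mimic the Moenck--Borodin multipoint evaluation algorithm, but now using Theorem~\ref{thm:approx-poly-div} (in the $\rho$-refined form of Remark~\ref{rem:on-rho}) at every division so that the complexity depends on the root bound $\rho$ rather than on the coefficient norms of the divisors. The computation is organized around a binary tree with $P_1,\ldots,P_m$ at the leaves and $k=\lg m$ levels.

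\textbf{Fan-in.} First I build supermoduli $M_j^{(h)}=M_{2j}^{(h-1)}\,M_{2j+1}^{(h-1)}$ with $M_j^{(0)}=P_j$, so that $M_j^{(h)}$ has degree $2^h n$ and $M_0^{(k)}=\prod_j P_j$ has degree $mn$. Since every $P_j$ has all its roots in $|z|\le 2^{\rho}$, so does every supermodulus, and Vieta's formulae (as in Remark~\ref{rem:on-rho}) give $\lg\normi{M_j^{(h)}}=O(2^h n(\rho+1))$. I invoke Lemma~\ref{lem:a-mul-m-polys}, whose algorithm is precisely this binary-tree product and which returns all intermediate $M_j^{(h)}$ as a byproduct; the precision it requires is absorbed in the global precision $\lambda$ of the statement.

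\textbf{Fan-out.} Set $F^{(k)}:=F$ and, for $h=k-1,k-2,\ldots,0$, compute $F_j^{(h)}=F_{\lfloor j/2\rfloor}^{(h+1)}\bmod M_j^{(h)}$; by Fact~\ref{fact3.1.4}, $F_j^{(0)}=F\bmod P_j=F_j$, the desired output. At level $h$ the dividend has degree $<2\cdot 2^h n$ and the divisor has degree $2^h n$ with roots bounded by $2^{\rho}$, matching the two-to-one degree hypothesis of Theorem~\ref{thm:approx-poly-div}. By that theorem combined with Remark~\ref{rem:on-rho}, a single division at level $h$ costs $\sOB(2^h n\,\ell+2^h n\,\tau_{1,h}+(2^h n)^2 \rho)$ bit operations, where $\tau_{1,h}$ bounds $\lg\normi{F_j^{(h+1)}}$. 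Iterating Lemma~\ref{lem:quo-rem-bd} along the $\lg m$ levels yields $\tau_{1,h}\le \tau_1+O(mn(\rho+1))$, and in particular $\lg\normi{F_j}\le \tau_1+(\rho+1)mn+n+\lg(mn)$ as claimed. Summing $m/2^h$ divisions per level over $h=0,\ldots,k$ gives the total $\sOB(mn(\ell+\tau_1+mn\rho))$.

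\textbf{Main obstacle.} The delicate part is tracking how precision propagates through the $\lg m$ fan-out levels: Theorem~\ref{thm:approx-poly-div} loses at level $h$ an additive amount of order $\tau_{1,h}+2^h n(\rho+1)\lg(2^h n)+\lg^2(2^h n)$ bits. Telescoping over the levels absorbs the $\tau_1$-contributions into the $\tau_1\lg m$ term of $\lambda$, the root-bound contributions into $60\,nm(\rho+3)\lg(mn)$ (a crude upper bound for $\sum_{h=0}^{k}2^h n(\rho+1)\lg(2^h n)$), and the polylog overhead into $60\lg m\lg^2(m+n)$. The maximum working precision $\ell+\tau_1\lg m+O(mn\rho)$ then enters $\mu(\cdot)$ to yield the stated $\OB$ bound, while the factor $\lg n\,\lg^2(m+n)$ comes from the $\lg^2$ factor in Theorem~\ref{thm:approx-poly-div} summed over the tree.
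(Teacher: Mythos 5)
Your proposal is correct and follows essentially the same route as the paper: a fan-in of the supermoduli via Lemma~\ref{lem:a-mul-m-polys} with norms controlled through Remark~\ref{rem:on-rho}, followed by a fan-out of repeated approximate divisions via Theorem~\ref{thm:approx-poly-div}, with the precision losses telescoped over the $\lg m$ levels into the stated $\lambda$ and the per-level costs summed over the tree. The only cosmetic difference is that the paper's fan-out effectively starts by reducing the degree-$2mn$ polynomial $F$ modulo the full product $\prod_j P_j$ of degree $mn$ (so the two-to-one degree hypothesis of Theorem~\ref{thm:approx-poly-div} also holds at the root), a detail your level indexing glosses over but which does not affect the stated bounds.
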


\begin{proof}
  First we perform the Fan-in process 
  with polynomials $P_j$ using the algorithm of Lemma~\ref{lem:a-mul-m-polys}.
  Assume that $\normi{P_j} \leq 2^{\tau_2}$
  and that we are given $\lambda_1$-approximations.
  Following Lemma~\ref{lem:a-mul-m-polys} we compute all the supermoduli,
  $P_j^{(i)}$ so that 
  \[
  \lg\normi{P_j^{(i)} - \wt{P}_j^{(i)}}
  \leq -\lambda_1 +  (4m-4)\tau_2 + (4m+2\lg{m}-4)\lg{n} + 32m~.
  \]
  Remark~\ref{rem:on-rho} implies that
  $\tau_2 \leq 2n + n\rho$,
  and so
  $
  \lg\normi{P_j^{(i)} - \wt{P}_j^{(i)}}
  \leq -\lambda + (4m-4)(2n + \lg{n} + n\rho) + 2\lg{m}\lg{n} + 32m
  = -\lambda + \OO(m \,n\, \rho)~.
  $

  For computing $\ell$-approximations of $F_j = F \mod P_j$
  we mimic the procedure of the Fan-out process.
  This means that we apply repeatedly 
  Theorem~\ref{thm:approx-poly-div}, 
  which we can refine by following Remark~\ref{rem:on-rho}. 
  The bounds accumulate at each step, 
  and so 
  \[
  \begin{array}{ll}
  \lg\normi{F_j} & \leq \tau_1 + \sum_h 3\, n \, 2^{h} + n + h + 2^h n \rho + 1 \\
  & \leq (m n - n)(\rho +3) + m(m+n) .
  \end{array}
  \]

  We assume that we are given $\lambda_2$-approximation of $F$ 
  and all the supermoduli. 
  For the required precision we have
  {\scriptsize
  \[
  \begin{aligned}
  \lg\normi{F_j - \wt{F}_j} 
  & \leq -\lambda_2 + \sum_h 25(\rho+2)(h + \lg{n}) n 2^{h} + 80(h + \lg{n})^2 + \tau_1 + 30 \\
  & \leq -\lambda_2 + \tau_1 \lg{m} + 25 \, n \, m (\rho + 2) \,lg(m\,n) + 40 \lg{m} \lg^2(m+n) .
  \end{aligned} 
  \]
  }

  To ensure an $\ell$-approximation for $F_j$ 
  we require 
  $
  \lambda= \ell+ \tau_1 \lg{m} + 60 \, n \, m (\rho + 3) \,lg(m\,n) + 60 \lg{m} \lg^2(m+n)
  = \ell + \OO(\tau_1\lg{m} + m \, n \, \rho)
  $
  approximations of the input to ensure the validity of both the Fan-in and Fan-out process.

  We assume that we perform all the computations with maximum accuracy. 
  The complexity of computing the super-moduli is 
  $\OB( m \, n \, \lg{m} \lg(m \, n) \, \mu(\ell + \tau_1\lg{m} + m\,n\,\rho))$
  or
  $\sOB(m \, n \,(\ell + \tau_1 + m\,n\,\rho))$.

  For the complexity of the Fan-out process we proceed as follows.
  At each step, $h$, of the algorithm we perform $2^{h}$ approximate polynomial divisions 
  of polynomials of degree $\frac{m \,n}{2^h}$ using Theorem~\ref{thm:approx-poly-div}.
  The overall complexity is\\
  $
  \sum_{h=0}^{\lg{m}} 2^{h}\, \OB( \frac{m n}{2^{h}} \, (\lg{\frac{m n}{2^h}})^2 \,
                   \mu( \ell + \tau_1\lg{m} + m\,n\, \rho))$
                   \\
                   $= \OB( m \, n \, \lg{n} \, \lg^{2}(m+n) \, \mu( \ell + \tau_1\lg{m} + m\,n\,\rho))
  $
  or $\sOB(m \, n \,(\ell + \tau_1 + m\,n\,\rho))$.

\end{proof}


\section{Lagrange Interpolation}

\begin{problem}\label{prob:Lagrange-interpolation} 
  {\bf Lagrange polynomial interpolation.} Given the 
  knot set (or
  vector) $\{x_i\}_{i=0}^{n-1}$ of $n$ distinct points
  $x_0,\ldots,x_{n-1}$ and the set (or vector) of values
  $\{y_i\}_{i=0}^{n-1}$, compute a set (or vector)
  $\{a_j\}_{j=0}^{n-1}$ such that $\sum_{j=0}^{n-1}a_jx_i^j=y_i$,
  $i=0,1,\ldots,n-1$, that is, recover the coefficients of a
  polynomial $A(X)=\sum_{j=0}^{n-1}a_j X^j$ from its values at $n$
  distinct points $x_0,\ldots,x_{n-1}$.
\end{problem}

We follow the approach presented in \cite[Section~3.3]{Pan01}, 
to which
we also refer for a detailed presentation.

\begin{lemma}
  \label{lem:lagrange}
  Let $\abs{x_i} \leq 2^{\tau_1}$, $\abs{y_i} \leq 2^{\tau_2}$, 
  and $\Delta_i(x) = \min_j\abs{x_i - x_j}$,
  for all $0 \leq i \leq n-1$.
Assume $\lambda$-approximations of $x_i$ and $y_i$, 
  where   
  $\lambda = \ell + 68 n (\tau_1+3) \lg{n} + 4 n \tau_2 - 6\lg\prod_i\Delta_i(x) + 50 n + 60\lg^{3}{n} + 20$
  or $\lambda = \ell + \OO(n\tau_1\lg{n} + n\tau_2 -\lg\prod_i\Delta_i(x) + \lg^{3}{n})$.
Then  we can compute 
  an $\ell$-approximation of the Lagrange polynomial interpolation
  in $\OB( n \, \lg^2{n} \, \mu(\lambda) )$
  or $\sOB(  n^2\tau_1 + n^2\tau_2 - n \lg\prod_i\Delta_i(x))) $.
\end{lemma}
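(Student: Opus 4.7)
The plan is to realize Lagrange interpolation through the master polynomial
$M(x) = \prod_{i=0}^{n-1}(x - x_i)$ using the identity
\[
A(x) \;=\; M(x)\sum_{i=0}^{n-1}\frac{c_i}{x - x_i},\qquad c_i \;=\; \frac{y_i}{M'(x_i)},
\]
since $M'(x_i) = \prod_{j\neq i}(x_i - x_j)$. The algorithm has five stages, each of which is already covered by a previous result, and the analysis is essentially the bookkeeping of the precision loss chained through them.

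First I would compute an approximation of $M(x)$ by invoking Lemma~\ref{lem:a-mul-m-polys} on the $n$ linear factors $x - x_i$ (i.e.\ with $m = n$, $n = 1$ and $\tau$ of that lemma equal to $\tau_1$). This gives $\lg\normi{M} = \OO(n\tau_1)$ and the precision cost $\OO(n\tau_1 + n\lg n)$. Differentiating term-by-term produces $M'(x)$ while inflating both the norm and the approximation error by a factor of at most $n$. I would then perform multipoint evaluation of $M'$ at the points $x_0,\ldots,x_{n-1}$ by applying Lemma~\ref{lem:rem-m-polys} to $F = M'$ with the linear moduli $x - x_i$; here $\rho$ in Lemma~\ref{lem:rem-m-polys} becomes $\tau_1$ and $m = n$, $n = 1$ of that lemma, producing $\sOB(n^2\tau_1 + n\ell)$-style bounds that match those targeted here.

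The delicate step four is to approximate $c_i = y_i/M'(x_i)$. A lower bound on $|M'(x_i)|$ is needed in order to use the $\wid(1/I)$ estimate of Proposition~\ref{prop:complex-intvl}. Since
\[
|M'(x_i)| \;=\; \prod_{j\neq i}|x_i - x_j|,\qquad \prod_{i=0}^{n-1}|M'(x_i)| \;=\; \prod_{i\neq j}|x_i - x_j|,
\]
the aggregate loss of precision incurred across all $n$ divisions is bounded by a constant times $-\lg\prod_i\Delta_i(x)$, which is precisely the term appearing in the stated~$\lambda$; individual losses propagate with only the usual factors in $\tau_1,\tau_2,\lg n$. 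This is the part that forces the $-6\lg\prod_i\Delta_i(x)$ contribution in $\lambda$ and is where I expect the main obstacle to lie: making the per-point precision requirements line up with an aggregate bound, and controlling the constants so that the overall $\lambda$ in the lemma statement is indeed sufficient.

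Having the approximate $c_i$, I would apply Lemma~\ref{lem:a-sum-rat-func} to form $\sum_i \frac{c_i}{x - x_i}$ as a quotient $Q/P$; by construction $P = M$ up to the approximation, so the numerator $Q$ is already the desired approximation of $A$, and no further multiplication by $M$ is needed. Combining the precision requirements of Lemmas~\ref{lem:a-mul-m-polys}, \ref{lem:rem-m-polys}, and~\ref{lem:a-sum-rat-func} with the division-induced loss bounded by $\OO(-\lg\prod_i\Delta_i(x))$ yields the stated $\lambda$, and summing the Boolean costs (each of which is $\sOB(n)$ times one multiplication in precision $\lambda$) gives the claimed $\OB(n\lg^2 n\cdot \mu(\lambda))$ bound, equivalently $\sOB(n^2\tau_1 + n^2\tau_2 - n\lg\prod_i\Delta_i(x))$.
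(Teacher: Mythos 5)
Your proposal follows essentially the same route as the paper's own proof: build $B(X)=\prod_i(X-x_i)$ by a fan-in product, differentiate, evaluate $B'$ at the nodes via Lemma~\ref{lem:rem-m-polys}, form $y_i/B'(x_i)$ using the lower bound $\abs{B'(x_i)}\geq\prod_{j\neq i}\Delta_j(x)$ together with the interval-arithmetic estimates of Proposition~\ref{prop:complex-intvl}, and finally sum the rational functions by Lemma~\ref{lem:a-sum-rat-func}, reading off the interpolant as the numerator. The only differences are cosmetic (you invoke Lemma~\ref{lem:a-mul-m-polys} instead of Lemma~\ref{lem:fan-in} for the product, and you leave the constant bookkeeping for $\lambda$ implicit), so the argument matches the paper's.
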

\begin{proof}
  The input is given as 
  $\lambda$-approximations, where $\lambda$ is
  to be specified in the sequel.
  We track the loss of accuracy at each step of the algorithm.
  
  \begin{enumerate}[leftmargin=5pt,rightmargin=3pt]
  \item Compute $B(X) = \prod_{i}(X - x_i) = X^n + \sum_{k=0}^{n-1}b_k X^k$.
    
    For this task we apply Lemma~\ref{lem:fan-in}.
    In this case the infinite norm of $B$ is bounded as follows,
    $\lg\normi{B} \leq n \tau_1 + 8n - 2\lg{n} - 4$,
    and the computed approximation, $\wt{B}$, is such that  
    $\lg\normi{B~-~\wt{B}} \leq -\lambda_0 + (4n - 4)\tau_1 + 16 n + 20$.
    As by-product we can compute the ``supermoduli"
    $\prod_j(x-x_j)$ and then reuse them at stage L5.

\item Compute $B'(X)$.
  
  This operation increases
  the norm and the precision 
  bounds
  by a factor of $n$ in the worst case.
  That is 
  $\lg\normi{B'} \leq n\tau_1 + 8n - \lg{n} -4$
  and
  $\lg{\normi{B' - \wt{B'}}} \leq -\lambda_0 + (4n - 4)\tau_1 + 16 n + \lg{n}+ 20 = -\lambda_1$.

\item Evaluate $B'$ at all 
  points $x_i$.
 
  Perform this task using Lemma~\ref{lem:rem-m-polys}. 
  This lemma implies that 
  $\lg\abs{B'(x_i)} \leq (n-1)(\tau_1 + 3) + n(n+1)$.
  However, in this special case 
  we can decrease the bound
  as follows,
  $
  \abs{B'(x_i)} \leq \sum_{j}\normi{B'} \, \abs{x_i}^{n-1} 
  \leq \sum_{j} 2^{n\tau_1 + 8n -\lg{n} - 4} \, 2^{(n-1) \tau_1}
  \leq 2^{(2n-1)\tau_1 + 8n - 4}
  \enspace .
  $

  We achieve the accuracy 
  $\lg\abs{B'(x_i) - \wt{B'}(\wt{x}_i)} \leq -\lambda_1 + (n\tau_1 + n -\lg{n} - 1)\lg{n} 
  + 60n(\tau_1+3)\lg{n} + 60 \lg^3{n} = -\lambda_2~.$ 

\item Consider the rational functions
  $A_i(X) = \frac{A_{i,0}(X)}{A_{i,1}(X)} = \frac{y_i/B'(x_i)}{(X - x_i)}$.
  
  Deduce that $\lg{\normi{A_{i,1}}} \leq \tau_1$,
  and 
  so the approximation bound matches that of $x_i$.
  
  To compute the relevant quantities of the numerator(s)
  we need a lower bound for  $B'(x_i)$, for all $i$. 
  We notice that 
  $B'(X) = \sum_{i=1}^{n}\prod_{j \not= i}(X - x_j)$.
  Thus $B'(x_i) = \prod_{j \not= i}(x_i - x_j)$ and so
  $\abs{B'(x_i)} \geq  \prod_{j \not= i}\Delta_j(x)$,
  and
  $\lg \normi{A_{i,0}} \leq  \tau_2 - \lg\prod_{j\not= i}\Delta_j(x)
  \leq  \tau_2 - \lg\prod_{j}\Delta_j(x)$.
  
  For computing an approximation of the denominator we rely on (complex) interval arithmetic.
  That is
  $\abs{A_{i,0} - \wt{A}_{i,0}} \leq \wid([A_{i,0}]) = \wid([y_i/B'(x_i)])$.
  
  We compute $ \wid([y_i/B'(x_i)])$ based on Prop.~\ref{prop:complex-intvl}, 
  and so 
  $
  \lg \wid([1/B'(x_i)]) 
  \leq -\lambda_2 - 4\lg\prod_j\Delta_j(x) + 2(2n-1)\tau_2 + 2n - 8 + 3 = -\lambda_3
  $.
  Finally 
  {\scriptsize
  \[
  \begin{array}{lll}
  \wid([y_i/B'(x_i)]) 
  & \leq 2^{\tau_2} \, \wid([1/B'(x_i)])  + 2^{-\lg\prod_j\Delta_j(x)} \, \wid([y_i]) \\
  & \leq 2^{-\lambda_3 + \tau_2 -\lg\prod_j\Delta_j(x)} 
  \leq 2^{-\lambda_4}
  \end{array}
  \]
  }

\item Compute the sum of the rational functions ie 
  $\frac{A_0(X)}{A_1(X)} = \sum_i\frac{A_{i,0}(X)}{A_{i,1}(X)}$.
  
  Using Lemma~\ref{lem:a-sum-rat-func}
  we get
  $\lg\normi{A_0} \leq \tau_2 - \lg\prod_{j}\Delta_j(x) + (n-1)\tau_1 + 4n - \lg{n} - 4$
  and
  $\lg\normi{A_1} \leq n\tau_1 + 4n - \lg{n} - 4$.
  
  For the approximation we have that 
  $\lg\normi{A_{0} - \wt{A}_{0}} 
  \leq -\lambda_4 +\tau_2  -\lg\prod_{j}\Delta_j(x) + (4n-4)\tau_1 + 32n$.
  If we substitute the various values for $\lambda_i$ we have
  $\lg\normi{A_{0} - \wt{A}_{0}} 
  \leq -\lambda + 68 n (\tau_1+3) \lg{n} + 4 n \tau_2 - 6\lg\prod_j\Delta_j(x) + 50 n + 60\lg^{3}{n} + 20$.
\end{enumerate}

The numerator, $A_0$, is the required polynomial $A(X)$.
To achieve an $\ell$-approximation of $A(x)$ we assume 
that we perform all the computations using the maximum precision,
that is $\ell + 68 n (\tau_1+3) \lg{n} + 4 n \tau_2 - 6\lg\prod_j\Delta_j(x) + 50 n + 60\lg^{3}{n} + 20$
or $\lambda = \ell + \OO(n\tau_1\lg{n} + n\tau_2 -\lg\prod_j\Delta_j(x) + \lg^{3}{n})$.

The  overall complexity is
$\OB( n \, \lg^2{n} \, \mu(\lambda) )$
or $\sOB(  n^2\tau_1 + n^2\tau_2 - n \lg\prod_j\Delta_j(x))) $.
\end{proof}

\begin{remark}[The hidden costs]
  In the previous lemma we have assumed bounds on the minimum
  distance between the $x_i$'s, which we denote by $\Delta_i(x)$. 
  The
  complexity results depend on this quantity, 
  as it is very important in the computation  of the number of
  bits that we need to certify the result to a desired accuracy.
  It is reasonable to assume that such bounds are part of the input. 
  
  However, how do we handle the case where such bounds are not known?
  As the precision required
  for the computations depends on these bounds
  we should be able to compute them, given the points $x_i$.
  
  We consider the numbers $x_i \in \CC$ as points on $\RR^2$ and we
  compute their Voronoi diagram.  This costs $\OO(n \lg{n})$
  operations, e.g.~\cite{PrSh-book-85}. 
  Then for each point $x_i$ we find its closest in
  $\OO(\lg{n})$ operations.  Therefore, we can compute the quantities
  $\Delta_i(x)$ in $\OO(n \,\lg{n})$ operations.  But what about the
  required precision? What are the 
  required primitive operations for these
  computations?
  We only need to evaluate the signs of $3 \times 3$ determinants. 
  The precision of Lemma~\ref{lem:lagrange} is 
  sufficient for these operations.
\end{remark}


\section{Solution of a Cauchy linear system of equations}
\label{sec:Cauchy-solve}

\subsection{Multiplication of a Cauchy matrix by a vector}
\label{sec:Cauchy-mul-vector}

We consider the problem of computing the matrix vector product
 $C {\bf v}$,
where $C = C(s, t) = (\frac{1}{s_i - t_j})_{i,j}^{n-1}$ is a Cauchy matrix 
and $
{\bf v}= (v_i)_{i=0}^{n-1}$.

We refer the reader to \cite[Problem~3.6.1]{Pan01} for further details
of the algorithm.

Let $\abs{s_i} \leq 2^{\tau_1}$, $\abs{t_i} \leq 2^{\tau_2}$,
$\abs{v_i} \leq 2^{\tau_3}$,
and $\Delta_i(t) = \min_j\abs{t_i - t_j}$,
for all $i$.

The following quantities are also useful
$\Delta_j(s, t) = \min_i\abs{s_j - t_i}$ and 
$\Delta(s, t) = \min_j\Delta_j(s, t)$.

\begin{lemma}
  \label{lem:Cauchy-vec-I}
  If the input is given as a $\lambda$-approximation, where 
  $
  \lambda = \ell + 90n(\tau_1 + 3)\lg{n} + 32(n-1)\tau_2 \lg{n} + 30\tau_3\lg{n} -35
  -24\lg\Delta(s, t) -4\lg\prod_k\Delta_k(t),
  $
  then we can compute an
  $\ell$-approximation of the vector 
  $C{\bf v}$
  in
  $\OB( n \, \lg^{2}{n} \, \mu(\lambda))$
  or 
  $\sOB( n^2\tau_1 + n^2\tau_2 + n\tau_3 - n\lg\Delta(s, t) - n\lg\prod_k\Delta_k(t))$.

\end{lemma}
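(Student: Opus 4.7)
The plan is to reduce the Cauchy matrix--vector product to the multipoint evaluation of an explicit rational function. Writing
\[
R(x) \;=\; \sum_{j=0}^{n-1} \frac{v_j}{x - t_j} \;=\; \frac{N(x)}{T(x)},
\]
with $T(x) = \prod_j (x - t_j)$ and $N(x) = \sum_j v_j \prod_{k\neq j}(x - t_k)$, one has $(C\mathbf{v})_i = R(s_i) = N(s_i)/T(s_i)$. So the task reduces to (i) building $T$ and $N$ as explicit polynomials approximately, (ii) evaluating each at $s_0,\dots,s_{n-1}$, and (iii) dividing componentwise.

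For step (i) I would first assemble $T(x)$ by applying the fan-in process of Lemma~\ref{lem:fan-in} to the knots $t_j$; this also yields the super-moduli associated with $\{t_j\}$, which the next step reuses. I would then construct $N(x)$ in two substeps: differentiate $T$, multipoint-evaluate $T'$ at the $t_j$'s via Lemma~\ref{lem:rem-m-polys} to obtain the values $T'(t_j) = \prod_{k\neq j}(t_j - t_k)$, form $w_j = v_j\, T'(t_j)$, and then Lagrange-interpolate the data $\{(t_j, w_j)\}$ using Lemma~\ref{lem:lagrange}. The Lagrange call is precisely the source of the $-4\lg\prod_k \Delta_k(t)$ contribution in $\lambda$, because that algorithm must invert quantities of magnitude $\prod_{k\neq j}|t_j - t_k|$. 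For step (ii) I would perform a second fan-in on the knots $s_i$ and multipoint-evaluate both $N$ and $T$ at them via Lemma~\ref{lem:rem-m-polys}, which contributes the $O(n(\tau_1+3)\lg n)$ and $O(n\tau_2\lg n)$ precision terms. For step (iii) I would perform the $n$ scalar quotients $N(s_i)/T(s_i)$ using the complex-interval bounds of Proposition~\ref{prop:complex-intvl}, applied to the lower bound $|T(s_i)| = \prod_j|s_i - t_j| \geq \Delta(s,t)^n$.

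Adding the precision losses of the four stages and simplifying the constants should produce exactly the stated $\lambda$; since each stage is dominated by $\OB(n\lg^2 n \cdot \mu(\lambda))$ bit operations, the total Boolean cost matches, and the soft-$O$ form is immediate from $\mu(\lambda) = \sOB(\lambda)$. The main obstacle is the bookkeeping of the final division: because $|T(s_i)|$ may be as small as $\Delta(s,t)^n$ while $|N(s_i)|$ can be as large as $2^{O(n\tau_2 + \tau_3)}$, the $\wid(1/I)$ estimate of Proposition~\ref{prop:complex-intvl} inflates the absolute error by a factor of roughly $|T(s_i)|^{-4}$ scaled by $2^{O(n\tau_2)}$, and it is this amplification that forces the coefficient $-24\lg\Delta(s,t)$ after balancing against the losses already incurred in stages (i)--(ii). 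I would carry every intermediate computation at the uniform precision $\lambda$ rather than recomputing constants at each recursive level, so that the explicit constants in the statement emerge tightly rather than being coarsened by nested bounds.
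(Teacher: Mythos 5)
Your skeleton (write $(C\mathbf{v})_i = N(s_i)/T(s_i)$, build numerator and denominator, multipoint-evaluate, divide with interval arithmetic) is the same as the paper's, but two of your choices break the claimed constants. The decisive one is the final division: you lower-bound the denominator by $|T(s_i)| = \prod_j |s_i - t_j| \ge \Delta(s,t)^n$. Feeding that into the $\wid(1/I) \le 2^{4\nu+2\tau+3}\wid(I)$ estimate of Proposition~\ref{prop:complex-intvl} gives $\nu \ge -n\lg\Delta(s,t) + O(n\tau_2)$, hence a precision loss of order $-4n\lg\Delta(s,t)$ and a Boolean cost containing $-n^2\lg\Delta(s,t)$ — not the $-24\lg\Delta(s,t)$ in $\lambda$ and $-n\lg\Delta(s,t)$ in the stated complexity. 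The paper avoids the $n$-th power entirely by invoking Lemma~\ref{lem:poly-eval-lower-bound}: since the root of $Q$ closest to $s_i$ is some $t_j$, one gets $|Q(s_i)| \gtrsim \Delta_i(s,t)^6\,\Mahler{Q}^{-6}\,2^{\lg\prod_k\Delta_k(t)}$, i.e.\ only the sixth power of a single distance plus the aggregate separation $\prod_k\Delta_k(t)$; after the factor-$4$ amplification of $\wid(1/I)$ this is exactly where the coefficients $24$ and $4$ in the statement come from. Without this lemma your argument proves a strictly weaker statement with an $n$-fold larger separation dependence, so it does not establish the lemma as written.

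The second issue is your construction of the numerator: differentiating $T$, evaluating $T'$ at the $t_j$, forming $w_j = v_j T'(t_j)$ and then calling Lemma~\ref{lem:lagrange} is both circular and lossy. The Lagrange algorithm of Lemma~\ref{lem:lagrange} itself recomputes $T$, $T'$, the values $T'(t_j)$, and then sums the fractions $\bigl(w_j/T'(t_j)\bigr)/(X-t_j) = v_j/(X-t_j)$ — i.e.\ it undoes your multiplication by $T'(t_j)$ — while charging you its own precision penalty of order $-6\lg\prod_k\Delta_k(t)$ plus $O(n\tau\lg n)$ terms, and it needs lower bounds on $|T'(t_j)|$ that the direct route never requires. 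The paper instead forms $P/Q=\sum_k \frac{v_k}{x-t_k}$ in one shot via the sum-of-rational-functions routine (Lemma~\ref{lem:a-sum-rat-func}), which is a pure fan-in of numerator/denominator pairs with no divisions and therefore no separation-dependent loss before the final step. If you replace your stage (i) by that lemma and your stage (iii) lower bound by Lemma~\ref{lem:poly-eval-lower-bound}, your bookkeeping can then be made to reproduce the stated $\lambda$; as written, it cannot.
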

\begin{proof}
The input is given as 
$\lambda$-approximations, where $\lambda$ is
to be specified in the sequel.
We track the loss of accuracy at each step of the algorithm.

\begin{enumerate}[leftmargin=5pt,rightmargin=3pt]
\item Consider the rational functions $\frac{v_k}{x-t_k} = \frac{P_k}{Q_k}$.

 For the numerators it holds $\dg{P_k} = 0$, $\normi{P_k} \leq 2^{\tau_3}$,
 and $\lg\normi{P_k - \wt{P}_k} \leq - \lambda$. 
 For the denominators it holds $\dg{Q_k} = 1$, $\normi{Q_k} \leq 2^{\tau_2}$,
 and $\lg\normi{Q_k - \wt{P}_k} \leq - \lambda$. 
 
\item Compute the sum $\frac{P}{Q} = \sum_k\frac{P_k}{Q_k}$.
  For this computation we rely on Lemma~\ref{lem:a-sum-rat-func}.
  
  For the numerator of the result we have $\dg{P} \leq n-1$, 
  $\lg\normi{P} \leq \tau_3 + (n-1)\tau_2 + 5n - \lg{n} - 4$,
  and
  $\lg\normi{P - \wt{P}} \leq -\lambda + \tau_3 + (4n-4)\tau_2 + 32n = -\lambda_1$.

  For the denominator of the result we have $\dg{Q} \leq n$, 
  $\lg\normi{Q} \leq n\tau_2 + 4n - \lg{n} - 4$,
  and
  $\lg\normi{Q - \wt{Q}} \leq -\lambda + \tau_3 + (4n-4)\tau_2 + 32n = -\lambda_1$.

\item Compute $P(s_i)$ and $Q(s_i)$ for all $i$.

  For this multipoint evaluation we use Lemma~\ref{lem:rem-m-polys}
  (with $m = n$, $n=1$, $\tau_1 = \lg\normi{P}$, $\rho = \tau_1$).

  We have
  $\lg\abs{P(s_i)} \leq (\tau_1 + 1)n + (n-1)\tau_2 + \tau_3 + 4n - 4$
  and
  $\lg\abs{P(s_i) - \wt{P}(\wt{s}_i)} \leq 
  -\lambda_1 + (\tau_3 + (n-1)\tau_2 + 4n - \lg{n} - 4)\lg{n} + 60n(\tau_1 + 3)\lg{n} + 60\lg^{3}{n}
  = -\lambda_2$.

  Similar bounds hold for $Q(s_i)$.

\item Compute the fractions $\frac{P(s_i)}{Q(s_i)}$.
  These are the elements of the result of the matrix-vector multiplication 
  $C{\bf v}$.
  
  For this task we need to perform $n$ (complex) divisions.  We use
  complex interval arithmetic to compute the 
  loss
  of precision,
  as we did for deriving the bounds for Lagrange interpolation.
  To compute a lower bound for $\abs{Q(s_i)}$
  we use Lemma~\ref{lem:poly-eval-lower-bound}, and so 
  \[
  \begin{aligned}
    \abs{Q(s_i)} & \geq \Paren{\Delta_i(s, t)}^6 \, 2^{-6 \, \lg{\normi{Q}} - 6\lg{n}} \, 2^{\lg\prod_k\Delta_k(t) - 6}  \\
    & \geq \Paren{\Delta_i(s, t)}^6 \, 2^{\lg\prod_k\Delta_k(t)}\, 2^{-6n\tau_2 - 24n +20} \\
    &\geq 2^{-\nu}
  \end{aligned}
  \]

  Let $\abs{Q(s_i)} \leq 2^{T}$, where $T = n\tau_1 + n\tau+2 +4n -4$.
  Using Prop.~\ref{prop:complex-intvl}, 
  $\wid{[1/Q(s_i)]} \leq 2^{4\nu + 2T + 3} \, 2^{-\lambda_2} \leq 2^{-\lambda_3}$
  and
  $
  \wid{[P(s_i)/Q(s_i)]} \leq 
  2 \, \abs{P(s_i)} \, 2^{-\lambda_3} + 2\, \abs{1/Q(s_i)} \, 2^{-\lambda_2}
  \leq 2^{-\lambda_3 + T + 2}
  $
  which is also the accuracy of the result.
\end{enumerate}

Putting together the various values of $\lambda_i$, $\nu$, and $T$, 
we achieve an $\ell$ approximation
by choosing 
$\lambda = \ell + 90n(\tau_1 + 3)\lg{n} + 32(n-1)\tau_2 \lg{n} + 30\tau_3\lg{n} -35
-24\lg\Delta(s, t) -4\lg\prod_k\Delta_k(t)$.

We perform all the computations
 with maximum required accuracy,
and so the overall complexity is 
$\OB( n \, \lg^{2}{n} \, \mu(\lambda))$
or 
$\sOB( n^2\tau_1 + n^2\tau_2 + n\tau_3 - n\lg\Delta(s, t) - n\lg\prod_k\Delta_k(t))$.
\end{proof}


\subsection{Trummer's problem}
\label{sec:Trummer}

This is the important special case where 
$s = t$ and the diagonal entries of the Cauchy matrix are set to zero.
In this case we compute the matrix vector product 
by using the following formula,
\begin{equation}
  \label{eq:trummer}
  (C \, v)_{i=0}^{n-1} = \Paren{ \frac{2\, P'(s_i) - v_i \, Q^{''}(s_i)}{2\, Q'(s_i)} }_{i=0}^{n-1}
  = \Paren{\frac{A_{0,i}}{A_{1,i}}}_{i=0}^{n-1}
\end{equation}
We refer the reader to \cite[Problem~3.6.3]{Pan01} for further details.

\begin{corollary}
  \label{cor:trummer}
  Using the notation of Lemma~\ref{lem:Cauchy-vec-I}, 
  we can solve Trummer's problem in $\OB(n \, \lg^{2}{n} \, \mu(\lambda))$,
  where
  $\lambda = \ell + 70(\tau_1 + 3)n\lg{n} +4\tau_3 \lg{n} - 4\lg\prod_j\Delta(s)$.
\end{corollary}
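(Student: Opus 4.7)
The plan is to follow the reduction to polynomial arithmetic spelled out in Eq.~(\ref{eq:trummer}). First, I form the master polynomial $Q(x)=\prod_{j}(x-s_j)$ by the Fan-in procedure (Lemma~\ref{lem:fan-in}), and produce $Q'(x)$ and $Q''(x)$ by formal differentiation (which inflates the infinity norm by a factor of at most $n^2$ and the precision by $O(\lg n)$ bits). Next, I compute the numerator polynomial $P(x)$, determined by the interpolation conditions $P(s_i)=v_i\,Q'(s_i)$, directly from the partial-fraction identity $P(x)/Q(x)=\sum_i v_i/(x-s_i)$ via Lemma~\ref{lem:a-sum-rat-func}. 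I then evaluate both $P'(x)$ and $Q''(x)$ at all the points $s_i$ using the modular-representation algorithm of Lemma~\ref{lem:rem-m-polys}, and assemble the output by performing the $n$ complex divisions of Eq.~(\ref{eq:trummer}) in interval arithmetic, exactly as in step~(4) of the proof of Lemma~\ref{lem:Cauchy-vec-I}.

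The precision analysis proceeds by tracking the norm and accuracy through each of these five stages. The only genuinely new ingredient relative to Lemma~\ref{lem:Cauchy-vec-I} is a lower bound on $|Q'(s_i)|=\prod_{j\neq i}|s_i-s_j|$, which is used in combination with Prop.~\ref{prop:complex-intvl} to bound $\mathtt{wid}([1/Q'(s_i)])$. Since $Q'(s_i)$ factors as a product of node differences, such a bound follows directly from the $\Delta_j(s)$'s (one could alternatively appeal to Lemma~\ref{lem:poly-eval-lower-bound}), and this is what introduces the $-4\lg\prod_j\Delta(s)$ contribution to $\lambda$ (the factor $4$ matching the $4\nu$ in the reciprocal-interval estimate of Prop.~\ref{prop:complex-intvl}). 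The remaining terms in $\lambda$ arise by accumulating the precision-loss factors of Lemmas~\ref{lem:fan-in}, \ref{lem:a-sum-rat-func}, and~\ref{lem:rem-m-polys}: each contributes an additive penalty of order $\tau_1 n\lg n$ (from evaluating degree-$n$ polynomials at nodes of magnitude $2^{\tau_1}$), plus a $\tau_3\lg n$ contribution coming from $v$, summed over an $O(1)$ number of stages. Note that because $s=t$, the parameter $\tau_2$ collapses into $\tau_1$ and $\Delta(s,t)$ does not appear at all, which explains why the bound is strictly cleaner than that of Lemma~\ref{lem:Cauchy-vec-I}.

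For the bit cost, each subroutine in the pipeline runs in $\OB(n\lg^2 n\,\mu(\lambda'))$ with $\lambda'\leq\lambda$; summing the five $O(1)$ contributions yields the claimed overall bound $\OB(n\lg^2 n\,\mu(\lambda))$. The main obstacle in writing out the proof in full is the careful bookkeeping of norm inflation and of precision loss across the pipeline, especially in reconciling the interval-arithmetic error estimate for $1/Q'(s_i)$ with the aggregate separation bound so that only the aggregate quantity $\prod_j\Delta(s)$, rather than the individual $\Delta_i(s)$'s, appears in the final formula for $\lambda$. Once that reconciliation is done, the result follows by essentially the same template as Lemma~\ref{lem:Cauchy-vec-I}, specialized to the Trummer formula~(\ref{eq:trummer}).
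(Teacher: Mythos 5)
Your proposal is correct and follows essentially the same route as the paper: it reuses the pipeline of Lemma~\ref{lem:Cauchy-vec-I} (sum of the rational functions $v_i/(x-s_i)$ to get $P/Q$, multipoint evaluation of the needed derivatives via Lemma~\ref{lem:rem-m-polys}, then the $n$ interval-arithmetic divisions of Eq.~(\ref{eq:trummer})), with the derivatives costing only an extra $\lg n$ or $2\lg n$ in the norm/precision bookkeeping and with the lower bound $\abs{Q'(s_i)}\geq\prod_{j\neq i}\Delta_j(s)$ feeding the $4\nu$ factor of Prop.~\ref{prop:complex-intvl}, which is exactly how the paper obtains the $-4\lg\prod_j\Delta(s)$ term and the stated $\lambda$.
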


\begin{proof}
  First we compute bounds for the numerator of Eq.~(\ref{eq:trummer}).
  Following the proof of Lemma~\ref{lem:Cauchy-vec-I} we have
  $\lg\abs{P'(s_i)} \leq (\tau_1 + 1)n + (n-1)\tau_2 + \tau_3 + 4n - 4 + \lg{n}$
  and
  $\lg\abs{P(s_i) - \wt{P}(\wt{s}_i)} \leq -\lambda_2 + \log{n}$.
  Similarly for 
  $\lg\abs{Q''(s_i)} \leq (\tau_1 + 1)n + (n-1)\tau_2 + \tau_3 + 4n - 4 + 2\lg{n}$
  and
  $\lg\abs{Q''(s_i) - \wt{Q''}(\wt{s}_i)} \leq -\lambda_2 + 2\log{n}$.
  For the first derivative we add to the logarithm of the norm a term $\lg{n}$
  and for the second a term $2\lg{n}$. 
  Moreover, $\tau_1 = \tau_2$, as $s= t$.

  Taking into account that $\abs{v_i} \leq 2^{\tau_3}$ we deduce that
  \[
  \lg\abs{A_{0,i}} \leq (\tau_1 + 1)n + (n-1)\tau_2 + 2\tau_3 + 4n - 4 + 2\lg{n}
  \]
  and 
  $\lg\abs{A_{0,i} - \wt{A}_{0,i}} \leq -\lambda_2 + \tau_3 + 2\log{n} + 2$~.
  
  Regarding the denominator we have that 
  $\lg\abs{Q'(s_i)} \leq (\tau_1 + 1)n + (n-1)\tau_2 + \tau_3 + 4n - 4 + \lg{n} = T$.
  In addition
  $\abs{Q'(s_i)} = \prod_{j \not= i}\abs{s_i - s_j} \geq \prod_{j \not= i}\Delta_j(s)$  
  and so 
  $\abs{1/ A_{1,i}} = \abs{1/2 Q'(s_i)} \leq \prod_{j \not= i}(\Delta_j(s))^{-1}$.
  Prop.~\ref{prop:complex-intvl} leads to 
  $
  \lg\wid([1/A_{1,i}]) = \lg \wid{[1/Q'(s_i)]} \leq -\lambda_2 + \lg{n} -4 \lg\prod_{j \not= i}\Delta_j(s) + 2T + 3
  $
  where $\lambda_2$ is defined at the (C3) step of the proof of Lemma~\ref{lem:Cauchy-vec-I}.
  
  Putting all the pieces together, we have 
  \[
  \begin{array}{ll}
  \wid([ \frac{A_{0,i}}{A_{1,i}} ])  \leq &
  2\,\abs{A_{0,i}} \, 2^{-\lambda_2 + \lg{n} -4 \prod_{j \not= i}\Delta_j(s) + 2T + 3} \\
  & + 2\,\abs{1/A_{1,i}} \, 2^{-\lambda_2 + \tau_3 + 2\log{n} + 2} 
  \end{array}
  \]
  and after many simplifications and overestimations
  \[
  \begin{array}{ll}
  \lg\Abs{\frac{A_{0,i}}{A_{1,i}} -\frac{\wt{A}_{0,i}}{\wt{A}_{1,i}}} & 
  \leq \wid([ \frac{A_{0,i}}{A_{1,i}} ]) \\
  & \leq
  -\lambda + 70(\tau_1 + 3)n\lg{n} +4\tau_3 \lg{n} - 4\lg\prod_j\Delta(s).
  \end{array}
  \]

  To achieve an $\ell$-approximation we need the input to be a $\lambda$-approximation,
  where $\lambda = \ell + 70(\tau_1 + 3)n\lg{n} +4\tau_3 \lg{n} - 4\lg\prod_j\Delta(s)$,
  and we perform all the computations with this number of bits. 
  The overall complexity is $\OB(n \, \lg^{2}{n} \, \mu(\lambda))$.
\end{proof}

\subsection{Solving a (Cauchy) linear system}

We consider the following problem.
\begin{problem}[Cauchy linear system of equations]
  \label{prob:Cauchy-linear-system}
  Solve a non-singular Cauchy linear system of $n$ equations, $C({\bf
    s,t}) \, {\bf v}={\bf r}$ for an unknown vector $\bf v$ and 3
  given vectors $\bf r,s$, and $\bf t$.
\end{problem}

\begin{theorem}
  \label{thm:Cauchy-linear-system}
  Let $\abs{s_i} \leq 2^{\tau_1}$, $\abs{t_i} \leq 2^{\tau_2}$,
  $\abs{r_i} \leq 2^{\tau_3}$,
  and $\Delta_i({\bf s}) = \min_j\abs{s_i - s_j}$,
  $\Delta_i({\bf t}) = \min_j\abs{t_i - t_j}$,
  for all $i$.
  Let also $\Delta({\bf s}, {\bf t}) = \min_{i,j}\abs{s_i - t_j}$.
  
  If the input is given $\lambda$-approximations,
  where $\lambda = \ell + 630(\tau_1 + \tau_2)n\lg{n} + 32\tau_3\lg{n} -35 
  -35\lg{n}\lg\prod\Delta_j({\bf s}) -5\lg\prod\Delta_j({\bf t})
  -25\lg\Delta({\bf s}, {\bf t}) $,
  then an $\ell$-approximation solution to
  Problem~\ref{prob:Cauchy-linear-system} could be obtained in 
  $\OB( n \, \lg^2{n} \, \mu(\lambda))$,
  or $\sOB( n \ell + n^2(\tau_1 + \tau_2) + n\tau_3
  -n\lg\prod\Delta_j({\bf s}) -n\lg\prod\Delta_j({\bf t})
  -n\lg\Delta({\bf s}, {\bf t}))$.  
\end{theorem}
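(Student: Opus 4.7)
The plan is to reduce the Cauchy linear system $C(\mathbf{s},\mathbf{t})\mathbf{v}=\mathbf{r}$ to a constant-length pipeline of subproblems already analysed in the preceding lemmas: the Fan-in polynomial multiplication of Lemma~\ref{lem:fan-in}, three modular multipoint evaluations as in Lemma~\ref{lem:rem-m-polys}, the Lagrange interpolation of Lemma~\ref{lem:lagrange}, and $n$ pointwise complex divisions handled through Proposition~\ref{prop:complex-intvl}. The underlying algebraic identity is rational partial-fraction interpolation: setting $P(x)=\prod_j(x-t_j)$ and introducing the polynomial
\[
A(x)=\sum_j v_j\prod_{k\neq j}(x-t_k),\qquad \deg A\le n-1,
\]
one readily verifies that $C\mathbf{v}=\mathbf{r}$ is equivalent to the interpolation conditions $A(s_i)=r_i\,P(s_i)$ for $0\le i\le n-1$, while the unknowns are recovered by the residue formula $v_j=A(t_j)/P'(t_j)$.

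Operationally, I would first compute an approximation $\wt P$ by the Fan-in process of Lemma~\ref{lem:fan-in} and form $\wt{P'}$ by scaling its coefficients. Next I would evaluate $\wt P$ at the knots $s_i$ by Lemma~\ref{lem:rem-m-polys}, multiply pointwise by $\wt r_i$ to manufacture the interpolation values $\wt u_i$, and then invoke Lemma~\ref{lem:lagrange} with knots $s_i$ and values $\wt u_i$ to produce an approximation $\wt A$. Two further multipoint evaluations of $\wt A$ and $\wt{P'}$ at the points $t_j$ yield $\wt A(t_j)$ and $\wt{P'}(t_j)$, and a final batch of $n$ complex divisions returns the entries of $\wt{\mathbf{v}}$.

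The main obstacle is the precision bookkeeping, since three qualitatively distinct sources of loss must be combined. First, the evaluations of $P$ at the $s_i$ and of $A$, $P'$ at the $t_j$ contribute the $(\tau_1+\tau_2)n\lg n$ term through Lemma~\ref{lem:rem-m-polys}, and the upper bounds on $|P(s_i)|\le\prod_j|s_i-t_j|$ that propagate into the Lagrange step introduce $-\lg\Delta(\mathbf{s},\mathbf{t})$ overheads. Second, the Lagrange stage injects the $-\lg\prod_i\Delta_i(\mathbf{s})$ loss arising from the $1/B'(x_i)$ scalars, exactly as tracked in Lemma~\ref{lem:lagrange}. Third, the final pointwise division by $P'(t_j)$, analysed via Proposition~\ref{prop:complex-intvl} together with the lower bound $|P'(t_j)|\ge\prod_{k\neq j}\Delta_k(\mathbf{t})$, produces the $-\lg\prod_j\Delta_j(\mathbf{t})$ contribution. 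Summing these with the intermediate bit-size bounds for $P$, $A$, and $P'$ and then systematically over-estimating the constants (to keep the formula readable) yields the claimed working precision $\lambda$.

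The complexity bound follows at once from the precision estimate, since each subtask runs in $\OB(n\,\lg^{2}{n}\,\mu(\lambda))$ by its cited lemma and only a constant number of such subtasks are chained, so the overall cost is $\OB(n\,\lg^{2}{n}\,\mu(\lambda))$. Absorbing polylogarithmic factors into $\lambda$ rewrites this as the $\sOB(n\ell+n^2(\tau_1+\tau_2)+n\tau_3-n\lg\prod_j\Delta_j(\mathbf{s})-n\lg\prod_j\Delta_j(\mathbf{t})-n\lg\Delta(\mathbf{s},\mathbf{t}))$ form announced in the statement.
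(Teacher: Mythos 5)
Your algebraic reduction is sound and is in fact the same inversion identity the paper uses: unwinding your pipeline (Lagrange interpolation of $A$ at the knots $s_i$ with values $r_iP(s_i)$, followed by the residue formula $v_j=A(t_j)/P'(t_j)$) reproduces exactly the factorization $C^{-1}({\bf s,t})=D_1\,C({\bf t,s})\,D_2$ of Eq.~(\ref{eq:Cauchy-inv}); the paper simply executes the middle factor with the Cauchy matrix--vector product of Lemma~\ref{lem:Cauchy-vec-I} instead of routing it through Lemma~\ref{lem:lagrange}. The difficulty is that the step you leave as ``bookkeeping'' is precisely where your route diverges quantitatively from the claim. The interpolation values you manufacture, $u_i=r_iP(s_i)$ with $P(x)=\prod_j(x-t_j)$, have bitsize of order $\tau_3+n(\tau_1+\tau_2)$. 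Lemma~\ref{lem:lagrange}, invoked as a black box, requires precision containing the term $4n\tau_2$ in the bitsize of the values; substituting the actual value bitsize produces a term of order $n^2(\tau_1+\tau_2)$ inside $\lambda$, hence an overall cost $\sOB(n^3(\tau_1+\tau_2))$, overshooting the claimed $\sOB(n^2(\tau_1+\tau_2))$ by a factor of roughly $n/\lg n$. The paper avoids exactly this blow-up because in its middle step the dependence of Lemma~\ref{lem:Cauchy-vec-I} on the bitsize of the input vector is only $30\tau_3\lg n$, so the $n(\tau_1+\tau_2)$-bit entries of $D_2{\bf r}$ pick up only a $\lg n$ factor. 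To make your route work at the stated precision you would have to reprove the interpolation step with a dependence on the value bitsize that is at most polylogarithmic in $n$ (which the structure of the sum-of-rational-functions argument suggests is possible, but is not supplied by the lemma as stated), or else restructure as the paper does.

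A smaller but genuine slip: you attribute the $-\lg\Delta({\bf s},{\bf t})$ contribution to ``upper bounds on $|P(s_i)|\le\prod_j|s_i-t_j|$ propagating into the Lagrange step.'' Upper bounds on $|P(s_i)|$ do not involve minimum distances; $\Delta({\bf s},{\bf t})$ can only enter through a \emph{lower} bound on a quantity you divide by. In your pipeline no division by mixed differences $s_i-t_j$ ever occurs (you only divide by $B'(s_i)$-type products of $s$-differences and by $P'(t_j)$), so that term is simply not needed on your route; it appears in the paper's proof because Lemma~\ref{lem:Cauchy-vec-I} applied to $C({\bf t,s})$ divides by $\prod_k(t_i-s_k)$ and needs the lower bound of Lemma~\ref{lem:poly-eval-lower-bound}. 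This does not invalidate the theorem (extra precision is harmless), but the stated justification is incorrect and should be removed or replaced by the observation above.
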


\begin{proof}
Following \cite[Eq.~3.6.10]{Pan01} then inverse of 
$C({\bf s,t})$ could be obtained as follows
\begin{equation}
  \label{eq:Cauchy-inv}
  \begin{array}{ll}
  C^{-1}({\bf s,t}) & =\diag(p_{\bf s}(t_i)/p'_{\bf t}(t_i))_{i=0}^{n-1}
  \, C({\bf t,s}) \, 
  \diag(p_{\bf t}(s_i)/p'_{\bf s}(s_i))_{i=0}^{n-1} \\
  & = D_1 \cdot C({\bf t,s}) \cdot D_2
  \end{array}
\enspace ,
\end{equation}
where 
$p_{\bf t}(X) = \prod_i(X - t_i)$ and 
$p_{\bf s}(X) = \prod_i(X - s_i)$.

Using this
formula we can solve the linear system as
\[
{\bf v} = C^{-1}({\bf s,t}) \, {\bf r} =
D_1 \cdot C({\bf t,s}) \cdot D_2 \, {\bf r} 
\]

We apply Lemma 17 to analyze 
 the computation of the polynomials $p_s(x)$ and  $p_t(x)$.
Then we compute the two diagonal matrices $D_1$ and $D_2$ by
applying the Moenck--Borodin algorithm for multipoint evaluation
(see Section 4). 
At first we perform $n$ ops to compute the vector 
${\bf r_1}=D_2 \, {\bf r}$.
Then we
use Lemma~\ref{lem:Cauchy-vec-I} to
obtain
 the vector ${\bf r_2}=C({\bf t,s}){\bf r_1}$. Finally we 
multiply the matrix $D_1 \,$ by  ${\bf r_2}$ 
to obtain the vector ${\bf v}$.
We track the loss of precision at each of these 
three steps.

\begin{enumerate}[leftmargin=5pt,rightmargin=3pt]
\item  Computation of the matrices $D_1$ and $D_2$.

For this task we need to compute
$ p_{\bf s}(t_i)$, $p'_{\bf s}(s_i)$,
$p_{\bf t}(s_i)$, and $p'_{\bf t}(t_i)$.

It holds 
$\lg\normi{p_{\bf s}} \leq n\tau_1 + 4n - \lg{n}-4$
and $\lg\normi{p_{\bf s} - \wt{p}_{\bf s}} \leq -\lambda+(4n-4)\tau_1 +32n$, 
using Lemma~\ref{lem:a-mul-m-polys}.
By applying Lemma~\ref{lem:rem-m-polys} we get 
$\lg\normi{p_{\bf s}(t_i)} \leq n\tau_1 + n\tau_2 + 5n - 3$.
and $\lg\normi{p_{\bf s}(t_i) - \wt{p}_{\bf s}(\wt{t}_i)} \leq 
-\lambda+ ( n\lg{n} + 4\,n-4 )\tau_1 + 60\, n \tau_2\lg{n} -4\,\lg{n} +184\,n\lg{n}+32\,n- ( \lg{n}  ) ^{2}$.

However, to simplify the calculations we consider the inferior bounds
$\lg\normi{p_{\bf s}(t_i)} \leq 7n(\tau_1 +\tau_2)$.
and $\lg\normi{p_{\bf s}(t_i) - \wt{p}_{\bf s}(\wt{t}_i)} \leq 
-\lambda + 300 (\tau_1 + \tau_2)n \lg{n}$
for all the involved quantities.

We also need lower bounds for
$\abs{p'_{\bf t}(t_i)}$ and $\abs{p'_{\bf s}(s_i)}$.

It holds 
$\abs{p'_{\bf t}(t_i)} \geq \prod_{j\not=i}\abs{t_i - t_j} \geq  \prod_{j\not=i}\Delta_j({\bf t})$.
Similarly
$\abs{p'_{\bf s}(s_i)} \geq \prod_{j\not=i}\abs{s_i - s_j} \geq  \prod_{j\not=i}\Delta_j({\bf s})$.

This leads to the bounds:
$\Abs{\frac{p_{\bf s}(t_i)}{p'_{\bf t}(t_i)}} \leq 2^{7n(\tau_1 + \tau_2) - \lg\prod_{j\not=i}\Delta_j({\bf t})}$
and 
$\Abs{\frac{p_{\bf t}(s_i)}{p'_{\bf s}(s_i)}} \leq 2^{7n(\tau_1 + \tau_2) - \lg\prod_{j\not=i}\Delta_j({\bf s})}
$.

By combining the previous bounds with the complex interval arithmetic
of Prop.~\ref{prop:sep-bounds} we obtain the following estimation 
for the approximation:
{\scriptsize
\[
\lg\Abs{\frac{p_{\bf t}(s_i)}{p'_{\bf s}(s_i)} - {\widetilde{\frac{p_{\bf t}(s_i)}{p'_{\bf s}(s_i)}}}} 
\leq -\lambda + 315 (\tau_1 + \tau_2)n \lg{n} - 4\lg\prod_{j\not=i}\Delta_j({\bf s})
\]
\[
\lg\Abs{\frac{p_{\bf s}(t_i)}{p'_{\bf t}(t_i)} - {\widetilde{\frac{p_{\bf s}(t_i)}{p'_{\bf t}(t_i)}}}} 
\leq -\lambda + 315 (\tau_1 + \tau_2)n \lg{n} - 4\lg\prod_{j\not=i}\Delta_j({\bf t})
\]
}
\item ${\bf r_1} = D_2 \, {\bf r}$ .
  
  This computation increases the bounds by a factor of $\tau_3$.
  To be more specific, for the elements of ${\bf r_1}$, $r_{1,i}$ we have
  $ \abs{r_{1,i}} \leq 2^{7n(\tau_1 + \tau_2) + \tau_3 - \lg\prod_{j\not=i}\Delta_j({\bf s})} $
  and 
  \[ \lg\abs{r_{1,i} - \wt{r}_{i,1}} \leq 
  -\lambda + 316 (\tau_1 + \tau_2)n \lg{n} + \tau_3 - 4\lg\prod_{j\not=i}\Delta_j({\bf s})\]
\item ${\bf r_2} =  C({\bf t,s}) \, {\bf r_1}$. \\
  For this computation we need to apply Lemma~\ref{lem:Cauchy-vec-I}.
  We obtain
  {\scriptsize
  \[ \lg\abs{r_{2,i}} \leq 7n(\tau_1 + \tau_2) + \tau_3
   -\lg\prod\Delta_j({\bf s}) - \lg\Delta({\bf s}, {\bf t}), \]
  \[ 
  \begin{aligned}
  \lg\abs{r_{2,i} - \wt{r}_{2,i}} \leq & 
  -\lambda + 616(\tau_1 + \tau_2)n\lg{n} + 31\tau_3\lg{n} -35 \\
  & - 34\lg{n}\lg\prod\Delta_j({\bf s})
  - 4 \lg\prod\Delta_j({\bf t})
  - 24  \lg\Delta({\bf s}, {\bf t}) 
  \end{aligned}
  \]
  }
\item ${\bf v} = D_1 \, {\bf r_2}$. \\

  This computations leads to the following bounds
  {\scriptsize
  \[ 
  \lg\abs{v_i} \leq 14n(\tau_1 + \tau_2) + \tau_3
  -\lg\prod\Delta_j({\bf s}) -\lg\prod\Delta_j({\bf t}) - \lg\Delta({\bf s}, {\bf t}) 
  \]
   \[
   \begin{aligned}
     \lg\abs{v_{i} - \wt{v}_{i}} \leq &
     -\lambda + 630(\tau_1 + \tau_2)n\lg{n} + 32\tau_3\lg{n} -35 \\
     &\quad\quad - 35\lg{n}\lg\prod\Delta_j({\bf s})
     - 5\lg\prod\Delta_j({\bf t})
     - 25 \lg\Delta({\bf s}, {\bf t}) 
   \end{aligned}
  \]
  }
\end{enumerate}

To achieve an $\ell$-approximation of the output we should
require a $\lambda$-approximation of the input, 
where $\lambda =  \ell + 630(\tau_1 + \tau_2)n\lg{n} + 32\tau_3\lg{n} -35 
-35\lg{n}\lg\prod\Delta_j({\bf s}) -5\lg\prod\Delta_j({\bf t})
-25\lg\Delta({\bf s}, {\bf t}) $.

As in all the previous sections we perform all the computations using 
the maximum precision.
The overall complexity of the algorithm
is $\OB( n \, \lg^2{n} \, \mu(\lambda))$,
or $\sOB( n \ell + n^2(\tau_1 + \tau_2) + n\tau_3
-n\lg\prod\Delta_j({\bf s}) -n\lg\prod\Delta_j({\bf t})
-n\lg\Delta({\bf s}, {\bf t}))$.
\end{proof}

{\scriptsize
  \textbf{Acknowledgments.}
  VP is supported by NSF Grant CCF--1116736 and
  PSC CUNY Awards 64512--0042 and 65792--0043.
  ET is partially supported by 
  GeoLMI 
  (ANR 2011 BS03 011 06), 
  HPAC (ANR ANR-11-BS02-013)
  and an
  FP7 Marie Curie Career Integration Grant.  
}

{
  \scriptsize
   \bibliographystyle{abbrv}  
   \bibliography{refine} 
}

\newpage
\section*{Appendix}

{\it 
  \textbf{Lemma~\ref{lem:T-inv}.}
  Let $n+1=2^k$
  for a positive integer $k$
  and let $T$ be a lower 
  triangular Toeplitz $(n+1)\times (n+1)$ matrix of. Eq.~(\ref{eq:T-lin-eq}),
  having ones on the diagonal. Let its subdiagonal entries be
  complex numbers
  of magnitude at most $2^{\tau}$ known up to a precision
  $2^{-\lambda}$. 
  Let $2^{\rho}$ be an upper bounds on the magnitude of the roots of the 
  univariate polynomial 
  $t(x)$ associated with $T$.
  Write $T^{-1}=(T^{-1}_{i,j})_{i,j=0}^{n}$.
  Then 
  \[
  \max_{i,j} \abs{T^{-1}_{i,j}} \leq   2^{(\rho+1)n + \lg(n)+1} 
  \enspace .
  \]

  Furthermore, to compute the
  entries of $T^{-1}$ up to the
  precision of $\ell$ bits, 
  that is to compute a matrix $\wt{T}^{-1}=(\wt{T}^{-1}_{i,j})_{i,j=0}^{n}$ 
  such that 
  $  \max_{i,j} \abs{T^{-1}_{i,j} - \wt{T}^{-1}_{i,j}} \leq 2^{-\ell} $,
  it is sufficient to know the entries of $T$ up to
  the precision of

  $\ell + 10\tau \lg{n} + 70\lg^2{n} + 8(\rho+1)n\lg{n}$
  or $\OO(\ell + (\tau + \lg{n} +n \rho)\lg{n}) = \sO(\ell + \tau + n\rho)$ bits.

  The computation of $\wt{T}^{-1}$ costs
  $\OB( n \, \lg^{2}(n) \, \mu(\ell + (\tau + \lg{n} +n \rho)\lg{n}))$ 
  or $\sOB(n \ell + n\tau + n^2 \rho)$.
}

\begin{proof}[of Lemma~ \ref{lem:T-inv}]
  We will prove the claimed estimates by 
reducing the
  inversion to recursive multiplication of polynomials defined by
  equation (\ref{eq:T-inv}) and Lemma \ref{lem:veceq}.
  
   Consider the $\frac{n+1}{2}\times \frac{n+1}{2}$ 
   Toeplitz matrices $T_0^{-1}$ 
   and $T_1$ of equation (\ref{eq:T-inv}). 
   The Toeplitz  matrix $T_0^{-1}$ is triangular, and so
its first column, ${\bf p}=(p_{i})_{i=0}^{(n-1)/2}$, with $p_0=1$, 
defines this matrix and the polynomial $p(x)=\sum_{i=0}^{(n-1)/2}p_ix^i$ of degree $(n-1)/2$.
 Likewise the vector ${\bf t}=(t_{n-i})_{i=1}^{n}$ 
(made up of two overlapping vectors, that is,
the reversed first row, $(t_{n-1},t_{n-2},\dots,t_{(n-1)/2})$
of the matrix $T_1$
 and its first column,   $(t_{(n-1)/2},t_{(n-3)/2},\dots,t_{0})^T$)
defines this matrix and the
 polynomial $\tilde t(x)=\sum_{i=1}^{n}t_{n-i}x^{i-1}$ of degree $n-1$. 
 The first column of the $\frac{n+1}{2}\times \frac{n+1}{2}$ 
 Toeplitz matrix $T_1T_0^{-1}$ is a subvector, ${\bf v}$, 
of dimension $(n+1)/2$ of the coefficient vector of the polynomial 
product $\tilde t(x)p(x)$, having degree $3(n-1)/2$. Likewise the first 
column of the $\frac{n+1}{2}\times \frac{n+1}{2}$
matrix 
$-T_0^{-1}T_1T_0^{-1}$
is the vector ${\bf q}=-T_0^{-1}{\bf v}$, which is a subvector of
dimension $(n+1)/2$ of the coefficient vector of the polynomial
product $p(x)v(x)$ of degree $n-1$, where the polynomial $v(x)$
of degree $(n-1)/2$ is defined by its coefficient vector ${\bf v}$. In
sum the vector ${\bf q}$ is the coefficient vector of a polynomial
$\tilde q(x)$ obtained by two successive multiplications of
polynomials, 
each followed by the truncation of the coefficient vectors. 
Namely we first compute the polynomial $\tilde t(x)p(x)$, then
truncate it to obtain the polynomial $v(x)$, then compute the
polynomial $-p(x)v(x)$, and finally truncate it to obtain the
polynomial $\tilde q(x)$.
 
The truncation can only decrease the degree of a polynomial and the maximum length of 
its coefficients, and so
we can bound the precision and the cost of computing the matrix product $-T_0^{-1}T_1T_0^{-1}$
by the bounds on
 the precision and the cost of computing the polynomial product $P_0^2 \,P_1$,
estimated in Corollary \ref{cor:p02p1-mul} 
for $P_0=p(x)$, $P_1=\tilde t(x)$, and $d=(n-1)/2$.

  First we prove the upper bound on the elements of $T^{-1}$.
  Consider the division 
  $\phi_{i,k}(x) = t(x) q(x) + r(x)$,
  where $t(x)$ is the univariate polynomial of degree $n$ 
associated with
  $\phi_{i,k}(x) = \sign(T_{i,1}^{-1}) x^n + \sign(T_{i,k}^{-1}) \, x^k$,
  $2 \leq k \leq n$ and $1 \leq i \leq n$.
  Then $\norm{\phi_{i,k}}_{\infty} \leq 1$
  and $\norm{\phi_{i,k}}_{2} \leq \sqrt 2$.
  By abusing notation we also write  $\phi_{i,k}$ and $q$ to denote the 
  coefficient vectors
  of these polynomials.
  Using Eq.~(\ref{eq:T-lin-eq}) 
  we can compute the elements of $q$ 
  from the equation $q = T^{-1} \phi_{i,k}$.
  In this way
  $\abs{q_i} = \abs{  \sign(T_{i,1}^{-1}) T_{i,1}^{-1} +  \sign(T_{i,k}^{-1}) T_{i,k}^{-1}} 
  = \abs{ T_{i,1}^{-1}} +  \abs{T_{i,k}^{-1}} $.

  Using Lemma~\ref{lem:quo-rem-bd} we 
  obtain
  the inequality
  $\abs{q_i} \leq \norm{q}_{\infty} \leq 2^{n + \lg{n} + n\rho} \norm{\phi_{i,k}}_{\infty}$,
  which in turn implies
  \begin{equation}
    \label{eq:T-inv-up-bd}
    \abs{ T_{i,k}^{-1}} \leq 
    \abs{ T_{i,1}^{-1}} +  \abs{T_{i,k}^{-1}} = 
    \abs{q_i} \leq 2^{n + \lg{n} + n\rho} \,\norm{\phi_{\nu,k}}_\infty 
    \leq  2^{n + \lg{n} + n\rho}
    \enspace.
  \end{equation}

  Let $P_0$ and $P_1$ 
  denote  the two polynomials defined earlier in the proof
  such that $\max_{i,j}\{|T_{i,j}^{-1}|\}\le ||P_0^2P_1||_{\infty}$.

Then   Eq.~(\ref{eq:T-inv-up-bd}) implies 
the inequality
  \begin{equation}
    \label{eq:po2p1-lg-bd}
    \lg{ \Norm{ P_0^2\, P_1}_{\infty} }  \leq
    {n + \lg{n} + n\rho} = N
    \enspace.
  \end{equation}
  
  Recall that $n+1=2^k$ 
  by assumption and that the polynomial $P_0^2 \,P_1$
  has degree $2n-2$.
  Write 
  $h=\lceil \lg (2n-2) \rceil$.
  Under the assumption that the input elements are known up to 
  the precision of $\lambda$ bits,
  we compute a polynomial ${\widetilde {P_0^2\, P_1}}$
  such that 
  {\scriptsize
  \begin{equation}
    \label{eq:po2p1-lg-prec}
    \begin{array}{l}
      \lg{ \Norm{P_0^2 P_1 - {\widetilde {P_0^2\, P_1}}}_{\infty} }  \leq \\
       \leq -\lambda + (2\lg{n} + 8)\tau + 2\lg{n}(4\lg{n}+27) + 8(\lg{n}-1) N     \\
       \leq -\lambda + (2k + 8)\tau  + 2k(4k+27)  + 8(k-1)N
    \enspace .
    \end{array}
  \end{equation}  
}

We proceed by induction.
Let the base case be $n+1 = 4$; then $k = 2$.
We need to invert the following matrix
{\scriptsize
\[
T = \left[ 
  \begin {array}{cccc} 
    1&0&0&0\\ \noalign{\medskip}t_{{2}}&1&0&0\\ \noalign{\medskip}
    t_{{1}}&t_{{2}}&1&0\\ \noalign{\medskip}
    t_{{0}}&t_{{1}}&t_{{2}}&1
  \end {array} 
\right]  
=
\left[
  \begin{array}{cc}
    T_0 & 0 \\
    T_1 & T_0
  \end{array}
\right] \enspace, 
\text{ where } 
\]
\[
T_0 = \left[ 
  \begin {array}{cc} 1&0\\ \noalign{\medskip}t_{{2}}&1 \end {array} 
\right]
\enspace ,
T_0^{-1} = \left[ 
  \begin {array}{cc} 1&0\\ \noalign{\medskip} -t_{{2}}&1\end {array}
\right]
\enspace ,
T_1 = \left[ 
  \begin {array}{cc} t_1& t_2\\ \noalign{\medskip} 
t_0 & t_1\end {array}
\right]
\enspace ,
\]
}
and $\abs{t_i} \leq 2^{\tau}$.
The associated polynomials are 
$P_0(x) = 1 - t_2 x$, for  $T_0^{-1}$,
and  $P_1(x) = t_2 + t_1 x + t_0 x^2$,
for  $T_1$. Therefore 
$P_1P_0=(1-t_2x)(t_2+t_1x+t_0x^2)=t_2+(t_1-t_2^2)x+(t_0-t_1t_2)x^2-t_0t_2x^3$.
The subvector ${\bf v}=(t_1-t_2^2,t_0-t_1t_2)^T$ of the coefficient vector 
of the polynomial product $P_1P_0$
is the first column of the matrix product $T_1T_0^{-1}$. 
Furthermore the vector 
$-T_0^{-1}{\bf v}=(t_2^2-t_1, 2t_1t_2-t_2^3-t_0)^T$
is a subvector of the coefficient vector of the polynomial product
$-P_0v=-(1-t_2x)(t_1-t_2^2+(t_0-t_1t_2)x)$
where $v$ denotes the polynomial $t_1-t_2^2+(t_0-t_1t_2)x$ with the 
coefficient vector ${\bf v}$.
As we proved earlier, the subvector is the first column of the 
matrix 
\[
-T_0^{-1} \, T_1 \, T_0^{-1} = 
\left[ 
  \begin {array}{cc} 
    {t_{{2}}}^{2}-t_{{1}}&-t_{{2}}\\ \noalign{\medskip}
    -{t_{{2}}}^{3}+2\,t_{{1}}t_{{2}}-t_{{0}}&{t_{{2}}}^{2}-t_{{1}}
  \end {array} 
\right] 
\enspace .
\]

(This is not a subvector
of the
coefficient vector of the polynomial
\[
\begin{array}{l}
P_0^2 P_1=(1-t_2x)^2(t_2+t_1x+t_0x^2) \\
= t_2+(t_1-2t_2^2)x+(t_0-2t_1t_2+t_2^3)x^2+(t_2^2t_1-2t_2t_0)x^3+t_2^2t_0x^4
\enspace
\end{array}
\]
because multiplication of polynomials and the truncation of their
coefficient vectors are not commutative operations,
but as we observed, we still can reduce our 
 study 
of the product $T_0^{-1}T_1T_0^{-1}$ to the study of $p_0^2p_1$.)

We perform the multiplications
by using the algorithm of Lemma~\ref{lem:a-poly-mult}
and the bounds from Cor.~\ref{cor:p02p1-mul},
to obtain a polynomial 
$\widetilde{P_0^2\,P_1}$ 
 such that 
\[
\lg \norm{P_0^2 P_1 - \widetilde{P_0^2 \, P_1}}_{\infty} 
\leq  -\lambda + 10 \tau + 15 \lg{1} + 40
\leq  -\lambda + 12 \tau + 140 +8N
\enspace ,
\]
where the last inequality 
is obtained by substituting $k=2$ in Eq.~(\ref{eq:po2p1-lg-prec}).

It remains to prove the induction.
Assume that the claimed bounds are true for $n+1=2^{k}$
and extend them to $n + 1 = 2^{k + 1}$.
In 
our
case $P_0$ is a polynomial of degree $2^{k-1}-1$.
By induction hypothesis we know the coefficient of $P_0$ within $2^{-\ell}$
for $\ell$ defined by (\ref{eq:po2p1-lg-prec}),
and we can 
apply
Eq.~(\ref{eq:po2p1-lg-bd})
to obtain
$\norm{P_0}_{\infty} \leq 2^{N}$.

The polynomial $P_1$ has degree $2^{k}-2$,
$\norm{P_1}_{\infty} \leq 2^{\tau}$.
Its coefficients are the entries of the matrix $T$ and we know them within $2^{-\lambda}$.

We 
apply Cor.~\ref{cor:p02p1-mul} for
 $d = 2^{k-1}-1$, 
$\tau_0 = N$,
$\tau_1 = \tau$, and
$-\nu = -\lambda + (2(k-1) + 8)\tau  + 2(k-1)(4(k-1)+27)  + 8(k-2)N$,
substitute $k\ge 3$, and obtain the following 
approximation bound,
\begin{displaymath}
\begin{array}{ll}
  \lg{ \Norm{P_0^2 P_1 - {\widetilde {P_0^2\, P_1}}}_{\infty} }  
  & \leq -\nu + 8\tau_0 + 2\tau_1 + 14 k  + 42  \\
  & \leq -\lambda +  (2k + 8)\tau  + 8k^2-2k+104+ 8(k-1)N
  \enspace .
\end{array}
\end{displaymath}



\medskip

These inequalities imply that all our computations require a precision
bound of at most
$ \lambda'=\ell + (2k + 8)\tau  + 2k(4k+27)  + 8(k-1)N$. 
To simplify the formula, we substitute $k=\lg (n)$ 
rather than $k=\lg (n+1)$ and then rewrite $ \lambda'$ as follows,
$\ell +  (2\lg{n} + 8)\tau  + 2\lg{n}(4\lg{n}+27)  + 8(\lg{n}-1)(n + \lg{n} + n\rho + 1)$
bits, which we simplify
to the bound of $\ell + 10\tau \lg{n} + 70\lg^2{n} + 8(\rho+1)n\lg{n}$
or $\OO(\ell + (\tau + \lg{n} +n \rho)\lg{n})$ bits.

To estimate the overall complexity of approximating the first column of the inverse
matrix  $T^{-1}$, we 
notice that we perform $k$ steps 
overall, for $k =\lg (n+1)$,
but we will 
keep writing $k = \lg (n)$
to simplify the notation.

At each step we perform two multiplications of polynomials of degrees
at most $2^{k}$ with the coefficients having absolute values less than
$2^{\OO(n \tau)}$, and we use the precision of
$\ell + 10\tau \lg{n} + 70\lg^2{n} + 8(\rho+1)n\lg{n}
= \OO(\ell + (\tau + \lg{n} +n \rho)\lg{n})$ bits.
All these bounds 
together imply that 
\begin{displaymath}
\begin{array}{l}
  \sum_{k=1}^{\lg{(n+1)}} k \cdot 2 \cdot \OB( 2^{k} \cdot \lg{2^k} \cdot \mu(\ell + (\tau + \lg{n} +n \rho)\lg{n})) = \\
  = \OB( n \, \lg^{2}(n) \, \mu(\ell + (\tau + \lg{n} +n \rho)\lg{n})) 
  \enspace ,
\end{array}
\end{displaymath}
which concludes the proof.
\end{proof}

\section{A normalization of Kirrinnis' results}

The following results express the bounds of Kirrinnis
\cite{kirrinnis-joc-1998} for polynomials of arbitrary norms; for this
an appropriate scaling is applied. The complexity bounds depends on
polynomial multiplication algorithms that rely on Kronecker
substitution and not on FFT as our results that we presented earlier.
The bounds are asymptotically the same as ours, up to logarithmic factors.

The following is from \cite[Theorem~3.7 and Algorithm~5.1]{kirrinnis-joc-1998}.
\begin{lemma}[Multiplication of polynomials]
  \label{lem:k-a-mul-polys}
  Let $P_j \in \CC[x]$ of degree $n_j$ and $\normi{P_j} \leq 2^{\tau}$
  and $\wt{P}_j$ be an approximation of $P_j$ such that 
  $\normi{P_j - \wt{P}_j} \leq 2^{-\lambda}$,
  with $\lambda = \ell n(\tau+2) + n\lg{n}$.,
  where $1 \leq j \leq m$,
  $n_1 \leq \cdots \leq n_m$, and  $\sum_{n_j} = n$.
  We can compute $\prod_j\wt{P}_j$ such that 
  $\normi{ \prod_j P_j - \prod_j \wt{P}_j } \leq 2^{-\ell}$
  in $\OB( \mu(n \cdot \lg{n} \cdot (\ell + n \tau + \sum_j\lg{n_j})))$
  or
  $\sOB(n(\ell + n \tau))$.
\end{lemma}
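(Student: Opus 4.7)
The plan is to follow Kirrinnis' original argument from \cite[Theorem~3.7 and Algorithm~5.1]{kirrinnis-joc-1998}, recasting the normalization so that the input norm bound is $2^\tau$ rather than $1$. The scaling is the routine substitution $P_j(x) \mapsto 2^{-\tau} P_j(x)$, which brings us into Kirrinnis' normalized regime; we then rescale the final product by $2^{m\tau}$, and this is precisely what introduces the $n\tau$ term in both the required precision and the final complexity bound.

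In the normalized regime, I would organize the $m$ pairwise products along a Huffman tree on the degree sequence $n_1 \le \cdots \le n_m$, so that at each internal node of the tree the polynomial degrees involved roughly balance, and the total work (in arithmetic ops) along all levels telescopes to $O(n \log n)$ multiplications of coefficient-size polynomials. This is the standard unbalanced-product reorganization that Kirrinnis uses, and it is the reason the $\sum_j \log n_j$ term appears inside $\mu(\cdot)$ rather than a multiplicative $\log m$ blow-up. Each pairwise multiplication is performed by Kronecker substitution: a polynomial $R$ of degree $d$ and coefficient-magnitude bound $2^\sigma$ is encoded as a single integer of bit length $O(d(\sigma+\log d))$, so the product of two such polynomials costs $\mu$ of the total bit length, and the cost of all nodes aggregates to $\sOB(n(\ell + n\tau))$ once we fix precision $\ell$ at every level and track the $O(n\tau)$ growth of $\log\normi{\cdot}$ along any root-to-leaf path.

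For the error analysis I would proceed by induction on the Huffman tree. At an internal node producing $C = AB$ from approximations $\wt A, \wt B$, the triangle inequality gives
\[
\normi{AB - \wt A \wt B} \;\le\; \normi{A}\,\normi{B-\wt B} + \normi{B-\wt B}\,\normi{A-\wt A} + \normi{B}\,\normi{A-\wt A},
\]
and the convolution inequality $\normi{AB}\le \min(\deg A+1,\deg B+1)\,\normi{A}\normi{B}$ controls the norm growth. The inductive invariant is that at a node whose subtree has total degree $d'$, the norm is at most $2^{d'\tau + d'\log d'}$ and the error is at most $2^{-\lambda + O(d'(\tau+\log d'))}$; summing across the at most $\lceil \log m\rceil$ levels, and accounting for the root with $d' = n$, yields the stated sufficient precision $\lambda = \ell + n(\tau+2) + n\log n$ (up to the constants absorbed in the statement).

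The main obstacle is the bookkeeping in the inductive step: the Huffman-tree imbalance means that the per-node precision loss is \emph{not} the same along each level, so one cannot simply multiply a per-level loss by $\log m$. Instead I would use Kirrinnis' potential-function argument that charges each bit of precision loss to the product of the subtree's total degree with $\tau + \log n$, and check that the total potential is $O(n(\tau+\log n))$ independently of how unbalanced the $n_j$ are. Once the precision bookkeeping is in place, plugging $\mu(\cdot) = \sOB(\cdot)$ into the aggregated cost of the Kronecker-based integer multiplications at all tree nodes gives $\sOB(n(\ell + n\tau))$, completing the proof.
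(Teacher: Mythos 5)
Your proposal takes essentially the paper's route: rescale the inputs so that Kirrinnis' normalization applies, invoke his Theorem~3.7/Algorithm~5.1 (the Huffman-tree, Kronecker-substitution multiplication), and undo the scaling, which is exactly where the $n\tau$ and $n\lg n$ terms in the required precision come from. The paper's proof consists only of this scaling bookkeeping --- it scales $P_j$ by $2^{-\tau-\lg n_j+n_j}$ so that $\normo{p_j}\le 2^{n_j}$ and cites Kirrinnis as a black box --- so your additional sketch of the tree and precision analysis is not needed (and, if carried out, details such as the claimed $\lceil\lg m\rceil$ level count of a Huffman tree and the missing convolution-length factors in the infinity-norm error inequality would need care).
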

\begin{proof}
  Let $p_j = 2^{-\tau -\lg{n_j} + n_j} \, P_j$. Then 
  \[
  \normi{P_j} \leq 2^{\tau} \Rightarrow
  \normo{P_j} \leq 2^{\tau + \lg{n_j}}
  \]
  \[ \Rightarrow
  2^{-\tau - \lg{n_j}+ n_j}\normo{P_j} \leq 2^{n_j} \Rightarrow
  \normo{p_j} \leq 2^{n_j} 
  \enspace .
  \]
  Moreover,
  \[
  \normi{P_j - \wt{P}_j} \leq 2^{-\lambda} \Rightarrow
  \normo{P_j - \wt{P}_j} \leq 2^{-\lambda + \lg{n_j}} 
  \]
  \[\Rightarrow
  \normo{p_j - \wt{p}_j} \leq 2^{-\tau - \lg{n_j}+ n_j} 2^{-\lambda + \lg{n_j}}
  = 2^{-\lambda -\tau + n_j}
  \enspace .
  \]
  
  The algorithm of Kirrinnis \cite[Theorem~3.7 and Algorithm~5.1]{kirrinnis-joc-1998}
  computes and approximation of $\prod_j p_j$
  such that $\Normo{ \prod_j p_j - \prod_j \wt{p}_j } \leq 2^{-s + \tau + 3n}$, 
  which implies
  \[
  \Normo{ \prod_j 2^{-\tau -\lg{n_j} +  n_j} P_j -  \prod_j 2^{-\tau -\lg{n_j} +  n_j} \wt{P}_j } 
  \leq 2^{-\lambda + \tau + 3n}
  \]
  \[\Rightarrow
  2^{-n\tau - \sum_j\lg{n_j} + n} \Normo{ \prod_j P_j - \prod_j \wt{P}_j } \leq 2^{-\lambda +\tau + 3n} 
  \enspace ,
  \]
  and so 
  \[
  \Normo{ \prod_j P_j - \prod_j \wt{P}_j } \leq 2^{-\lambda + n(\tau+2) + n\lg{n}} 
  \enspace .
  \]
  If we want the error to be less than $2^{-\ell}$, then 
  we need to choose initial precision $\lambda \geq \ell n(\tau+2) + n\lg{n}$.
  The total cost is
  $\OB( \mu(n \cdot \lg{n} \cdot (\ell + n \tau + \sum_j\lg{n_j})))$
  or
  $\sOB(n(\ell + n \tau))$.
  
\end{proof}

The algorithm for computing the sum of rational functions
\cite[Theorem~3.8 and Algorithm~5.2]{kirrinnis-joc-1998}
relies on Lemma~\ref{lem:k-a-mul-polys} and the bounds are similar, so we do not elaborate further.



  




The following lemma relies on  \cite[Theorem~3.9 and Algorithm~5.3]{kirrinnis-joc-1998}.
\begin{lemma}[Modular representation]
  Let $F \in \CC[x]$ of degree $m$ and $\normi{F} \leq 2^{\tau_1}$.
  Let $P_j \in \CC[x]$ of degree $n_j$, for $ 1 \leq j \leq \nu$,
  such that $\sum_j n_j = n$ and $m \geq n$.
  Moreover, $2^{\rho}$ be an upper bound on the magnitude of the roots of all $P_j$.
  Let $\wt{F}$, resp. $\wt{P}_j$, be an approximation of $F$, resp. $P_j$,
  such that 
  $\normi{F - \wt{F}} \leq 2^{-\lambda}$,
  resp. $\normi{P_j - \wt{P}_j} \leq 2^{-\lambda}$.

  If $\lambda = \ell + \tau_1 + 2(n+m)\rho$ then we compute 
  approximations $\wt{F}_j$ of $F_j = F \mod P_j$,
  such that $\normi{F_j - \wt{F}_j} \leq 2^{-\ell}$
  in 
  $\OB( \mu( (m+n\lg{n})(\ell + \tau_1 + (m+n)\rho)) )$.  
\end{lemma}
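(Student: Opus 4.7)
The plan is to reduce the statement to Kirrinnis' original normalized version, namely Theorem~3.9 and Algorithm~5.3 of \cite{kirrinnis-joc-1998}, which assume that the moduli have all their roots in a bounded region around the origin. The reduction is the same scaling device used in Lemma~\ref{lem:quo-rem-bd}: the substitution $x \mapsto 2^{\rho}y$ places every root of every $P_j$ inside the closed unit disk, so Kirrinnis' normalization hypothesis is satisfied. One then applies his algorithm to the transformed data and undoes the substitution at the end, tracking how the scaling dilates norms, input approximations, output approximations, and arithmetic precision.

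Concretely, set $p_j(y) = P_j(2^{\rho}y)$, $f(y) = F(2^{\rho}y)$, and $f_j(y) = F_j(2^{\rho}y)$. The identity $F = C_j P_j + F_j$ pulls back to $f = c_j p_j + f_j$ with $c_j(y) = C_j(2^{\rho}y)$. Vieta's formulas give $\normi{p_j}\leq 2^{n_j(\rho+1)+\lg n_j}$, while $\normi{f}\leq 2^{\tau_1 + m\rho}$, and a $\lambda$-approximation of each input becomes a $(\lambda - m\rho)$-approximation of the corresponding scaled polynomial. Symmetrically, since $F_j(x) = f_j(x/2^{\rho})$ and $\deg f_j < n_j$, recovering $F_j$ within $2^{-\ell}$ requires knowing $f_j$ within $2^{-\ell - n_j\rho}$, because each coefficient of $F_j$ equals the corresponding coefficient of $f_j$ multiplied by a power of $2^{-\rho}$ of exponent at most $n_j - 1$.

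Having placed the input into the normalized regime, I would invoke Kirrinnis' modular representation algorithm on $f, p_1, \ldots, p_{\nu}$ with target precision $\ell' = \ell + n\rho$ (to absorb the backward-scaling loss) and input bit-size $\tau_1' = \OO(\tau_1 + (m+n)\rho)$, reflecting the norm of $f$ and the norms of the $p_j$ after scaling. The required input precision then translates into $\lambda = \ell + \tau_1 + 2(n+m)\rho$: one $(n+m)\rho$ term compensates the forward dilation of input norms, the other the backward contraction of output coefficients, and the $\tau_1$ term reflects the original norm of $F$ that is still present inside Kirrinnis' precision formula. The complexity bound $\OB(\mu((m + n\lg n)(\ell + \tau_1 + (m+n)\rho)))$ transports from the normalized statement term-by-term, since the scaling only shifts the arguments of $\mu$ by an additive $\OO((m+n)\rho)$.

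The main obstacle is not algorithmic but careful bookkeeping: one must check that the two slack terms $(n+m)\rho$ genuinely suffice, which requires invoking Lemma~\ref{lem:quo-rem-bd} at each node of Kirrinnis' subproduct-tree recursion (since there the divisor is a product of several $p_j$'s, whose roots still lie in the unit disk but whose degree can be as large as $n$) and then telescoping the resulting per-level amplification bounds. In particular, one has to verify that the amplification at every recursion level is uniformly bounded in terms of $2^{\rho}$ applied to the \emph{common} degree budget $m + n$, rather than accumulating across levels, so that the stated single factor $2(n+m)\rho$ in $\lambda$ is indeed enough. This is the only place where the proof is genuinely sensitive to the constants and deserves to be written out in full.
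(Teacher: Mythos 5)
Your proposal takes essentially the same route as the paper: scale $x \mapsto 2^{\rho}x$ so that all roots of the moduli lie in the unit disk, normalize, invoke Kirrinnis' Theorem~3.9/Algorithm~5.3 as a black box on the scaled data, and track how the scaling dilates norms and input/output precision, arriving at $\lambda = \ell + \tau_1 + 2(n+m)\rho$ and the stated complexity. The only differences are cosmetic: the paper additionally rescales $F$ to $f = 2^{-\tau_1 - m\rho - \lg m}\,F(2^{\rho}x)$ so that its $1$-norm is at most $1$ (hence the subproduct-tree amplification you propose to re-derive via Lemma~\ref{lem:quo-rem-bd} is already contained in Kirrinnis' normalized theorem and need not be redone), whereas you carry the $\tau_1 + (m+n)\rho$ contribution inside the precision parameter, and the back-substitution loss you attribute to $x \mapsto x/2^{\rho}$ in fact stems from undoing this scalar normalization.
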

\begin{proof}
  The polynomials $P_j(2^{\rho}\, x)$ have all their roots inside the unit disc.
We make them monic and then denote them $p_j$.
  It holds:
  \[
  \normi{P_j - \wt{P}_j} \leq 2^{-\lambda} \Rightarrow
  \normo{P_j(2^{\rho}\,x) - \wt{P}_j(2^{\rho}\,x)} \leq 2^{-\lambda + n_j\rho + \lg{n_j}} 
  \]
  \[
  \Rightarrow
  \normo{p_j - \wt{p}_j} \leq 2^{-\lambda + n_j\rho + \lg{n_j}+1}
  \enspace .
  \]

  We apply the same transformation to $F$.

  Let $f = 2^{-\tau_1 -m\rho - \lg{m}} \, F(2^{\rho}\, x)$, then 
  \[
  \normi{F} \leq 2^{\tau_1} \Rightarrow
  \normo{F} \leq 2^{\tau_1 + \lg{m}} \Rightarrow
  \normo{F(2^{\rho}\,x)} \leq 2^{\tau_1 + m\rho + \lg{m}}
  \]
  \[\Rightarrow
  2^{-\tau_1 -m \rho + \lg{m}}\normo{F(2^{\rho}\,x)} \leq 1 \Rightarrow
  \normo{f} \leq 1 
  \enspace .
  \]

  Now we can use \cite[Theorem~3.9]{kirrinnis-joc-1998}
  choosing $\lambda = \ell + \tau_1 + 2(n+m)\rho$
  to guarantee 
  $\normo{f \mod p_j - \wt{f}_j} = \normo{f_j - \wt{f}_j} \leq 2^{-\ell}$.
  The complexity of the procedure is 
  $\OB( \mu( (m+n\lg{n})(\ell + \tau_1 + (m+n)\rho)) )$.
\end{proof}

\begin{remark}
In the case where $m=n$
the 
latter bound
  becomes $\sOB(n(\ell + \tau_1 + n\rho))$.
\end{remark}

{\scriptsize
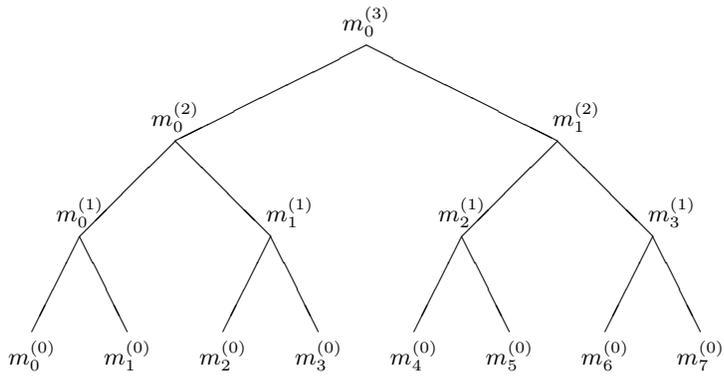
\begin{figure}[t]
\caption{Fan-in process, $m_j^{(h+1)}=m_{2j}^{(h)}m_{2j+1}^{(h)}$}
\label{fig:fan-in}
\begin{center}
\setlength{\unitlength}{0.5in}
\begin{picture}(8,4)
\put(4,3.5){\line(-2,-1){2}}
\put(4,3.5){\line(2,-1){2}}
\put(2,2.5){\line(-1,-1){1}}
\put(2,2.5){\line(1,-1){1}}
\put(6,2.5){\line(-1,-1){1}}
\put(6,2.5){\line(1,-1){1}}
\put(1,1.5){\line(-1,-2){0.5}}
\put(1,1.5){\line(1,-2){0.5}}
\put(3,1.5){\line(-1,-2){0.5}}
\put(3,1.5){\line(1,-2){0.5}}
\put(5,1.5){\line(-1,-2){0.5}}
\put(5,1.5){\line(1,-2){0.5}}
\put(7,1.5){\line(-1,-2){0.5}}
\put(7,1.5){\line(1,-2){0.5}}
\put(4,3.5){\makebox(0,0.5){$m_0^{(3)}$}}
\put(2,2.5){\makebox(0,0.5){$m_0^{(2)}$}}
\put(6.2,2.5){\makebox(0,0.5){$m_1^{(2)}$}}
\put(1,1.5){\makebox(0,0.5){$m_0^{(1)}$}}
\put(3.2,1.5){\makebox(0,0.5){$m_1^{(1)}$}}
\put(5,1.5){\makebox(0,0.5){$m_2^{(1)}$}}
\put(7.2,1.5){\makebox(0,0.5){$m_3^{(1)}$}}
\put(0,0){\makebox(1,0.5){$m_0^{(0)}$}}
\put(1,0){\makebox(1,0.5){$m_1^{(0)}$}}
\put(2,0){\makebox(1,0.5){$m_2^{(0)}$}}
\put(3,0){\makebox(1,0.5){$m_3^{(0)}$}}
\put(4,0){\makebox(1,0.5){$m_4^{(0)}$}}
\put(5,0){\makebox(1,0.5){$m_5^{(0)}$}}
\put(6,0){\makebox(1,0.5){$m_6^{(0)}$}}
\put(7,0){\makebox(1,0.5){$m_7^{(0)}$}}
\end{picture}
\end{center}
\end{figure}

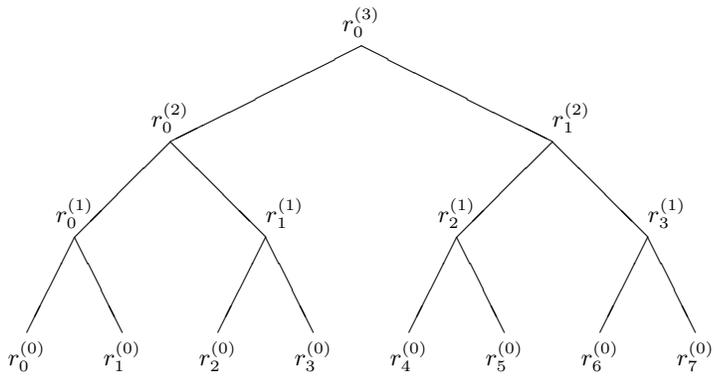
\begin{figure}[t]
\caption{Fan-out process, $r_j^{(h)}=r_{\lfloor j/2\rfloor}^{(h+1)}\bmod m_j^{(h)}=v(x)\bmod m_j^{(h)}$}
\label{fig:fan-out}
\begin{center}
\setlength{\unitlength}{0.5in}
\begin{picture}(8,4)
\put(4,3.5){\line(-2,-1){2}}
\put(4,3.5){\line(2,-1){2}}
\put(2,2.5){\line(-1,-1){1}}
\put(2,2.5){\line(1,-1){1}}
\put(6,2.5){\line(-1,-1){1}}
\put(6,2.5){\line(1,-1){1}}
\put(1,1.5){\line(-1,-2){0.5}}
\put(1,1.5){\line(1,-2){0.5}}
\put(3,1.5){\line(-1,-2){0.5}}
\put(3,1.5){\line(1,-2){0.5}}
\put(5,1.5){\line(-1,-2){0.5}}
\put(5,1.5){\line(1,-2){0.5}}
\put(7,1.5){\line(-1,-2){0.5}}
\put(7,1.5){\line(1,-2){0.5}}
\put(4,3.5){\makebox(0,0.5){$r_0^{(3)}$}}
\put(2,2.5){\makebox(0,0.5){$r_0^{(2)}$}}
\put(6.2,2.5){\makebox(0,0.5){$r_1^{(2)}$}}
\put(1,1.5){\makebox(0,0.5){$r_0^{(1)}$}}
\put(3.2,1.5){\makebox(0,0.5){$r_1^{(1)}$}}
\put(5,1.5){\makebox(0,0.5){$r_2^{(1)}$}}
\put(7.2,1.5){\makebox(0,0.5){$r_3^{(1)}$}}
\put(0,0){\makebox(1,0.5){$r_0^{(0)}$}}
\put(1,0){\makebox(1,0.5){$r_1^{(0)}$}}
\put(2,0){\makebox(1,0.5){$r_2^{(0)}$}}
\put(3,0){\makebox(1,0.5){$r_3^{(0)}$}}
\put(4,0){\makebox(1,0.5){$r_4^{(0)}$}}
\put(5,0){\makebox(1,0.5){$r_5^{(0)}$}}
\put(6,0){\makebox(1,0.5){$r_6^{(0)}$}}
\put(7,0){\makebox(1,0.5){$r_7^{(0)}$}}
\end{picture}
\end{center}
\end{figure}
}

\end{document}